\documentclass[11pt]{article}

\setlength{\oddsidemargin}{0.25in}
\setlength{\evensidemargin}{\oddsidemargin}
\setlength{\textwidth}{6in}
\setlength{\textheight}{8in}
\setlength{\topmargin}{-0.0in}

\usepackage[cmex10]{amsmath}
\usepackage{amssymb,amsfonts,textcomp,amsthm}
\usepackage{array}
\usepackage{delarray}
\usepackage{latexsym}
\usepackage[pdftex]{color, graphicx}
\usepackage{subfigure}
\usepackage{mathptmx}
\usepackage{times}
\usepackage{paralist}
\usepackage{tabularx}
\usepackage{multirow}
\usepackage{graphicx}
\usepackage{url}
\usepackage{rotating}
\usepackage{cite}
\usepackage[ruled, vlined, noend, linesnumbered, commentsnumbered]{algorithm2e}
\usepackage[font=footnotesize,labelfont=bf]{caption}

\newcommand{\ltot}{{{l_{\mathit{tot}}}}}
\newcommand{\impr}{{{\mathit{impr}}}}
\newcommand{\lmax}{{{l_{\mathit{max}}}}}
\newcommand{\lmaxi}{{{l_{\mathit{max}, i}}}}

\interdisplaylinepenalty=2500
\hyphenation{Da-ta-Ca-ta-log com-pu-ta-tio-na-lly}

\newtheorem{theorem}{Theorem}

\newtheorem{proposition}[theorem]{Proposition}
\newtheorem{corollary}[theorem]{Corollary}

\newtheorem{definition}{Definition}
\newtheorem{lemma}[theorem]{Lemma}

\newenvironment{proof-sketch}{\noindent \textit{Sketch of Proof:}}{$\Box$}

\begin{document}

\title{We Are Impatient: Algorithms for Geographically Distributed Load Balancing with (Almost) Arbitrary Load Functions}

\author{Piotr Skowron \\
   Institute of Informatics \\
   University of Warsaw \\
   Email: p.skowron@mimuw.edu.pl \\
   \and
   Krzysztof Rzadca \\
   Institute of Informatics \\
   University of Warsaw \\
   Email: krzadca@mimuw.edu.pl}

\date{}

\maketitle

\begin{abstract}

% problem: key topic or problem of the paper
%We consider the problem of balancing load in a distributed system. 
In geographically-distributed systems, 
%In contrast to the common approach to load balancing in this paper we assume that the 
communication latencies are non-negligible.
The perceived processing time of a request is thus composed of the time needed to route the request to the server and the true processing time.
 % and thus affect the perceived processing time.
Once a request reaches a target server, the processing time depends on the total load of that server; this dependency is described by a load function.

% method: state your main approach to solving the problem
We consider a broad class of load functions; we just require that they are convex and two times differentiable. In particular our model can be applied to heterogeneous systems in which every server has a different load function. This approach allows us not only to generalize results for queuing theory and for batches of requests, but also to use empirically-derived load functions, measured in a system under stress-testing.

% results
The optimal assignment of requests to servers is communication-balanced, i.e. for any pair of non perfectly-balanced servers, the reduction of processing time resulting from moving a single request from the overloaded to underloaded server is smaller than the additional communication latency. 

We present a centralized and a decentralized algorithm for optimal load balancing.
% Both algorithms can be applied for almost any load function. 
% For every load function 
We prove bounds on the algorithms' convergence. To the best of our knowledge these bounds were not known even for the special cases studied previously (queuing theory and batches of requests).
% Apart from standard parameters these bounds take into account the bounds on the derivatives of the load functions. 
Both algorithms are any-time algorithms. 
In the decentralized algorithm, each server balances the load with a randomly chosen peer. Such algorithm is very robust to failures. We prove that the decentralized algorithm performs locally-optimal steps. 
% For both algorithms we are able to estimate the error during their execution.

% conclusions
Our work extends the currently known results by considering a broad class of load functions and by establishing theoretical bounds on the algorithms' convergence. These results are especially applicable for servers whose characteristics under load cannot be described by a standard mathematical models.
% ; our methods load in which we need to derive load functions experimentally.

%% problem: key topic or problem of the paper
% balancing the load in a distributed systems; communication latency between servers is non-negligible
% the time needed to process a request depends on the total load of the server
%% method: state your main approach to solving the problem
% very broad class of processing time function; just assume they are differentiable that allows us to generalize previous work on queueing theory, but also use empirical load functions (measured in a system under stress-testing)
%% results
% an optimal solution is comm-balanced (balanced taking into account communication latency); i.e. for any pair of non perfectly-balanced servers, the gain we'd have by moving a single request from the overloaded to underloaded is smaller than the communication latency
% a centralized load-balancing algorithm: add properties
% a decentralized optimization algorithm: each server balances the load with a peer chosen randomly. We prove that the algorithm converges; and compute the complexity
% conclusions
% in some systems, response time cannot be modelled well only using queueing theory; our theory can be applied in just these settings
\end{abstract}

\newpage

\section{Introduction}\label{sec::introduction}

% subject of the paper
We are impatient. 
An ``immediate'' reaction must take less than 100~ms~\cite{card1983psychology};
a Google user is less willing to continue searching if the result page is slowed by just 100-400~ms~\cite{google2009:speed-matters};
and a web page loading faster by just 250~ms attracts more users than the competitor's~\cite{lohr2012impatient}.
Few of us are thus willing to accept the 100-200ms Europe-US round-trip time; even fewer, 300-400ms Europe-Asia.
Internet companies targeting a global audience must thus serve it locally. 
Google builds data centers all over the world; a company that doesn't have Google scale uses a generic content delivery network (CDN)~\cite{pallis2006content, freedman2010experiences}, such as Akamai~\cite{Nygren:2010:ANP:1842733.1842736, akamai2, draftingBehindAkamai}; or spreads its content on multiple Amazon's Web Service regions.

A geographically-distributed system is an abstract model of world-spanning networks. It is a network of interconnected servers processing requests. The system considers both communication (request routing) and computation (request handling). E.g., apart from the communication latencies, a CDN handling complex content can no longer ignore the load imposed by requests on servers. As another example consider computational clouds, which are often distributed across multiple physical locations, thus must consider the network latency in addition to servers' processing times.

% define problem we're addressing && why this problem is important
Normally, each server handles only the requests issued by local users. For instance, a CDN node responds to queries incoming from the sub-network it is directly connected to (e.g., DNS redirections in Akamai~\cite{draftingBehindAkamai, akamai2, akamaiPatent}). 
However, load varies considerably. 
Typically, a service is more popular during the day than during the night (the daily usage cycle).
Load also spikes during historic events, ranging from football finals to natural disasters.
If a local server is overloaded, some requests might be handled faster on a remote, non-overloaded server.
The users will not notice the redirection if the remote server is ``close'' (the communication latency is small);
but if the remote server is on another continent, the round-trip time may dominate the response time.

% scope: list assumptions
In this paper we address the problem of balancing servers' load taking into account the communication latency. 
We model the response time of a single server by a \emph{load function}, i.e., a function that for a given load on a server (the number of requests handled by a server) returns the average processing time of requests. In particular, we continue the work of Liu~et~al.~\cite{Liu:2011:GGL:1993744.1993767}, who showed the convergence of the algorithms for the particular load function that describes requests' handling time in the queuing model~\cite{Gross:2008:FQT:1972549}.
We use a broad class of functions that are continuous, convex and twice-differentiable (Section~\ref{sec:mathematical-model}), 
which allows us to model not only queuing theory-based systems,
but also a particular application with the response time measured empirically in a stress-test.

We assume that the servers are connected by links with high bandwidth. 
Although some models (e.g., routing games \cite{routingGames}) consider limited bandwidth, our aim is to model servers connected by a dense network (such as the Internet), in which there are multiple cost-comparable routing paths between the servers.
The communication time is thus dominated by the latency: 
a request is sent over a long distance with a finite speed. We assume that the latencies are known, as monitoring pairwise latencies is a well-studied problem~\cite{Szymaniak04scalablecooperative, Chan-TinH11}; if the latencies change due to, e.g., network problems, our optimization algorithms can be run again.
On each link, the latency is constant, i.e., it does not vary with the number of sent requests~\cite{Skowron:2013:NDL:2510648.2510769}. This assumption is consistent with the previous works on geographically load balancing~\cite{Liu:2011:GGL:1993744.1993767, Skowron:2013:NDL:2510648.2510769, Cardellini00geographicload, Penmatsa:2006:CLB:1898953.1899089, Grosu:2008:CLB:1455689.1455695, Aote:2009:GMD:1523103.1523153, Grosu:2002:AMD:792762.793282}.

Individual requests are small; rather than an hour-long batch job, a request models, e.g., a single web page hit. Such assumption is often used~\cite{Penmatsa:2006:CLB:1898953.1899089, Liu:2011:GGL:1993744.1993767, Skowron:2013:NDL:2510648.2510769, Grosu:2008:CLB:1455689.1455695, gallet09divisibleload, beaumont2005scheduling, veeravalli2002efficient, drozdowski2008scheduling}. In particular the continuous allocation of requests to servers in our model is analogous to the divisible load model with constant-cost communication (a special case of the affine cost model~\cite{beaumont2005scheduling}) and multiple sources (multiple loads to be handled,~\cite{veeravalli2002efficient, drozdowski2008scheduling}).

%related work
The problem of load balancing in geographically distributed systems has been already addressed. Liu~et~al.~\cite{Liu:2011:GGL:1993744.1993767} shows the convergence of the algorithms for a particular load function from the queuing theory. 
Cardellini~et~al.~\cite{Cardellini00geographicload} analyzes the underlying system and network model. Colajanni~et~al.~\cite{Colajanni98dynamicload} presents experimental evaluation of a round-robin-based algorithm for a similar problem.
Minimizing the cost of energy due to the load balancing in geographically distributed systems is a similar problem considered in the literature~\cite{Liu:2013:DCD:2494232.2465740, Liu:2011:GLB:2160803.2160862, Lin:2012:OAG:2410145.2410785}. A different load function is used by Skowron~and~Rzadca~\cite{Skowron:2013:NDL:2510648.2510769} to model the flow time of batch requests.

Some papers analyze the game-theoretical aspects of load balancing in geographically distributed systems~\cite{Penmatsa:2006:CLB:1898953.1899089, Grosu:2008:CLB:1455689.1455695, Aote:2009:GMD:1523103.1523153, Grosu:2002:AMD:792762.793282, Adolphs:2012:DSL:2332432.2332460}. These works use a similar model, but focus on capturing the economic relation between the participating entities.

The majority of the literature ignores the communication costs.
Our distributed algorithm is the extension of the diffusive load balancing~\cite{conf/ipps/AdolphsB12, Ackermann:2009:DAQ:1583991.1584046, Berenbrink:2011:DSL:2133036.2133152}; it incorporates communication latencies into the classical diffusive load balancing algorithms.

Additionally to the problem of effective load balancing we can optimize the choice of the locations for the servers~\cite{staticReplicaPlacement1, staticReplicaPlacement2, journals/cj/JiaLHD01}. The generic formulation of the placement problem, facility location problem~\cite{Chudak:2005:IAA:1047770.1047776} and k-median problem~\cite{Jain:2001:AAM:375827.375845} have been extensively studied in the literature.

%We also assume that the processing time of all requests assigned to a particular server is the same (and depends on the load function). We do not schedule individual requests, but rather address the steady-state problem. 

% approach to the problem --- main results with links
\medskip
\textbf{The contributions of this paper are the following.}
\begin{inparaenum}[(i)] 
\item We construct a centralized load-balancing algorithm that optimizes the response time up to a given (arbitrary small) distance to the optimal solution (Section~\ref{sec:approximate-centralized}). 
The algorithm is polynomial in the total load of the system and in the upper bounds of the derivatives of the load function.
\item We show a decentralized load-balancing algorithm (Section~\ref{sec:distr-algor}) in which pairs of servers balance their loads. We prove that the algorithm is optimal (there is no better algorithm that uses only a single pair of servers at each step). We also bound the number of pair-wise exchanges required for convergence.
\item We do not use a particular load function; instead, we only require the load function to be continuous and twice-differentiable (Section~\ref{sec:mathematical-model}); 
thus we are able to model empirical response times of a particular application on a particular machine,
but also to generalize (Section~\ref{sec:specialCases}) Liu~et~al.~\cite{Liu:2011:GGL:1993744.1993767}'s results on queuing model and the results on flow time of batch requests~\cite{Skowron:2013:NDL:2510648.2510769}.
\end{inparaenum}

%significance

%The considered problem is important. A system using our algorithms will have smaller response times and thus better user experience.
%Additionally, when the operator is planning to upgrade the infrastructure, our results can help in deciding where to install the additional servers.

Our algorithms are suitable for real applications. Both are any-time algorithms which means that we can stop them at any time and get a complete, yet suboptimal, solution.
%Alternatively, if necessary, we can compute the solution longer, getting a better quality, if necessary. 
Furthermore, the distributed algorithm is particularly suitable for distributed systems. It performs only pairwise optimizations (only two servers need to be available to perform a single optimization phase), which means that it is highly resilient to failures. It is also very simple and does not require additional complex protocols.

In this paper we present the theoretical bounds, but we believe that the algorithms will have even better convergence in practice. The experiments have already confirmed this intuition for the case of distributed algorithm used for a batch model~\cite{Skowron:2013:NDL:2510648.2510769}. The experimental evaluation for other load functions is the subject of our future work.

\section{Preliminaries}

In this section we first describe the mathematical model, and next we argue that our model is highly applicable; in particular it generalizes the two problems considered in the literature.

\subsection{Model}\label{sec:mathematical-model}

\noindent
\textbf{Servers, requests, relay fractions, current loads.}\quad
The system consists of a set of $m$ \emph{servers} (processors) connected to the Internet. The $i$-th server has its \emph{local (own) load} of size $n_{i}$ 
%(the size of the load can be thought of as processing requirement)
consisting of small \emph{requests}. The local load can be the current number of requests, the average number of requests, or the rate of incoming requests in the queuing model.

The server can relay a part of its load to the other servers. We use a fractional model in which a \emph{relay fraction} $\rho_{ij}$ denotes the fraction of the $i$-th server's load that is sent (relayed) to the $j$-th server ($\forall_{i,j} \: \rho_{ij} \geq 0$ and $\forall _{i} \: \sum_{j = 1}^{j = m}\rho_{ij} = 1$). Consequently, $\rho_{ii}$ is the part of the $i$-th load that is kept on the $i$-th server. We consider two models. In the \emph{single-hop model} the request can be sent over the network only once. In the \emph{multiple-hop} model the requests can be routed between servers multiple times\footnote{We point the analogy between the multiple-hop model and the Markov chain with the servers corresponding to states and relay fractions $\rho_{ij}$ corresponding to the probabilities of changing states.}. Let $r_{ij}$ denote the size of the load that is sent from the server $i$ to the server $j$. In the single-hop model the requests transferred from $i$ to $j$ come only from the local load of the server $i$, thus:
\begin{align}
r_{ij} = \rho_{ij}n_i \textrm{.}
\end{align}
In the multiple-hop model the requests come both from the local load of the server $i$ and from the loads of other servers that relay their requests to $i$, thus $r_{ij}$ is a solution of:
\begin{align}
r_{ij} = \rho_{ij}\left(n_i + \sum_{k \neq i}r_{ki}\right) \textrm{.}
\end{align}
The \emph{(current) load} of the server $i$ is the size of the load sent to $i$ by all other servers, including $i$ itself:
%\begin{align}
$l_i = \sum_{j = 1}^{m}r_{ji} \textrm{.}$
%\end{align}
\medskip

\noindent
\textbf{Load functions.}\quad
Let $f_i$ be a \emph{load function} describing the average request's processing time on a server $i$ as a function of $i$'s load $l_i$ (e.g.: if there are $l_i=10$ requests and $f_i(10)=7$, then on average it takes $7$ time units to process each request). We assume $f_i$ is known from a model or experimental evaluation; but each server can have a different characteristics $f_i$ (heterogeneous servers).
The total processing time of the requests on a server $i$ is equal to $h_i(l_i) = l_{i} f_{i}(l_i)$ (e.g., it takes $70$ time units to process all requests). 
In most of our results we use $f_i$ instead of $h_i$ to be consistent with~\cite{Liu:2011:GGL:1993744.1993767}.

%We recall that $l_i$ is the current load on the $i$-th server

Instead of using a certain load function, we derive all our results for a broad class of load functions (see Section~\ref{sec:specialCases} on how to map existing results to our model). 
Let $\lmaxi$ be the load that can be effectively handled on a server $i$ (beyond $\lmaxi$ the server fails due to, e.g., trashing). 
Let $\lmax = \max_{i} \lmaxi$. 
Let $\ltot = \sum_i n_i$ be the total load in the system. 
We assume that the total load can be effectively handled, $\sum_i \lmaxi \geq \ltot$ (otherwise, the system is clearly overloaded). We assume that the values $\lmaxi$ are chosen so that $f_i(\lmaxi)$ are equal to each other (equal to the maximal allowed processing time of the request). 

We assume that the load function $f_{i}$ is bounded on the interval $[0, \lmaxi]$ (If $l > \lmaxi$ then we follow the convention that $f_i(l) = \infty$). We assume $f_i$ is non-decreasing as when the load increases, requests are not processed faster.
We also assume that $f_i$ is convex and twice-differentiable on the interval $[0; \lmaxi]$ (functions that are not twice-differentiable can be well approximated by twice-differentiable functions). 
We assume that the first derivatives $f'_i$ of all $f_i$ are upper bounded by $U_1$ ($U_1 = \max_{i, l} f'_i(l)$), 
and that the second derivatives $f''_i$ are upper bounded by $U_2$ ($U_2 = \max_{i, l} f''_i(l)$).
These assumptions are technical---every function that is defined on the closed interval can be upper-bounded by a constant (however the complexity of our algorithms depends on these constants).

%These assumptions correspond to upper bounds on the rate of increase of the processing time; as the load function is defined only for a bounded load interval $[0, \lmaxi]$ and load $\lmaxi$ can be effectively handled, it does not make sense to consider unbounded derivatives.

% We also assume that $f_{i}(l)$ is lower bounded by $L > 0$ and that $f'_i(0) > \epsilon$ (If this is not the case we can consider a function $g_i(\ell) = f_{i}(\ell) + \epsilon \ell + \epsilon$ that arbitrarily well approximates $f_i$). 

%Most of the assumptions we use are technical. The class of functions that are two-times differentiable is wide; also, other functions can be well approximated by the differentiable functions.  Every function that is defined on the closed interval can be upper-bounded by a constant (however the complexity of our algorithms depends on that constant). 
%The class of functions that we consider generalizes the functions already considered in the literature, which we describe in Section~\ref{sec:specialCases}.

\medskip

\noindent
\textbf{Communication delays.}\quad
If the request is sent over the network, the observed handling time is increased by the communication latency on the link.
We denote the communication latency between $i$-th and $j$-th server as $c_{ij}$ (with $c_{ii}=0$). 
We assume that the requests are small, and so the communication delay of a single request does not depend on the amount of exchanged load (the same assumption was made in the previous works~\cite{Penmatsa:2006:CLB:1898953.1899089, Liu:2011:GGL:1993744.1993767, Skowron:2013:NDL:2510648.2510769, Grosu:2008:CLB:1455689.1455695, gallet09divisibleload, beaumont2005scheduling, veeravalli2002efficient, drozdowski2008scheduling} and it is confirmed by the experiments conducted on PlanetLab~\cite{Skowron:2013:NDL:2510648.2510769}). Thus, $c_{ij}$ is just a constant instead of a function of the network load.

We assume \emph{efficient $\epsilon$-load processing}: for sufficiently small load $\epsilon \to 0$ the processing time is lower than the communication latency, 
so it is not profitable to send the requests over the network. 
Thus, for any two servers $i$ and $j$ we have:
\begin{equation}
h_i(\epsilon) < \epsilon c_{ij} + h_j(\epsilon) \text{.}
\end{equation}
We use an equivalent formulation of the above assumption (as $h_i(0) = h_j(0) = 0$):
\begin{equation}
\frac{h_i(\epsilon) - h_i(0)}{\epsilon} < c_{ij} + \frac{h_j(\epsilon) - h_j(0)}{\epsilon} \text{.}
\end{equation}
Since the above must hold for every sufficiently small $\epsilon \to 0$, we get:
\begin{equation}
h_i'(0) < c_{ij} + h_j'(0) \Leftrightarrow f_i(0) < c_{ij} + f_j(0) \text{.}
\end{equation}

\medskip

\noindent
\textbf{Problem formulation: the total processing time.}\quad
We consider a system in which all requests have the same importance. Thus, 
the optimization goal is to minimize the total processing time of all requests $\sum C_i$, considering both the communication latencies and the requests' handling times on all servers, i.e., 
\begin{equation}
\sum{C_{i}} = \sum_{i = 1}^{m}l_{i}f_{i}(l_i) + \sum_{i = 1}^{m}\sum_{j = 1}^{m}c_{ij}r_{ij}\textrm{.}
\end{equation}
\medskip

We formalize our problem in the following definition:
\begin{definition}[Load balancing]
Given $m$ servers with initial loads $\{n_{i}, 0 \leq i \leq m\}$, load functions $\{f_i\}$ and  communication delays $\{c_{ij}: 0 \leq i,j \leq m\}$ find $\rho$, a vector of fractions, that minimizes the total processing time of the requests, $\sum{C_{i}}$.
\end{definition}

We denote the optimal relay fractions by $\rho^{*}$ and the load of the server $i$ in the optimal solution as $l^{*}_i$.

\subsection{Motivation}\label{sec:specialCases}

Since the assumptions about the load functions are moderate, our analysis is applicable to many systems. In order to apply our solutions one only needs to find the load functions $f_i$. In particular, our model generalizes the following previous models.

\medskip
\noindent
\textbf{Queuing model.}\quad
Our results generalize the results of Liu~et~al.~\cite{Liu:2011:GGL:1993744.1993767} for the queuing model.
In the queuing model, the initial load $n_i$ corresponds to the rate of local requests at the $i$-th server. Every server $i$ has a processing rate $\mu_i$. According to the queuing theory the dependency between the load $l$ (which is the effective rate of incoming requests) and the service time of the requests is described by $f_i(l) = \frac{1}{\mu_i - l}$~\cite{Gross:2008:FQT:1972549}.
% This function is lower bounded by $L = \min_i \frac{1}{\mu_i}$. 
Its derivative, $f'_i(l) = \frac{1}{(\mu_i - l)^2}$ is upper bounded by $U_1 = \max_i \frac{1}{(\mu_i - \lmaxi)^2}$, and its second derivative $f''_i(l) = \frac{2}{(l - \mu_i)^3}$ is upper bounded by $U_2 = \max_i \frac{2}{(\lmaxi - \mu_i)^3}$.

\medskip
\noindent
\textbf{Batch model.}\quad
Skowron~and~Rzadca~\cite{Skowron:2013:NDL:2510648.2510769} consider a model inspired by batch processing in high-performance computing. 
The goal is to minimize the flow time of jobs in a single batch.
In this case the function $f_i$ linearly depends on load $f_i(l) = \frac{l}{2s_i}$ (where $s_i$ is the speed of the $i$-th server). 
% This function is lower bounded by $L = \epsilon$. 
Its derivative is constant, and thus upper bounded by $\frac{1}{2s_i}$. The second derivative is equal to 0.

\section{Characterization of the problem}
In this section, we show various results that characterize the solutions in both the single-hop and the multiple-hop models. We will use these results in performance proofs in the next sections.

The relation between the single-hop model and the multiple-hop model is given by the proposition followed by the following lemma.

\begin{lemma}\label{lemma:recevAndSend}
If communication delays satisfy the triangle inequality (i.e., for every $i, j$, and $k$ we have $c_{ij} < c_{ik} + c_{kj}$), then in the optimal solution there is no server $i$ that both sends and a receives the load, i.e. there is no server $i$ such that $\exists_{j \neq i,k \neq i}\left( (\rho_{ij} > 0) \wedge (\rho_{ki} > 0) \right)$
\end{lemma}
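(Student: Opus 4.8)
The plan is to argue by contradiction using an exchange argument. Suppose that in the optimal solution there exists a server $i$ that both sends a positive fraction of load to some server $j$ (so $\rho_{ij} > 0$, meaning $r_{ij} > 0$) and receives a positive amount of load $r_{ki} > 0$ from some server $k$, where $j \neq i$ and $k \neq i$. I want to construct a modified relay assignment that strictly decreases the total processing time $\sum C_i$, contradicting optimality.

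First I would quantify the effect on the objective of diverting a small slice of load $\delta > 0$ currently routed through $i$. The idea is that instead of having $k$ send load to $i$ (incurring latency $c_{ki}$) and then $i$ forwarding comparable load to $j$ (incurring $c_{ij}$), we reroute: send that slice directly from $k$ to $j$, incurring $c_{kj}$. By the strict triangle inequality $c_{kj} < c_{ki} + c_{ij}$, the direct route saves at least $(c_{ki} + c_{ij} - c_{kj})\delta > 0$ in communication cost per unit of diverted load. I would set up the perturbation so that the load $l_i$ on server $i$ \emph{decreases} by $\delta$ (we remove $\delta$ of incoming load from $k$), while the load $l_j$ on server $j$ \emph{increases} by $\delta$ (we route $\delta$ directly from $k$ to $j$), and the loads on all other servers, including $k$, are unchanged in aggregate. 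Because we are simply redirecting the same requests to their eventual destination $j$ without passing through $i$, the net incoming load at $j$ and the net outgoing load at $k$ are preserved.

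Next I would bound the change in the computational part $\sum_i l_i f_i(l_i) = \sum_i h_i(l_i)$. Decreasing $l_i$ by $\delta$ changes $h_i$ by $-h_i'(l_i)\delta + o(\delta)$, and increasing $l_j$ by $\delta$ changes $h_j$ by $+h_j'(l_j)\delta + o(\delta)$, so the first-order change in computation cost is $\big(h_j'(l_j) - h_i'(l_i)\big)\delta + o(\delta)$, which is bounded in magnitude by $2\lmax U_1$ times $\delta$ up to lower order (using $h'(l) = f(l) + l f'(l)$ and the bounds $f_i \le f_i(\lmax)$, $f_i' \le U_1$, $l \le \lmax$). The crucial point is that the communication saving is \emph{linear} in $\delta$ with a \emph{strictly positive} coefficient $(c_{ki} + c_{ij} - c_{kj})$, whereas the computation change is first-order in $\delta$ and could go either way. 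I would therefore choose the redirection direction carefully: since server $i$ both sends and receives in the optimal solution, I can pick which unit to divert so that either the computation change also helps, or, more robustly, make $\delta$ small enough that the fixed positive communication gain dominates any adverse first-order computation term only if that term were itself $o(1)$ — which it is not in general.

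The main obstacle is precisely this: the first-order computation term $\big(h_j'(l_j) - h_i'(l_i)\big)\delta$ does not vanish faster than the communication saving, so a naive ``take $\delta$ small'' argument does not immediately close. The clean way around it is to combine the redirection with a compensating rebalancing that keeps $l_i$ and $l_j$ fixed, so that only the communication term changes. Concretely, I would perform the direct $k \to j$ rerouting of a slice $\delta$ (removing it from the $k \to i$ and $i \to j$ paths), and simultaneously adjust the relay fractions so that server $i$ keeps $\delta$ more of its own local load and server $j$ forwards $\delta$ more elsewhere, chosen so that the load vector $(l_1,\dots,l_m)$ is unchanged. Then $\sum_i h_i(l_i)$ is exactly invariant, and the entire change in $\sum C_i$ reduces to the communication term, which is strictly negative by the triangle inequality. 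Verifying that such a load-preserving adjustment is always feasible — that there is enough ``slack'' at $i$ (it was receiving, so it has incoming load to replace) and at $j$ (it was a genuine destination) to absorb the shift while keeping all relay fractions in $[0,1]$ and all flow-conservation constraints satisfied, in both the single-hop and multiple-hop formulations — is the delicate bookkeeping step, and I expect that to be where the real work lies.
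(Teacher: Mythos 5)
Your proposal is correct and, once you reach the ``compensating rebalancing'' step, it is essentially the paper's own proof: reroute $\delta=\min(r_{ki},r_{ij})$ units so that $k$ sends them directly to $j$ while $i$ retains $\delta$ of its own load, leaving every $l_i$ unchanged so that only the communication term moves, and it strictly decreases by the triangle inequality. The only slip is the clause about $j$ forwarding $\delta$ more elsewhere --- that adjustment is unnecessary (and would itself perturb other loads), since removing $\delta$ from the $i\to j$ flow already cancels the $\delta$ that $j$ now receives directly from $k$; with that removed, the feasibility bookkeeping you worry about is immediate because the modification only decreases the two positive fractions $\rho_{ki},\rho_{ij}$ and increases $\rho_{kj},\rho_{ii}$.
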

\begin{proof}
For the sake of contradiction let us assume that there exist servers $i, j$ and $k$, such that $\rho_{ij} > 0$ and $\rho_{ki} > 0$. Then, if we modify the relay fractions: $\rho_{ij} := \rho_{ij} - \min(\rho_{ij}, \rho_{ki})$, $\rho_{jk} := \rho_{jk} - \min(\rho_{ij}, \rho_{ki})$, and $\rho_{kj} := \rho_{kj} + \min(\rho_{ij}, \rho_{ki})$, then the loads $l_i, l_j$ and $l_k$ remain unchanged, but the communication delay is changed by:
\begin{align*}
\min(\rho_{ij}, \rho_{ki})(c_{kj} - c_{ki} - c_{ij}) \textrm{,}
\end{align*}
which is, by the triangle inequality, a negative value. This completes the proof.
\end{proof}

\begin{proposition}\label{prop:multiSingleHopEquiv}
If communication delays satisfy the triangle inequality then the single-hop model and the multiple-hop model are equivalent.
\end{proposition}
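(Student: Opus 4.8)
The plan is to prove equivalence by showing that the two optimization problems attain the same optimal value, and that an optimal solution of either model can be realized in the other with the same loads $\{l_i\}$ and the same total cost $\sum C_i$. Concretely, I would establish the two inequalities $\mathrm{OPT}_{\mathrm{multi}} \le \mathrm{OPT}_{\mathrm{single}}$ and $\mathrm{OPT}_{\mathrm{single}} \le \mathrm{OPT}_{\mathrm{multi}}$ separately.

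The first (easy) direction is that every single-hop configuration can be reproduced in the multiple-hop model. Given single-hop relay fractions $\rho$, the induced flows are $r_{ij} = \rho_{ij} n_i$ and the inflow to server $i$ from the others is $R_i = \sum_{k \neq i} \rho_{ki} n_k$. I would define multiple-hop fractions $\rho'_{ij} = \rho_{ij} n_i / (n_i + R_i)$ for $j \neq i$ (completing $\rho'_{ii}$ so the row sums to one) and verify that these same flows $r_{ij}$ satisfy the multiple-hop fixed-point equation $r_{ij} = \rho'_{ij}\left(n_i + \sum_{k \neq i} r_{ki}\right)$, which holds by construction since $\sum_{k \neq i} r_{ki} = R_i$. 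As this produces identical loads and identical communication cost, we obtain $\mathrm{OPT}_{\mathrm{multi}} \le \mathrm{OPT}_{\mathrm{single}}$.

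The second direction is the crux and rests on Lemma~\ref{lemma:recevAndSend}: in an optimal multiple-hop solution no server both sends and receives, so the servers partition into pure senders, pure receivers, and servers that keep all their load locally. A sender receives nothing, so its flows collapse to $r_{ij} = \rho_{ij} n_i$, exactly the single-hop law; and whenever a sender routes load to some $j$ with $\rho_{ij} > 0$, that $j$ must be a receiver and therefore forwards nothing further. Hence every request travels at most one hop, directly from its originating server to its final server, so the optimal multiple-hop assignment is already feasible in the single-hop model with the same loads and the same cost, giving $\mathrm{OPT}_{\mathrm{single}} \le \mathrm{OPT}_{\mathrm{multi}}$. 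Combining the two inequalities yields equality, and the explicit translations above show the optimal load distributions coincide, so the models are equivalent.

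I expect the delicate points to be twofold. First, one must confirm that the reparametrized $\rho'$ actually selects the intended flows as \emph{the} solution of the multiple-hop system (reading the fixed-point equation as a consistency check) and handle the degenerate case $n_i + R_i = 0$, where the server carries no load and the fractions may be set arbitrarily. Second, and more conceptually, one must carefully justify the sender/receiver partition from Lemma~\ref{lemma:recevAndSend} and argue that the absence of servers that simultaneously send and receive genuinely forces the single-hop flow law. The first point is routine algebra, whereas the second is the conceptual heart of the argument and depends entirely on the triangle-inequality consequence already established in the lemma.
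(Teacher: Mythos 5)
Your proposal is correct and the crux of it—invoking Lemma~\ref{lemma:recevAndSend} to conclude that a sending server receives nothing, so the multiple-hop fixed-point equation collapses to $r_{ij}=\rho_{ij}n_i$ and every request travels at most one hop—is exactly the paper's (two-line) proof. The only difference is that you also spell out the easy direction via an explicit reparametrization $\rho'_{ij}=\rho_{ij}n_i/(n_i+R_i)$, which the paper leaves implicit as the ``simple observation'' of Corollary~\ref{prop:multiBellerSingleHop}.
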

\begin{proof}
From Lemma~\ref{lemma:recevAndSend} we get that in the optimal solution if $\rho_{ij} > 0$, then for every $k$ we have $r_{ki} = 0$. Thus, $n_i + \sum_{k}r_{ki} = n_i$.
\end{proof}

We will also use the following simple observation.

\begin{corollary}\label{prop:multiBellerSingleHop}
The total processing time in the multiple-hop model is not higher than in the single-hop model.
\end{corollary}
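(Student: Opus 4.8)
The plan is to prove the inequality by exhibiting the single-hop model as a \emph{restriction} of the multiple-hop model: every feasible single-hop assignment should be realizable as a feasible multiple-hop assignment with the same loads $l_i$ and the same communication flows $r_{ij}$, hence with the same total processing time $\sum C_i$. Since the multiple-hop optimum minimizes the same objective over a feasible set that contains the images of all single-hop solutions, it cannot exceed the single-hop optimum. This is the standard ``minimizing over a larger feasible set can only help'' argument, so essentially all of the content lies in making the embedding explicit and checking that it preserves the cost.

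Concretely, I would start from an arbitrary single-hop solution given by relay fractions $\rho_{ij}$, with induced flows $r_{ij} = \rho_{ij} n_i$, loads $l_i = \sum_j r_{ji}$, and cost $\sum C_i$. The goal is to construct multiple-hop relay fractions $\tilde\rho_{ij}$ whose induced fixed-point flows coincide with these $r_{ij}$. The natural choice is to renormalize the desired flows against each server's total inflow, i.e. to set $\tilde\rho_{ij} = r_{ij} / \big(n_i + \sum_{k \neq i} r_{ki}\big)$ for $j \neq i$ and to place the remaining mass on the self-loop $\tilde\rho_{ii}$. I would then verify that these $\tilde\rho_{ij}$ are nonnegative and sum to one over $j$, so that they form an admissible relay matrix, and that substituting them into the defining equation $r_{ij} = \tilde\rho_{ij}\big(n_i + \sum_{k\neq i} r_{ki}\big)$ reproduces exactly the chosen flows. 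Because the communication term $\sum_{i,j} c_{ij} r_{ij}$ depends only on the flows and the processing term $\sum_i l_i f_i(l_i)$ depends only on the resulting loads, both terms of $\sum C_i$ would then be preserved by the embedding.

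The step I expect to be the main obstacle is precisely this last verification, because the two models scale their relay fractions against different base quantities: in the single-hop model $\rho_{ij}$ multiplies only the local load $n_i$, whereas in the multiple-hop model it multiplies the entire inflow $n_i + \sum_{k\neq i} r_{ki}$. One must confirm not only that the renormalized fractions stay in $[0,1]$, but, more delicately, that the \emph{loads} induced by $\tilde\rho$ agree with the original single-hop loads, since the load accounting differs whenever a server both receives and forwards. I would therefore devote the core of the argument to reconciling these two accountings (for instance by first arguing that one may restrict attention to single-hop solutions in which no server simultaneously acts as a genuine relay, and then realizing the remaining flows one hop at a time). Once the cost-preserving embedding is established, the corollary is immediate: the multiple-hop optimum is taken over a superset of the realizable configurations and is therefore at most the single-hop optimum. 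I would close by remarking that Proposition~\ref{prop:multiSingleHopEquiv} supplies the reverse inequality under the triangle inequality, so the two optima coincide in that regime, consistently with this bound.
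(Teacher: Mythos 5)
Your argument is correct and is exactly the reasoning the paper leaves implicit: the corollary is stated without proof as a ``simple observation'' that every single-hop solution is realizable in the multiple-hop model with the same flows, loads, and hence the same $\sum C_i$, so minimizing over the larger multiple-hop feasible set can only do at least as well. Your explicit embedding via $\tilde\rho_{ij} = r_{ij}/\bigl(n_i + \sum_{k\neq i} r_{ki}\bigr)$, together with your (correctly identified) need to reconcile the two load accountings, fills in the details the paper omits but does not depart from its approach.
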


In the next two statements we recall two results given by Liu~et~al.~\cite{Liu:2011:GGL:1993744.1993767} (these results were formulated for the general load functions). 
First, there exists an optimal solution in which only $(2m-1)$ relay fractions $\rho_{ij}$ are positive. This theorem makes our analysis more practical: the optimal load balancing can be achieved with sparse routing tables. However, we note that most of our results are also applicable to the case when every server is allowed to relay its requests only to a (small) subset of the servers; in such case we need to set the communication delays between the disallowed pairs of servers to infinity. 
% and thus the servers need to maintain only limit information to implement optimal load balancing. 

\begin{theorem}[Liu~et~al.~\cite{Liu:2011:GGL:1993744.1993767}]
In a single-hop model there exists an optimal solution in which at most $(2m-1)$ relay fractions $\rho_{ij}$ have no-zero values.
\end{theorem}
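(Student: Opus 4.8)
The plan is to fix the loads induced by an optimal solution, recognise the residual routing problem as a transportation problem, and then repeatedly cancel cycles in the bipartite support graph until it becomes acyclic.

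First I would take an arbitrary optimal solution, with routed amounts $r_{ij} = \rho_{ij} n_i$ and induced loads $l_j^* = \sum_i r_{ij}$. Since $r_{ij} > 0$ precisely when $\rho_{ij} > 0$ whenever $n_i > 0$ (and a source with $n_i = 0$ relays nothing of consequence), it suffices to bound the number of positive $r_{ij}$. The crucial observation is that the handling-time part of the objective, $\sum_j l_j^* f_j(l_j^*)$, depends only on the loads: any re-routing that preserves every load $l_j^*$ and every source amount $n_i$ leaves this term untouched and changes the objective only through the linear communication term $\sum_{i,j} c_{ij} r_{ij}$.

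Next I would form the bipartite support graph $G$ on $m$ source nodes and $m$ sink nodes, with an edge between source $i$ and sink $j$ exactly when $r_{ij} > 0$. The constraints $\sum_j r_{ij} = n_i$ and $\sum_i r_{ij} = l_j^*$ are exactly the row- and column-sum constraints of a transportation problem. The key step is then: if $G$ has at least $2m$ edges, then, because a forest on $2m$ vertices has at most $2m-1$ edges, $G$ contains a cycle $s_{i_1}, t_{j_1}, s_{i_2}, t_{j_2}, \ldots, s_{i_1}$ alternating between sources and sinks. I would perturb the flow around this cycle, adding $\delta$ to the forward edges and subtracting $\delta$ from the backward edges. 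Since every source and every sink on the cycle is incident to exactly one incremented and one decremented edge, all row and column sums are preserved; hence every $n_i$ and every $l_j^*$ is unchanged, and only the communication cost varies, linearly in $\delta$, by some slope $\alpha\delta$. I would choose the sign of $\delta$ so that $\alpha\delta \leq 0$ (either sign works when $\alpha = 0$) and increase $|\delta|$ until the first perturbed edge reaches $0$; this keeps all $r_{ij} \geq 0$, does not increase the total cost, and strictly reduces the number of positive $r_{ij}$. The resulting solution is still feasible and optimal but has fewer edges. Iterating, the support graph becomes a forest on $2m$ vertices, which has at most $2m-1$ edges, yielding an optimal solution with at most $2m-1$ positive $\rho_{ij}$.

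The main obstacle I expect is the cycle-cancelling bookkeeping: verifying that the alternating perturbation along a bipartite cycle preserves every marginal exactly, so that the handling-time term stays fixed and only the linear communication term moves, and handling the degenerate sources with $n_i = 0$ cleanly. The remaining ingredients, namely the edge count of a forest and the fact that a linear objective is non-increasing along the chosen direction, are routine; this self-contained argument is essentially the standard proof that a vertex of a transportation polytope has acyclic support.
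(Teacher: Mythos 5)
The paper does not actually prove this theorem: it is recalled verbatim from Liu et al.\ and used as a black box, so there is no in-paper argument to compare against. Your proposal is a correct, self-contained proof, and it is the standard one for results of this type: fix the optimal loads $l_j^*$, observe that the handling-time term $\sum_j l_j^* f_j(l_j^*)$ is then constant so only the linear communication cost $\sum_{i,j} c_{ij} r_{ij}$ varies over the transportation polytope $\{r \geq 0 : \sum_j r_{ij} = n_i,\ \sum_i r_{ij} = l_j^*\}$, and cancel cycles in the bipartite support graph without increasing the linear term until the support is a forest on $2m$ vertices, hence has at most $2m-1$ edges. The alternating $\pm\delta$ perturbation does preserve all marginals, the sign choice and the stopping rule (first decremented edge hitting zero) are right, and termination follows because the number of positive entries strictly decreases. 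The one bookkeeping point you flag but do not close is the passage from positive $r_{ij}$ back to positive $\rho_{ij}$ when some $n_i = 0$: such a server carries no flow yet still needs $\sum_j \rho_{ij} = 1$, so it contributes one unavoidable positive fraction (say $\rho_{ii} = 1$). The count still works out, since $k$ zero-load sources contribute $k$ such fractions while the forest on the remaining $(m-k)+m$ vertices has at most $2m-k-1$ edges, for a total of at most $2m-1$; it is worth stating this explicitly rather than leaving it as ``handled cleanly.''
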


Second, all optimal solutions are equivalent:
\begin{theorem}[Liu~et~al.~\cite{Liu:2011:GGL:1993744.1993767}]
Every server $i$ has in all optimal solutions the same load $l^{*}_i$.
\end{theorem}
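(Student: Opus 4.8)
The plan is to prove uniqueness of the optimal load vector by a convexity-and-averaging argument: I would show that the total cost is a convex function of the relay fractions $\rho$, and that its processing part is \emph{strictly} convex in the induced loads, so that any two distinct optimal load vectors could be averaged to yield a strictly cheaper feasible solution---a contradiction. I would carry out the argument in the single-hop model, where the relevant quantities are affine in $\rho$; by Proposition~\ref{prop:multiSingleHopEquiv} this also settles the multiple-hop case under the triangle inequality.

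First I would record the structural facts. The feasible set $\{\rho : \rho_{ij} \ge 0,\ \sum_j \rho_{ij} = 1\}$ is a product of simplices, hence convex, so for optimal $\rho^{(1)}, \rho^{(2)}$ and $\lambda \in (0,1)$ the combination $\rho^{(\lambda)} = \lambda \rho^{(1)} + (1-\lambda)\rho^{(2)}$ is again feasible. In the single-hop model $l_i = \sum_j \rho_{ji} n_j$ and $r_{ij} = \rho_{ij} n_i$ are both affine in $\rho$, so $l_i(\rho^{(\lambda)}) = \lambda l_i^{(1)} + (1-\lambda) l_i^{(2)}$, and the communication cost $\sum_{i,j} c_{ij} r_{ij}$ is linear in $\lambda$, equal to $\lambda$ times its value at $\rho^{(1)}$ plus $(1-\lambda)$ times its value at $\rho^{(2)}$.

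Next I would analyze the processing cost $P(l) = \sum_i h_i(l_i)$ with $h_i(l) = l f_i(l)$. Differentiating gives $h_i''(l) = 2 f_i'(l) + l f_i''(l)$, which is nonnegative because $f_i$ is nondecreasing and convex and $l \ge 0$; hence each $h_i$ is convex and $P$ is convex along the affine segment in load space. The crucial strengthening is that $h_i$ is \emph{strictly} convex wherever $h_i'' > 0$, which holds as soon as $f_i' > 0$ (the processing time genuinely grows with load), as in both the queuing model, where $f_i'(l) = 1/(\mu_i - l)^2 > 0$, and the batch model, where $f_i' = 1/(2 s_i) > 0$.

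Finally I would assemble the contradiction. If $l^{(1)} \ne l^{(2)}$, pick a coordinate $i$ with $l_i^{(1)} \ne l_i^{(2)}$; strict convexity of $h_i$ gives $P(l^{(\lambda)}) < \lambda P(l^{(1)}) + (1-\lambda) P(l^{(2)})$. Adding the communication cost, which recombines linearly, the total cost at $\rho^{(\lambda)}$ is strictly below $\lambda$ times $\sum C^{(1)}$ plus $(1-\lambda)$ times $\sum C^{(2)}$, and since both are optimal this is strictly below the common optimal value---impossible. Therefore $l^{(1)} = l^{(2)}$. The main obstacle is the strictness: plain convexity of $h_i$ only shows that averaging does not increase the cost, so the proof hinges on arguing $h_i'' > 0$ (equivalently, ruling out locally constant load functions), and on verifying that the load map and communication cost are genuinely affine in $\rho$ so that no curvature is lost in passing from $\rho$ to the loads $l_i$.
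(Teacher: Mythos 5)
The paper does not actually prove this statement---it is recalled from Liu et al.\ and used as a black box---so there is no internal proof to compare against; your convexity-and-averaging argument is the natural way to establish it, and it is essentially sound. Two remarks. First, the crux you correctly identify---strictness---is a genuine issue under the paper's literal hypotheses: $f_i$ is only assumed non-decreasing and convex, so $h_i''(l)=2f_i'(l)+lf_i''(l)$ can vanish, and then uniqueness of the loads can actually fail. For instance, take three servers with $n_2=n_3=0$, $f_2=f_3\equiv a$, $f_1(l)=l$, and $c_{12}=c_{13}=c>a$: the efficient $\epsilon$-load assumption holds, server $1$ offloads until $h_1'(l_1)=a+c$, and the remaining load may be split arbitrarily between servers $2$ and $3$ at identical total cost, so $l_2^*$ and $l_3^*$ are not unique. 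Your proof therefore needs $f_i'>0$ (equivalently $h_i''>0$ for $l>0$) as an explicit hypothesis; the paper tacitly makes the same assumption when it asserts $h_i''(l)>0$ in the proof of Theorem~\ref{lemma::delegationOptimality}, so you are in good company, but you should state it. Second, your reduction to the single-hop case via Proposition~\ref{prop:multiSingleHopEquiv} only covers the multiple-hop model under the triangle inequality, because $r_{ij}$ is not affine in $\rho$ when requests may be relayed repeatedly; this is easily repaired by running the same averaging argument in the flow variables $r_{ij}$ (a convex polytope on which the loads are affine and the communication cost is linear) rather than in the relay fractions. With those two adjustments the argument is complete.
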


Finally, in the next series of lemmas we characterize the optimal solution, by linear equations. We will use these characterization in the analysis of the central algorithm.

\begin{lemma}\label{lemma:linProg1}
In the multiple hop model, the optimal solution $\langle \rho_{ij}^{*} \rangle$ satisfies the following constraints:
\begin{alignat}{2}
 & \forall_{i}  \ & \;\; \ &    l_{i}^{*} \leq  \lmaxi  \label{in:linProgram0} \\ 
 & \forall_{i, j} \ & \; \; \ &  \rho^{*}_{ij}  \geq  0 \label{in:linProgram1} \\ 
 & \forall_{i} \ & \; \; \ &     \sum_{j=1}^{m} \rho^{*}_{ij} = 1 \text{.} \label{in:linProgram2} 
\end{alignat}
\end{lemma}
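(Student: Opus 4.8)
The plan is to treat the three constraints separately, since two of them are immediate from the definition of a feasible assignment and only the capacity bound requires an argument. Constraints \eqref{in:linProgram1} and \eqref{in:linProgram2} are precisely the defining conditions of a relay-fraction vector $\rho$ (recall that every feasible $\rho$ satisfies $\forall_{i,j}\,\rho_{ij}\ge 0$ and $\forall_i \sum_j \rho_{ij}=1$), so the optimal $\rho^{*}$ satisfies them by construction. The substantive content of the lemma is therefore the load bound \eqref{in:linProgram0}, i.e. $l_i^{*}\le\lmaxi$ for every $i$.

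To establish \eqref{in:linProgram0} I would first show that the optimal total processing time is finite. This is where the standing assumption $\sum_i \lmaxi \ge \ltot$ enters. In the single-hop model I would exhibit one explicit feasible solution of finite cost: distribute the total load $\ltot=\sum_i n_i$ among the servers by a greedy, capacity-respecting assignment so that each server $i$ receives load at most $\lmaxi$. Because the aggregate capacity $\sum_i \lmaxi$ is at least the aggregate load $\ltot$, such an assignment exists. Since every $f_i$ is bounded on $[0,\lmaxi]$ and the delays $c_{ij}$ are finite, this witness has finite $\sum C_i$, so the single-hop optimum is finite; by Corollary~\ref{prop:multiBellerSingleHop} the multiple-hop optimum is no larger, hence also finite.

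With finiteness in hand, \eqref{in:linProgram0} follows by contradiction. Suppose the optimal solution had $l_i^{*}>\lmaxi$ for some server $i$. By the convention $f_i(l)=\infty$ for $l>\lmaxi$, the summand $l_i^{*} f_i(l_i^{*})$ in the objective would be infinite, forcing $\sum C_i = \infty$ and contradicting the finiteness of the optimum established above. Hence $l_i^{*}\le\lmaxi$ for all $i$, which is exactly \eqref{in:linProgram0}.

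I expect the only real obstacle to be the feasibility construction in the second step — verifying that the total load $\ltot$ can be packed into the capacities $\{\lmaxi\}$. This is a routine transportation-feasibility (greedy fill) argument driven entirely by the inequality $\sum_i \lmaxi \ge \ltot$. Working in the single-hop model and then transferring to the multiple-hop model via Corollary~\ref{prop:multiBellerSingleHop} avoids having to reason about re-forwarded load directly. Everything else reduces to the definition of a feasible relay-fraction vector and the infinity convention on $f_i$.
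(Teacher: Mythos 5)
Your proposal is correct and follows essentially the same route as the paper: the paper's proof is a one-liner observing that \eqref{in:linProgram1} and \eqref{in:linProgram2} are just the definition of valid relay fractions and that \eqref{in:linProgram0} is equivalent to finiteness of the completion time. You merely fill in the details the paper leaves implicit (the capacity-respecting witness via $\sum_i \lmaxi \ge \ltot$ and the infinity convention on $f_i$), which is a faithful elaboration rather than a different argument.
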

\begin{proof}
Inequality~\ref{in:linProgram0} ensures that the completion time of the requests is finite.
Inequalities~\ref{in:linProgram1}~and~\ref{in:linProgram2} state that the values of $\rho^{*}_{ij}$ are valid relay fractions.
\end{proof}

\begin{lemma}\label{lemma:linProg2}
In the multiple hop model, the optimal solution $\langle \rho_{ij}^{*} \rangle$ satisfies the following constraint:
\begin{align}
\forall_{i, j} \; \; f_{j}(l^{*}_j) + l^{*}_jf'_{j}(l^{*}_j) + c_{ij} \geq f_{i}(l^{*}_i) + l^{*}_if'_{i}(l^{*}_i) \label{in:linProgram3}
\end{align}
\end{lemma}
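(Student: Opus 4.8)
The key is to read the quantity $f_i(l_i^*) + l_i^* f_i'(l_i^*)$ as the derivative $h_i'(l_i^*)$ of the total processing cost $h_i(l)=l\,f_i(l)$, so that the claimed inequality is precisely the marginal-cost condition $h_i'(l_i^*) \le c_{ij} + h_j'(l_j^*)$: at the optimum, serving one extra infinitesimal request on $i$ must cost no more than rerouting it to $j$ and serving it there. The plan is to prove this by contradiction, assuming $h_j'(l_j^*) + c_{ij} < h_i'(l_i^*)$ for some pair $(i,j)$ and exhibiting a strictly cheaper feasible solution obtained by shifting an infinitesimal slice of load from $i$ to $j$. By Proposition~\ref{prop:multiSingleHopEquiv} and the (already recalled) fact that every optimal solution assigns the same load $l_i^*$ to each server, I may carry out the perturbation in the single-hop model, where $l_i^* = \sum_k \rho_{ki}^* n_k$.

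First I would dispose of the boundary case $l_i^*=0$. Here the right-hand side is $h_i'(0)=f_i(0)$, while the left-hand side is $f_j(l_j^*)+l_j^* f_j'(l_j^*)+c_{ij}\ge f_j(0)+c_{ij}$, using that $f_j$ is non-decreasing and $f_j',l_j^*\ge 0$. Efficient $\epsilon$-load processing gives $f_i(0) < c_{ij}+f_j(0)$, so the inequality holds outright and no perturbation is needed.

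For $l_i^*>0$ the main argument runs as follows. Since $l_i^*=\sum_k \rho_{ki}^* n_k>0$, there is a server $k$ with $n_k>0$ and $\rho_{ki}^*>0$; I would reroute a small mass $\delta=\epsilon n_k$ from the route $k\to i$ to the route $k\to j$ by setting $\rho_{ki}:=\rho_{ki}^*-\epsilon$ and $\rho_{kj}:=\rho_{kj}^*+\epsilon$ (feasible for small $\epsilon>0$, and preserving $\sum_j\rho_{kj}=1$). This leaves every other load unchanged while sending $l_i^*\mapsto l_i^*-\delta$ and $l_j^*\mapsto l_j^*+\delta$. To first order in $\delta$ the objective changes by
$$\big[h_j'(l_j^*)-h_i'(l_i^*)\big]\delta + (c_{kj}-c_{ki})\delta \le \big[h_j'(l_j^*)+c_{ij}-h_i'(l_i^*)\big]\delta,$$
where $c_{kj}-c_{ki}\le c_{ij}$ is the triangle inequality (an equality with $c_{ij}$ when $k=i$, since $c_{ii}=0$). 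Twice-differentiability of $f_i$ and $f_j$ makes the omitted terms $O(\delta^2)$, so if the assumed strict inequality held the bracket would be negative and the total change negative for all sufficiently small $\delta>0$, contradicting optimality.

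The hard part is keeping the perturbation honestly feasible and first-order, and I expect three points to need care. First, guaranteeing a reroutable source of $i$'s load is exactly why the split on $l_i^*$ is needed and why efficient $\epsilon$-load processing is invoked in the degenerate case. Second, the communication bookkeeping relies on the triangle inequality to replace the incurred $c_{kj}-c_{ki}$ by the $c_{ij}$ that appears in the statement. Third, the shift toward $j$ requires $l_j^*<l_{\mathit{max},j}$ so that $h_j$ stays on the interval where it is differentiable; if $l_j^*=l_{\mathit{max},j}$ the move is infeasible (infinite cost) and must be excluded or argued separately. Note that convexity is not actually used for this necessary condition—only monotonicity, twice-differentiability, and boundedness of the derivatives.
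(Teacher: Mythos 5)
Your core idea---reading $f_i(l^*_i)+l^*_if_i'(l^*_i)$ as the marginal cost $h_i'(l^*_i)$ and deriving a contradiction from a first-order perturbation that shifts an infinitesimal load from $i$ to $j$---is exactly the paper's argument, including the use of efficient $\epsilon$-load processing to handle $l^*_i=0$. However, there is a genuine gap in how you realize the perturbation. The lemma is stated for the multiple-hop model with \emph{no} triangle-inequality hypothesis, yet your argument invokes Proposition~\ref{prop:multiSingleHopEquiv} (which requires the triangle inequality) to pass to the single-hop model, and then again uses $c_{kj}-c_{ki}\le c_{ij}$ to bound the rerouting cost at the source $k$. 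Without the triangle inequality neither step is available, and indeed if $c_{kj}-c_{ki}>c_{ij}$ your perturbation could be strictly more expensive than the $c_{ij}\,\Delta r$ budget the statement allows, so the bracket you need to be negative might not be. As written, your proof establishes the lemma only under an extra hypothesis the statement does not make.

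The fix is to perturb directly in the multiple-hop model, as the paper does: take $\Delta r$ of the load currently sitting at $i$ and relay it one additional hop to $j$ (i.e., decrease $\rho_{ii}$ and increase $\rho_{ij}$), which changes exactly $l_i\mapsto l_i-\Delta r$, $l_j\mapsto l_j+\Delta r$ and incurs communication cost exactly $c_{ij}\Delta r$---no source server $k$ and no triangle inequality needed. The paper then writes $F(\Delta r)=(l^*_i-\Delta r)f_i(l^*_i-\Delta r)+(l^*_j+\Delta r)f_j(l^*_j+\Delta r)+\Delta r\,c_{ij}$, computes $F'(0)<0$ from the assumed violated inequality, and concludes by continuity of $F'$. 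Your observations that convexity is not needed here and that the case $l^*_j=\lmaxi[j]$ deserves a remark are fair (the paper glosses over the latter), but they do not repair the reliance on the triangle inequality.
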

\begin{proof}
For the sake of contradiction let us assume that $f_{j}(l^{*}_j) + l^{*}_jf'_{j}(l^{*}_j) + c_{ij} < f_{i}(l^{*}_i) + l^{*}_if'_{i}(l^{*}_i)$. Since we assumed that $f_{j}(0) + c_{ij} > f_{i}(0)$ (see Section~\ref{sec:mathematical-model}), we infer that $l^{*}_i > 0$.

Next, we show that if $f_{j}(l^{*}_j) + l^{*}_jf'_{j}(l^{*}_j) + c_{ij} <  f_{i}(l^{*}_i) + l^{*}_if'_{i}(l^{*}_i)$ and $l^{*}_i > 0$, then the organization $i$ can improve the total processing time of the requests $\sum{C_i}$ by relaying some more load to the $j$-th server (which will lead to a contradiction).
Let us consider a function $F(\Delta r)$ that quantifies $i$'s and $j$'s contribution to $\sum{C_i}$ if $\Delta r$ requests are additionally send from $i$ to $j$ (and also takes into account the additional communication latency $\Delta r c_{ij}$):
\begin{align*}
F(\Delta r) = (l^{*}_i - \Delta r)f_{i}(l^{*}_i - \Delta r) + (l^{*}_j + \Delta r)f_{j}(l^{*}_j + \Delta r) + \Delta r c_{ij} \text{.}
\end{align*}
If $F(\Delta r) < F(0)$, then transferring extra $\Delta r$ requests from $i$ to $j$ decreases $\sum{C_i}$ (thus leading to a better solution). We compute the derivative of $F$:
\begin{align*}
F'(\Delta r) = -f_{i}(l^{*}_i - \Delta r) - (l^{*}_i - \Delta r)f'_{i}(l^{*}_i - \Delta r)  + f_{j}(l^{*}_j + \Delta r) + (l^{*}_j + \Delta r)f'_{j}(l^{*}_j + \Delta r) + c_{ij} \text{.}
\end{align*}
Since we assumed that $f_{j}(l^{*}_j) + l^{*}_jf'_{j}(l^{*}_j) + c_{ij} < f_{i}(l^{*}_i) + l^{*}_if'_{i}(l^{*}_i)$, we get that:
\begin{align*}
F'(0) = -f_{i}(l^{*}_i) - l^{*}_if'_{i}(l^{*}_i)  + f_{j}(l^{*}_j) + l^{*}_jf'_{j}(l^{*}_j) + c_{ij} < 0 \text{.}
\end{align*}
Since $F'$ is differentiable, it is continuous; so there exists $\Delta r_0 > 0$ such that $F'$ is negative on $[0; \Delta r_0]$, and thus $F$ is decreasing on $[0; \Delta r_0]$. Consequently, $F(\Delta r_0) < F(0)$, which contradicts the optimality of $\langle \rho_{ij}^{*} \rangle$.
\end{proof}

\begin{lemma}\label{lemma:linProg3}
In the multiple hop model, the optimal solution $\langle \rho_{ij}^{*} \rangle$ satisfies the following constraint:
\begin{align}
\forall_{i, j} \; \; \; \text{if} \; \rho^{*}_{ij} > 0 \; \text{then} \; f_{j}(l^{*}_j) + l^{*}_jf'_{j}(l^{*}_j) + c_{ij} \leq f_{i}(l^{*}_i) + l^{*}_if'_{i}(l^{*}_i) \label{in:linProgram4}
\end{align}
\end{lemma}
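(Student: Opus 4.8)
The plan is to mirror the argument of Lemma~\ref{lemma:linProg2}, but to perturb the solution in the opposite direction. In Lemma~\ref{lemma:linProg2} we pushed extra load \emph{from} $i$ \emph{to} $j$; here we instead \emph{reclaim} load that $i$ already relays to $j$, reducing $\rho^*_{ij}$. This is exactly where the hypothesis $\rho^*_{ij} > 0$ (equivalently $r^*_{ij} > 0$) is needed: it guarantees that there is a positive amount of relayed load available to move back, so the backward perturbation is feasible for all sufficiently small step sizes. Together, the two lemmas pin down the complementary-slackness/KKT condition: whenever $\rho^*_{ij}>0$, inequalities~\eqref{in:linProgram3} and~\eqref{in:linProgram4} force equality.

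Concretely, I would argue by contradiction. Suppose $\rho^*_{ij} > 0$ yet $f_{j}(l^{*}_j) + l^{*}_jf'_{j}(l^{*}_j) + c_{ij} > f_{i}(l^{*}_i) + l^{*}_if'_{i}(l^{*}_i)$. Reusing the load-level bookkeeping of Lemma~\ref{lemma:linProg2}, I define the combined contribution of $i$ and $j$ to $\sum C_i$ when $\Delta r$ units are moved back from $j$ to $i$ (so $i$'s load grows, $j$'s shrinks, and the communication cost $\Delta r\, c_{ij}$ is saved):
\begin{align*}
G(\Delta r) = (l^{*}_i + \Delta r)f_{i}(l^{*}_i + \Delta r) + (l^{*}_j - \Delta r)f_{j}(l^{*}_j - \Delta r) - \Delta r\, c_{ij} \text{.}
\end{align*}
This perturbation is admissible for $0 \le \Delta r \le r^{*}_{ij}$, with $r^{*}_{ij} > 0$ by hypothesis; note also that $l^{*}_j - \Delta r \ge l^{*}_j - r^{*}_{ij} \ge 0$, so $j$'s load stays nonnegative.

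Next I would differentiate and evaluate at the origin, obtaining
\begin{align*}
G'(0) = f_{i}(l^{*}_i) + l^{*}_if'_{i}(l^{*}_i) - f_{j}(l^{*}_j) - l^{*}_jf'_{j}(l^{*}_j) - c_{ij} \text{,}
\end{align*}
which is strictly negative precisely under the assumed violation of~\eqref{in:linProgram4}. Since $G'$ is continuous (inherited from the twice-differentiability and convexity of the $f_i$), it stays negative on some interval $[0, \Delta r_0]$ with $0 < \Delta r_0 \le r^{*}_{ij}$, so $G$ is strictly decreasing there and $G(\Delta r_0) < G(0)$. Reclaiming $\Delta r_0$ units of load therefore strictly lowers $\sum C_i$, contradicting the optimality of $\langle \rho^{*}_{ij}\rangle$. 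The only point that genuinely differs from Lemma~\ref{lemma:linProg2}---and the one I would be careful about---is the feasibility bound: the backward step is limited by the positive quantity $r^{*}_{ij}$, which is why $\rho^*_{ij}>0$ is essential and why $\Delta r_0$ must be kept below $r^{*}_{ij}$ (automatic by shrinking the interval). Everything else carries over verbatim, so I expect no real obstacle beyond this bookkeeping.
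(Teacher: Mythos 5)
Your second half — defining $G(\Delta r)$, computing $G'(0)$, and deriving a contradiction from $G'(0)<0$ — is exactly the paper's closing argument (the paper phrases it as $G(\Delta r)=F(-\Delta r)$ with $F$ from Lemma~\ref{lemma:linProg2}, and concludes $G'(0)\ge 0$ from optimality; same computation). But there is a genuine gap in the first half: you treat $\rho^*_{ij}>0$ as if it meant that $\Delta r \le r^*_{ij}$ of $i$'s requests are \emph{processed} at $j$, so that reclaiming them changes only $l_i$ and $l_j$ and saves exactly $\Delta r\,c_{ij}$. In the multiple-hop model (for which the lemma is stated, and without assuming the triangle inequality) this is false in general: $j$ may relay all of the load it receives from $i$ onward, so reducing $r_{ij}$ does not simply subtract $\Delta r$ from $j$'s processed load — the reduction cascades through $j$'s relay fractions to downstream servers, and the naive perturbation relating only $i$ and $j$ is not an admissible move. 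Your feasibility check ``$l^*_j - \Delta r \ge l^*_j - r^*_{ij}\ge 0$'' does not address this; the issue is not nonnegativity of $l_j$ but \emph{which} server's processed load actually decreases.

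The paper spends the first half of its proof closing exactly this hole. It follows the chain $j, j_2,\dots,j_\ell$ along which $i$'s requests are forwarded until some server $j_\ell$ processes them; it uses the efficient $\epsilon$-load processing assumption to show every intermediate server on that chain has strictly positive load; and it then performs a sequence of cost-neutral swaps (exchanging requests processed at $j_{k}$ with $i$'s requests processed at $j_{k+1}$) to obtain an equally good optimal solution in which $j$ itself processes some of $i$'s requests. Only after that reduction is your function $G$ the correct local model of the perturbation. Without this step your argument proves the lemma only for the single-hop model (or for the multiple-hop model under the triangle inequality, via Proposition~\ref{prop:multiSingleHopEquiv}), not as stated.
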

\begin{proof}
If $\rho^{*}_{ij} > 0$, then in the optimal solution $i$ sends some requests to $j$. There are two possibilities. Either some of the transferred requests of $i$ are processed on $j$, or $j$ sends all of them further to another server $j_2$. Similarly, $j_2$ may process some of these requests or send them all further to $j_3$. Let $j, j_2, j_3, \dots, j_{\ell}$ be the sequence of servers such that every server from $j, j_2, j_3, \dots, j_{\ell-1}$ transfers all received requests of $i$ to the next server in the sequence and $j_{\ell}$ processes some of them on its own.

First, we note that every server from $j, j_2, j_3, \dots, j_{\ell-1}$ has non-zero load. Indeed if this is not the case then let $j_{0}$ be the last server from the sequence which has load equal to 0. However we assumed that for sufficiently small load, it is faster to process it locally than to send it over the network to the next server $j_{k}$ ($f_{j_{0}}(\epsilon) < f_{k}(\epsilon) + c_{j_{0}k}$). This contradicts the optimality of the solution and shows that our observation is true.

Then, we take some requests processed on $j_{\ell-1}$ and swap them with the same number of requests owned by $i$, processed on $j_{\ell}$. After this swap $j_{\ell-1}$ processes some requests of $i$; such a swap does not change $\sum{C_i}$. Next, we repeat the same procedure for $j_{\ell-1}$ and $j_{\ell-2}$; then $j_{\ell-2}$ and $j_{\ell - 3}$; and so on. As a result, $j$ processes some requests of $i$.

The next part of the proof is similar to the proof of Lemma~\ref{lemma:linProg2}. Let us consider the function $G(\Delta r)$ that quantifies $i$'s and $j$'s contribution to $\sum{C_i}$ if $\Delta r$ requests are \emph{moved back} from $j$ to $i$ (i.e., not sent from $i$ to $j$):
\begin{align*}
G(\Delta r) = (l^{*}_i + \Delta r)f_{i}(l^{*}_i + \Delta r) + (l^{*}_j - \Delta r)f_{j}(l^{*}_j - \Delta r) - \Delta r c_{ij} \text{.}
\end{align*}
If $G(\Delta r) < G(0)$, executing $\Delta r$ requests on $i$ (and not on $j$) reduces $\sum{C_i}$.

$G(\Delta r) = F(-\Delta r)$ (see the proof of Lemma~\ref{lemma:linProg2}).
Thus, $G'(\Delta r) = -F'(\Delta r)$, and
\begin{equation}
G'(0) = -F'(0) = 
f_{i}(l^{*}_i) + l^{*}_if'_{i}(l^{*}_i)  - f_{j}(l^{*}_j) - l^{*}_jf'_{j}(l^{*}_j) - c_{ij} \text{.}
\end{equation}
As $l^*_i$ is optimal, $G'(0) \geq 0$, thus $f_{i}(l^{*}_i) + l^{*}_if'_{i}(l^{*}_i)  - f_{j}(l^{*}_j) - l^{*}_jf'_{j}(l^{*}_j) - c_{ij} \geq 0$, which proves the thesis.
\end{proof}

\begin{lemma}\label{lemma:linProg}
If some solution $\langle \rho_{ij} \rangle$ satisfies Inequalities~\ref{in:linProgram0},~\ref{in:linProgram1},~\ref{in:linProgram2},~\ref{in:linProgram3},~and~\ref{in:linProgram4} then every server $i$ under $\langle \rho_{ij} \rangle$ has the same load as in the optimal solution $\langle \rho^{*}_{ij} \rangle$.
\end{lemma}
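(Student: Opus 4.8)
The plan is to show that \emph{any} solution $\langle \rho_{ij} \rangle$ satisfying Inequalities~\ref{in:linProgram0}--\ref{in:linProgram4} already achieves the optimal value of $\sum C_i$, and then to invoke the theorem of Liu~et~al.\ quoted above (all optimal solutions assign each server the same load $l^{*}_i$) to conclude equality of the loads. The engine of the argument is convexity. Writing $h_i(l)=l f_i(l)$, a direct computation gives $h_i''(l)=2f_i'(l)+l f_i''(l)\ge 0$, since $f_i$ is non-decreasing and convex and $l\ge 0$; hence each $h_i$ is convex and its derivative $g_i(l):=f_i(l)+l f_i'(l)=h_i'(l)$ --- which is exactly the expression occurring in Inequalities~\ref{in:linProgram3} and~\ref{in:linProgram4} --- is non-decreasing.

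Let $\langle \rho^{*}_{ij}\rangle$ be an optimal solution, with loads $l^{*}_i$ and flows $r^{*}_{ij}$; by Lemmas~\ref{lemma:linProg1},~\ref{lemma:linProg2} and~\ref{lemma:linProg3} it too satisfies Inequalities~\ref{in:linProgram0}--\ref{in:linProgram4}. I would bound from above the gap between the objective of $\langle\rho\rangle$ and that of $\langle\rho^{*}\rangle$. For the processing part I use the supporting-line inequality for the convex $h_i$ taken at the point $l_i$, namely $h_i(l_i)-h_i(l^{*}_i)\le g_i(l_i)\,(l_i-l^{*}_i)$, and for the communication part I keep the exact expression $\sum_{i,j} c_{ij}(r_{ij}-r^{*}_{ij})$. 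Substituting $l_i=\sum_j r_{ji}$ and $l^{*}_i=\sum_j r^{*}_{ji}$ and relabelling the communication indices $i\leftrightarrow j$, the entire upper bound collapses into a single sum over directed pairs, $\sum_{i,j}\big(g_i(l_i)+c_{ji}\big)\,(r_{ji}-r^{*}_{ji})$.

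The final step is to sign this sum using Inequalities~\ref{in:linProgram3} and~\ref{in:linProgram4}. Splitting it into the part carrying the flows $r_{ji}$ and the part carrying $r^{*}_{ji}$: for the former, whenever $r_{ji}>0$ (equivalently $\rho_{ji}>0$) the tightness condition~\ref{in:linProgram4} gives $(g_i(l_i)+c_{ji})r_{ji}\le g_j(l_j)r_{ji}$; for the latter, Inequality~\ref{in:linProgram3} ($g_i(l_i)+c_{ji}\ge g_j(l_j)$ for \emph{every} pair) together with $r^{*}_{ji}\ge 0$ gives $(g_i(l_i)+c_{ji})r^{*}_{ji}\ge g_j(l_j)r^{*}_{ji}$. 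Hence the collapsed sum is at most $\sum_j g_j(l_j)\big(\sum_i r_{ji}-\sum_i r^{*}_{ji}\big)$. Because the total load leaving each server $j$ is identical in both solutions (in the single-hop model it equals $n_j$, as $\sum_i \rho_{ji}=1$), the two inner sums cancel and the whole bound is $\le 0$. Thus $\langle\rho\rangle$ has objective no larger than the optimum; since $\langle\rho\rangle$ is feasible (Inequality~\ref{in:linProgram0} makes its cost finite) it cannot strictly beat the optimum, so it is itself optimal, and the uniqueness theorem of Liu~et~al.\ forces $l_i=l^{*}_i$ for every $i$.

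I expect the delicate point to be the load-conservation identity $\sum_i r_{ji}=\sum_i r^{*}_{ji}$ that makes the two inner sums cancel. In the single-hop model this is immediate, but in the multiple-hop model the out-flow of a server equals its total in-flow $n_j+\sum_{k\ne j} r_{kj}$, which itself depends on the solution; there I would either appeal to Proposition~\ref{prop:multiSingleHopEquiv} to reduce to the single-hop case, or redo the accounting node by node. A secondary subtlety is that $h_i$ need not be \emph{strictly} convex (e.g.\ for an affine $f_i$), which is why I deliberately deduce equality of the loads through the uniqueness theorem of Liu~et~al.\ rather than from termwise tightness of the supporting-line inequality, which on its own would not pin down $l_i$.
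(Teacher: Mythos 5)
Your route is genuinely different from the paper's. You prove the stronger statement that any $\langle\rho_{ij}\rangle$ satisfying Inequalities~\ref{in:linProgram0}--\ref{in:linProgram4} already attains the optimal value of $\sum C_i$ --- a first-order/convexity argument in which Inequality~\ref{in:linProgram3} plays the role of dual feasibility and Inequality~\ref{in:linProgram4} of complementary slackness --- and then you import the quoted uniqueness theorem of Liu~et~al.\ to turn ``same objective value'' into ``same loads''. The paper instead argues directly about loads: it takes $S_{+}=\{i: l_i^{*}\ge l_i\}$, uses the monotonicity of $g_i(l)=f_i(l)+lf_i'(l)$ together with Inequalities~\ref{in:linProgram3} and~\ref{in:linProgram4} to show that $S_{+}$ receives no flow from outside in $\rho^{*}$ and sends no flow outside in $\rho$, and then derives a contradiction from conservation of the total load of $S_{+}$. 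Your version is cleaner and yields a stronger conclusion (optimality of $\rho$, not merely load-equivalence), and it is airtight in the single-hop model, where $\sum_i r_{ji}=n_j$ for every solution and the final cancellation is exact. The paper's version is more self-contained in that it never needs the uniqueness theorem and never needs to compare objective values.

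The gap you flag yourself is, however, a real one and your two proposed repairs do not close it. The lemma is invoked (in the proof of Theorem~\ref{thm:approxBound}) for the multiple-hop model \emph{without} assuming the triangle inequality --- indeed it is applied to the perturbed instance $\tilde{I}$ with latencies $c_{ij}+\Delta_{ij}$, for which no triangle inequality is guaranteed even if the original instance satisfied one --- so you cannot appeal to Proposition~\ref{prop:multiSingleHopEquiv}. And the ``node-by-node'' accounting genuinely fails there: in the multiple-hop model $\sum_i r_{ji}=n_j+\sum_{k\neq j}r_{kj}$, so after the two signed estimates your bound does not collapse to zero but leaves the residual $\sum_j g_j(l_j)\bigl(\sum_{k\neq j}r_{kj}-\sum_{k\neq j}r^{*}_{kj}\bigr)$, whose sign is not controlled by Inequalities~\ref{in:linProgram3} and~\ref{in:linProgram4} (the out-flow of a node is itself solution-dependent, so the ``supplies'' of the two flows being compared differ). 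To cover the case the paper actually needs, you must either redo the variational argument with the correct multiple-hop conservation constraints (treating the $r_{ij}$ as the primal variables with the linear constraint $\sum_j r_{ij}-\sum_{k\neq i}r_{ki}=n_i$ and re-deriving the complementary-slackness pairing accordingly), or fall back on a direct load-comparison argument of the paper's $S_{+}$ type, which sidesteps the objective-value comparison altogether.
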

\begin{proof}
Let $S_{+}$ denote the set of servers that in $\rho^{*}$ have greater or equal load than in $\rho$ ($l_i^{*} \geq l_i$). For the sake of contradiction let us assume that $S_{+}$ is non-empty and that it contains at least one server $i$ that in $\rho^{*}$ has strictly greater load than in $\rho$ ($l_i^{*} > l_i$).

Let $j \in S_{+}$; we will show that $j$ in $\rho^{*}$ can receive requests only from the servers from $S_{+}$.
By definition of $S_+$, $l_j^{*} \geq l_j$. Consider a server $i$ that in $\rho^{*}$ relays some of its requests to $j$; we will show that $l_i^{*} \geq l_i$. Indeed, since $\rho^{*}_{ij} > 0$, from Inequality~\ref{in:linProgram4} we get that:
\begin{equation}\label{eq:linProg-der1}
f_{j}(l^{*}_j) + l^{*}_jf'_{j}(l^{*}_j) + c_{ij} \leq f_{i}(l^{*}_i) + l^{*}_if'_{i}(l^{*}_i) \text{.}
\end{equation}
Since we assumed that $\langle \rho_{ij} \rangle$ satisfies Inequality~\ref{in:linProgram3}, we get
\begin{equation}\label{eq:linProg-der2}
f_{j}(l_j) + l_jf'_{j}(l_j)+ c_{ij} \geq f_{i}(l_i) + l_if'_{i}(l_i) \text{.}
\end{equation}
By combining these relations we get:
\begin{align*}
f_{i}(l^{*}_i) + l^{*}_if'_{i}(l^{*}_i) & \geq f_{j}(l^{*}_j) + l^{*}_jf'_{j}(l^{*}_j) + c_{ij} & \text{from Eq.~\ref{eq:linProg-der1}}\\
                                        & \geq f_{j}(l_j) + l_jf'_{j}(l_j) + c_{ij} & \text{as $l^*_j \geq l_j$ and $f_j$ is convex and non-decreasing }\\
                                        & \geq f_{i}(l_i) + l_if'_{i}(l_i) & \textrm{from Eq.~\ref{eq:linProg-der2}.}
\end{align*}
Since $f_i$ is convex, the function $f_i(l) + lf_i'(l)$ is non-decreasing (as the sum of two non-decreasing functions); thus $l_i^{*} \geq l_i$.

Similarly, we show that any $i \in S_{+}$ in $\rho$ can send requests only to other $S_+$ servers. 
Consider a server $j$ that in $\rho$ receives requests from $i$. 
% Since $\rho_{ij} > 0$, from Inequality~\ref{in:linProgram4} we know that
% $f_{j}(l_j) + l_jf'_{j}(l_j) + c_{ij} \leq f_{i}(l_i) + l_if'_{i}(l_i)$.
% From the optimality of $\rho^{*}$, from Inequality~\ref{in:linProgram3} we get that
% $f_{j}(l^{*}_j) + l^{*}_jf'_{j}(l^{*}_j) + c_{ij} \geq f_{i}(l^{*}_i) + l^{*}_if'_{i}(l^{*}_i)$. Thus, by combining the two inequalities we get:
\begin{align*}
f_{j}(l^{*}_j) + l^{*}_jf'_{j}(l^{*}_j) & \geq f_{i}(l^{*}_i) + l^{*}_if'_{i}(l^{*}_i) - c_{ij} & \text{Eq.~\ref{in:linProgram3}}\\
                                        & \geq f_{i}(l_i) + l_if'_{i}(l_i) - c_{ij} & \text{as $l^*_i \geq l_i$ and $f_i$ is convex and non-decreasing} \\
                                        & \geq f_{j}(l_j) + l_jf'_{j}(l_j) & \textrm{as $\rho_{ij}>0$, from Eq.~\ref{in:linProgram4}.}
\end{align*}
Thus, $l_j^{*} \geq l_j$.

Let $l_{in}$ be the total load sent in $\rho$ to the servers from $S_{+}$ by the servers outside of $S_{+}$. Let $l_{out}$ be the total load sent by the servers from $S_{+}$ in $\rho$ to the servers outside of $S_{+}$. Analogously we define $l^{*}_{in}$ and $l^{*}_{out}$ for the state $\rho^{*}$. In the two previous paragraphs we showed that $l^{*}_{in} = 0$ and that $l_{out} = 0$. However, since the total load of the servers from $S_{+}$ is in $\rho^{*}$ greater than in $\rho$, we get that:
\begin{align*}
l^{*}_{in} - l^{*}_{out} > l_{in} - l_{out} \textrm{.}
\end{align*}
From which we get that: $-l^{*}_{out} > l_{in}$, i.e. $l_{in} + l^*_{out} < 0$, which leads to a contradiction as transfers $l_{in}$ and $l^*_{out}$ cannot be negative.
\end{proof}

\section{An approximate centralized algorithm}\label{sec:approximate-centralized}

In this section we show a centralized algorithm for the multiple-hop model. As a consequence of Proposition~\ref{prop:multiSingleHopEquiv} the results presented in this section also apply to the single-hop model with the communication delays satisfying the triangle inequality.

For the further analysis we introduce the notion of optimal network flow.

\begin{definition}[Optimal network flow]
The vector of relay fractions $\rho = \langle \rho_{ij} \rangle$ has an optimal network flow if and only if there is no $\rho' = \langle \rho'_{ij} \rangle$ such that every server in $\rho'$ has the same load as in $\rho$ and such that the total communication delay of the requests $\sum_{i, j}c_{ij}r'_{ij}$ in $\rho'$ is lower than the total communication delay $\sum_{i, j}c_{ij}r_{ij}$ in $\rho$.
\end{definition} 

The problem of finding the optimal network flow reduces to finding a minimum cost flow in uncapacitated network.
Indeed, in the problem of finding a minimum cost flow in uncapacitated network we are given a graph with the cost of the arcs and demands (supplies) of the vertices. For each vertex $i$, $b_i$ denotes the demand (if positive) or supply (if negative) of $i$. We look for the flow that satisfies demands and supplies and minimizes the total cost. To transform our problem of finding the optimal network flow to the above form it suffices to set $b_{i} = l_i - n_i$. 
Thus our problem can be solved in time $O(m^3\log m)$~\cite{Orlin88afaster}. Other distributed algorithms include the one of Goldberg~et~al.~\cite{minimumCirculation}, and the asynchronous auction-based algorithms~\cite{auctionBasedMinCostFlow}, with e.g., the complexity of $O(m^3\log (m)\log (\max_{i,j}c_{ij}))$.

The following theorem estimates how far is the current solution to the optimal based on the degree to which Inequality~\ref{in:linProgram3} is not satisfied. We use the theorem to prove approximation ratio of the load balancing algorithm.

\begin{theorem}\label{thm:approxBound}
Let $\rho$ be the vector of relay fractions satisfying Inequalities~\ref{in:linProgram0},~\ref{in:linProgram1},~\ref{in:linProgram2}~and~\ref{in:linProgram4}, and having an optimal network flow. Let $\Delta_{ij}$ quantify the extent to which Inequality~\ref{in:linProgram3} is not satisfied:
\begin{align*}
\Delta_{ij} = \max(0, f_{i}(l_i) + l_if'_{i}(l_i) - f_{j}(l_j) - l_jf'_{j}(l_j) - c_{ij}) \textrm{.}
\end{align*}
Let $\Delta = \max_{i, j}\Delta_{ij}$.
Let $e$ be the absolute error---the difference between $\sum{C_i}$ for solution $\rho$ and for  $\rho^{*}$, $e = \sum{C_i}(\rho) - \sum{C_i}(\rho^*)$.
For the multiple-hop model and for the single-hop model satisfying the triangle inequality we get the following estimation:
\begin{align*}
e \leq \ltot m \Delta\textrm{.}
\end{align*}
\end{theorem}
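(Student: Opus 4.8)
The plan is to bound $e$ by first linearising the convex processing cost and then controlling the communication term through the flow/marginal-cost structure encoded in Inequalities~\ref{in:linProgram3} and~\ref{in:linProgram4}. Write $h_i(l)=lf_i(l)$ and $g_i(l)=h_i'(l)=f_i(l)+lf_i'(l)$, so that $\sum C_i=\sum_i h_i(l_i)+\sum_{i,j}c_{ij}r_{ij}$ and the quantity appearing in Inequalities~\ref{in:linProgram3},~\ref{in:linProgram4} and in $\Delta_{ij}$ is exactly $g_i(l_i)-g_j(l_j)-c_{ij}$. I first record that $\rho^{*}$, being globally optimal, also has an optimal network flow (otherwise one could lower $\sum_{i,j}c_{ij}r_{ij}^{*}$ without changing the loads), and the conservation identity $\sum_i l_i=\ltot=\sum_i l_i^{*}$, which under the triangle inequality holds in both models by Proposition~\ref{prop:multiSingleHopEquiv} and Lemma~\ref{lemma:recevAndSend}.

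First I would move from $\rho$ to $\rho^{*}$ along the segment $\rho_t=\rho+t(\rho^{*}-\rho)$, $t\in[0,1]$. Since each $h_i$ is convex and $l_i(t)$ is affine while the communication cost is linear in $t$, the map $t\mapsto\sum C_i(\rho_t)$ is convex, so its total decrease is at most minus its derivative at $t=0$, yielding the first-order bound
\[
e \;\le\; \sum_i g_i(l_i)\,(l_i-l_i^{*}) \;+\; \Big(\textstyle\sum_{i,j}c_{ij}r_{ij}-\sum_{i,j}c_{ij}r_{ij}^{*}\Big).
\]
This replaces the two nonlinear objectives by the current marginal costs $g_i(l_i)$ together with a difference of (optimal) communication costs. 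To dispatch the latter I would use that both flows are minimum-cost flows in the uncapacitated network of the reduction above. Let $\bar\pi$ be optimal node potentials for the loads $l$: they are feasible, $\bar\pi_j-\bar\pi_i\le c_{ij}$, tight on the arcs used by $\rho$, and give $\sum_{i,j}c_{ij}r_{ij}=\sum_i\bar\pi_i(l_i-n_i)$. Applying weak duality for the loads $l^{*}$ with the same feasible $\bar\pi$ gives $\sum_{i,j}c_{ij}r_{ij}^{*}\ge\sum_i\bar\pi_i(l_i^{*}-n_i)$, so the communication difference is at most $\sum_i\bar\pi_i(l_i-l_i^{*})$. Substituting collapses everything to a single transport-type sum
\[
e \;\le\; \sum_i \Psi_i\,(l_i-l_i^{*}), \qquad \Psi_i:=g_i(l_i)+\bar\pi_i .
\]

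Finally, because $\sum_i(l_i-l_i^{*})=0$, this sum is at most $(\max_i\Psi_i-\min_i\Psi_i)$ times the transported mass $P=\sum_{i:\,l_i>l_i^{*}}(l_i-l_i^{*})\le\ltot$, so it remains to bound the spread of $\Psi$ by $m\Delta$. The key computation chains the defect of Inequality~\ref{in:linProgram3} along a directed path of arcs used by $\rho$: on each such arc $a\to b$ tightness gives $\bar\pi_b-\bar\pi_a=c_{ab}$ while $\Delta_{ab}\le\Delta$ gives $g_a(l_a)-g_b(l_b)\le c_{ab}+\Delta$; summing over a path of at most $m-1$ arcs, the accumulated $c_{ab}$ cancels against the telescoping potential differences and leaves $\Psi_i-\Psi_j\le(m-1)\Delta$. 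Combining $P\le\ltot$ with the spread $\le m\Delta$ yields $e\le\ltot\,m\,\Delta$.

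I expect the delicate part to be precisely this communication bookkeeping: reconciling the min-cost-flow potentials $\bar\pi$ (which balance communication exactly) with the marginal processing costs $g_i(l_i)$ (which Inequality~\ref{in:linProgram3} balances only up to $\Delta$), ensuring the cancellation of the $c_{ij}$ terms, and in particular handling servers that lie in different connected components of $\rho$'s flow graph, where no directed path joins them. The factor $m$ enters through the length of the chaining paths and the factor $\ltot$ through the total transported load; I would also need to confirm $\sum_i(l_i-l_i^{*})=0$ holds in the regime of the statement, which follows from Lemma~\ref{lemma:recevAndSend} and Proposition~\ref{prop:multiSingleHopEquiv}.
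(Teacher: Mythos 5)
Your overall strategy is genuinely different from the paper's (which perturbs the instance to $\tilde{c}_{ij}=c_{ij}+\Delta_{ij}$, uses Lemma~\ref{lemma:linProg} to identify the loads of $\rho$ with the optimum of the perturbed instance, and then compares objectives across the two instances), and the first part of your argument is sound: the first-order bound $e\le\sum_i g_i(l_i)(l_i-l_i^{*})+\bigl(\sum_{ij}c_{ij}r_{ij}-\sum_{ij}c_{ij}r_{ij}^{*}\bigr)$ is correct provided you interpolate in the space of \emph{flows} $r$ rather than fractions $\rho$ (in the multiple-hop model $r$ depends nonlinearly on $\rho$, so $l_i(t)$ is not affine along $\rho+t(\rho^{*}-\rho)$; along $r+t(r^{*}-r)$ it is), and the weak-duality bound on the communication difference is also correct. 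The problem is the step you yourself flag as delicate, and it is not a loose end but a genuine gap: the chaining argument bounds $\Psi_i-\Psi_j$ only when $i$ and $j$ are joined by a directed path of arcs carrying flow in $\rho$. For servers in different components, the only available constraint on the potentials is feasibility, $\bar\pi_i-\bar\pi_j\le c_{ji}$, which points the wrong way; combined with $g_i(l_i)-g_j(l_j)\le c_{ij}+\Delta$ it yields only $\Psi_i-\Psi_j\le c_{ij}+c_{ji}+\Delta$, so the spread of $\Psi$ picks up raw latencies and is not bounded by $m\Delta$. Already for $m=2$ with no flow in $\rho$ (so the support is empty and tightness constrains nothing), the optimal potentials are determined only up to the choice of $\bar\pi_2-\bar\pi_1\in[-c_{21},c_{12}]$, and an unlucky choice (e.g.\ $\bar\pi_1=\bar\pi_2$) gives $\Psi_1-\Psi_2\le c_{12}+\Delta$, which does not imply the theorem. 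So the proof as written does not go through.

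The gap is repairable within your framework, but by a different device than path-chaining: instead of exact optimal potentials, take the \emph{approximate} potentials $\pi_i=-g_i(l_i)$. The defect of Inequality~\ref{in:linProgram3} says exactly that $\pi_j-\pi_i\le c_{ij}+\Delta$ for every pair (so $\pi$ is $\Delta$-feasible), and Inequality~\ref{in:linProgram4} says the reduced cost $c_{ij}-(\pi_j-\pi_i)$ is nonpositive on every arc used by $\rho$. Running your weak-duality computation with these $\pi$ gives $\sum_{ij}c_{ij}r_{ij}\le\sum_i\pi_i(l_i-n_i)$ and $\sum_{ij}c_{ij}r_{ij}^{*}\ge\sum_i\pi_i(l_i^{*}-n_i)-\Delta\sum_{ij}r_{ij}^{*}$, so that $\Psi_i=g_i(l_i)+\pi_i\equiv 0$, the transport sum vanishes identically, and the whole error collapses to $\Delta\sum_{ij}r_{ij}^{*}\le\Delta\,\ltot\,m$ (the last step being the same ``each request is relayed at most $m$ times'' fact the paper uses). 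This avoids any appeal to connectivity of $\rho$'s flow graph. As submitted, however, your proof is missing this (or an equivalent) resolution of the disconnected case, so it is incomplete.
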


\begin{proof}
Let $I$ be the problem instance. Let $\tilde{I}$ be a following instance: initial loads $n_i$ in $\tilde{I}$ are the same as in $I$; communication delays $c_{ij}$ are increased by $\Delta_{ij}$ ($\tilde{c_{ij}} := c_{ij} + \Delta_{ij}$). Let $\tilde{\rho^*}$ be the optimal solutions for $\tilde{I}$ in the multiple-hop model.

By Lemma~\ref{lemma:linProg}, loads of servers in $\rho$ are the same as in $\tilde{\rho^*}$, as $\rho$ satisfies all inequalities for $\tilde{I}$. 
% First, we show that $\tilde{\rho^*}$ in $\tilde{I}$ may not have smaller total communication delay than $\rho$ in $I$. 
Let $c^{*}$ and $c$ denote the total communication delay of $\tilde{\rho^*}$ in $\tilde{I}$ and $\rho$ in $I$, respectively.
First, we show that $c^{*} \geq c$.

For the sake of contradiction, assume that $c^{*} < c$. We take the solution $\tilde{\rho^*}$ in $\tilde{I}$ and modify $\tilde{I}$ by decreasing each latency $\tilde{c_{ij}}$ by $\Delta_{ij}$. We obtain instance $I$.
During the process, we decreased (or did not change) communication delay over every link, and so we decreased (or did not change) the total communication delay. Thus, in $I$, $\tilde{\rho^*}$ has smaller communication delay than $\rho$. This contradicts the thesis assumption that $\rho$ had in $I$ the optimal network flow. 

As $\tilde{I}$ has the same initial loads and not greater communication delay,
\begin{align*}
\sum{C_i}(\rho, I) \leq \sum{C_i}(\tilde{\rho^*}, \tilde{I}) \textrm{.}
\end{align*}
Based on Proposition~\ref{prop:multiSingleHopEquiv}, the same result holds if $\rho$ is the solution in the single-hop model satisfying the triangle inequality.

%Let us once again take a look at the process of moving from $\tilde{I}$ to $I$. In such a process we decreased the communication delay of every link by at most $\Delta$. Also, 

We use a similar analysis to bound the processing time.
In the multiple-hop model, if the network flow is optimal, then every request can be relayed at most $m$ times. 
Thus, any solution transfers at most $\ltot m$ load. 
Thus, by increasing latencies from $I$ to $\tilde{I}$ we increase the total communication delay of a solution by at most $\ltot m \Delta$. Taking the optimal solution $\rho^*$, we get:
\begin{align*}
\sum{C_i}(\rho^{*}, \tilde{I}) \leq l_{\mathit{tot}} m \Delta + \sum{C_i}(\rho^{*}, I) \textrm{.}
\end{align*}

As $\sum{C_i}(\tilde{\rho^*}, \tilde{I}) \leq \sum{C_i}(\rho^{*}, \tilde{I})$, by combining the two inequalities we get:
\begin{align*}
\sum{C_i}(\rho, I) \leq \sum{C_i}(\tilde{\rho^*}, \tilde{I}) \leq \sum{C_i}(\rho^{*}, \tilde{I}) \leq l_{\mathit{tot}} m \Delta + \sum{C_i}(\rho^{*}, I) \textrm{.}
\end{align*}
\end{proof}

The above estimations allow us to construct an approximation algorithm (Algorithm~\ref{alg:multipleHopApprox}). The lines~\ref{algline::init1}~to~\ref{algline::init2} initialize the variables. In line~\ref{algline::finiteSol} we build any finite solution (any solution for which the load $l_i$ on the $i$-th server does not exceed $\lmaxi$). Next in the while loop in line~\ref{algline::whileLoop} we iteratively improve the solution. In each iteration we find the pair $(i, j)$ with the maximal value of $\Delta_{ij}$. Next we balance the servers $i$ and $j$ in line~\ref{algline::adjust1}. Afterwards, it might be possible that the current solution does not satisfy Inequality~\ref{in:linProgram4}. In the lines~\ref{algline::topSort}~to~\ref{algline::adjust2} we fix the solution so that Inequality~\ref{in:linProgram4} holds.

\begin{algorithm}[th!]
   \footnotesize
   \SetKwInput{KwNotation}{Notation}
   \SetKwFunction{Improve}{Improve}
   \SetKwFunction{Main}{Main}
   \SetKwFunction{Adjust}{Adjust}
   \SetKwFunction{AdjustBack}{AdjustBack}
   \SetKwFunction{PriorityQueue}{PriorityQueue}
   \SetKwFunction{OptimizeNetwork}{OptimizeNetworkFlow}
   \SetKwFunction{BuildAnyFiniteSolution}{BuildAnyFiniteSolution}
   \SetKwFunction{OptimizeNetworkA}{\textbf{OptimizeNetworkFlow}}
   \SetKwBlock{Block}
   \SetAlCapFnt{\footnotesize}
   \KwNotation{\\
	$\pmb{e}$ --- the required absolute error of the algorithm.\\
	$\pmb{c_{ij}}$ --- the communication delay between $i$-th and $j$-th server.\\
	$\pmb{l}[i]$ --- the load of the $i$-th server in a current solution.\\
	$\pmb{r}$$[i, j]$ --- the number of requests relayed between $i$-th and $j$-th server in a current solution.\\
	\textbf{\OptimizeNetworkA}($\rho$, $\langle c_{ij} \rangle$) --- builds an optimal network flow using algorithm of Orlin~\cite{Orlin88afaster}. 
     }
	\hspace{5mm}

	\Adjust{$i, j$}:
	\Block{
		$\Delta r \leftarrow \mathrm{argmin}_{\Delta r} \left((l_i - \Delta r)f_{i}(l_i - \Delta r) + (l_j + \Delta r)f_{j}(l_j + \Delta r) + \Delta r c_{ij} \right)$\;
		$l[i] \leftarrow l[i] - \Delta r$\;
		$l[j] \leftarrow l[j] + \Delta r$\;
		$r[i, j] \leftarrow r[i, j] + \Delta r$\;
	}

	\Improve{$i, j$}:
	\Block{
		\Adjust($i, j$)\nllabel{algline::adjust1}\;
		$\mathit{servers} \leftarrow$ sort servers topologically according to the order $\prec$: $i \prec j \iff \rho_{ij} > 0$\nllabel{algline::topSort}\;
		\SetKw{KwTo}{in}
		\For{$\ell$ \KwTo $\mathit{servers}$}
		{
			\SetKw{KwTo}{to}
			\For{$k \leftarrow 1$ \KwTo $m$}
			{
				\SetKw{KwTo}{and}
				\If{$r[k, \ell] > 0$ \KwTo $f_{\ell}(l^{*}_{\ell}) + l^{*}_{\ell}f'_{\ell}(l^{*}_{\ell}) + c_{k\ell} > f_{k}(l^{*}_k) + l^{*}_kf'_{k}(l^{*}_k)$}
				{
					\SetKw{KwTo}{to}
					\AdjustBack($\ell, k$)\nllabel{algline::adjust2}\;
				}
			}
		}
	}

	\Main{$\langle c_{ij} \rangle$, $\langle n_{i} \rangle$, $\langle s_{i} \rangle$}:
	\Block{
		\For{$i \leftarrow 1$ \KwTo $m$ \nllabel{algline::init1}}
		{
			$l[i] \leftarrow n_i$\;
			\For{$j \leftarrow 1$ \KwTo $m$}
			{
				$r[i, j] \leftarrow 0$\;
			}
			$r[i, i] \leftarrow n_i$\nllabel{algline::init2}\;
		}
		\BuildAnyFiniteSolution{}  \nllabel{algline::finiteSol}\;
		\OptimizeNetwork{$r$, $\langle c_{ij} \rangle$}\;
		$(i, j) \leftarrow \mathrm{argmax}_{(i, j)}\Delta_{ij}$\;
		% \tcc{in case of single-hop model we can use $\Delta_{ij} > \frac{\epsilon}{\ltot}$ in condition below}
		\While{$\Delta_{ij} > \frac{e}{\ltot m}$\nllabel{algline::whileLoop}} 
		{
			$(i, j) \leftarrow \mathrm{argmax}_{(i, j)}\Delta_{ij}$\;
			\Improve{$i$, $j$};
		}
		\OptimizeNetwork{$r$, $\langle c_{ij} \rangle$}\;
	}

   \caption{The approximation algorithm for multiple-hop model.}
   \label{alg:multipleHopApprox}
\end{algorithm}

The following Theorem shows that Algorithm~\ref{alg:multipleHopApprox} achieves an arbitrary small absolute error $e$.
%The relative error is the ratio of the absolute error $e$ of the algorithm's result $\rho$
%(the difference in $\sum{C_i}$ between the solution returned by the algorithm and the optimal solution) 
%to the optimal $\sum{C_i}^*$, $e_r = ( \sum{C_i}(\rho) - \sum{C_i}^* ) / \sum{C_i}^*$.

\begin{theorem}\label{thm:approxQuality}
Let $e_d$ be the desired absolute error for Algorithm~\ref{alg:multipleHopApprox}, and let $e_i$ be the initial error. In the multiple-hop model Algorithm~\ref{alg:multipleHopApprox} decreases the absolute error from $e_i$ to $e_d$ in time $O(\frac{\ltot^2m^4e_i(U_1 + \lmax U_2)}{e_d^2})$.
\end{theorem}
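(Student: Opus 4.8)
The plan is to treat the total processing time $\sum C_i$ as a potential and run an amortized argument: I lower-bound the decrease of $\sum C_i$ produced by one iteration of the \texttt{while} loop, upper-bound the total decrease available, and multiply the resulting iteration count by the per-iteration cost. Since every configuration visited satisfies $\sum C_i \geq \sum C_i(\rho^{*})$ and the loop starts from a solution of error $e_i$, the cumulative decrease over all iterations is at most $e_i$. Hence the number of iterations is at most $e_i$ divided by the minimum decrease guaranteed per iteration.

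First I would establish that decrease. Consider one \texttt{Improve}$(i,j)$ call where the chosen pair satisfies $\Delta_{ij} > e_d/(\ltot m)$. Writing $h_i(l) = l f_i(l)$, the function minimized by \texttt{Adjust} is $F(\Delta r) = h_i(l_i - \Delta r) + h_j(l_j + \Delta r) + \Delta r\, c_{ij}$, as in Lemma~\ref{lemma:linProg2}. By the definition of $\Delta_{ij}$ (with the $\max(0,\cdot)$ active, since $\Delta_{ij}>0$) we have $F'(0) = -\Delta_{ij}$. Because each $f_i$ is non-decreasing and convex, $h_i''(l) = 2 f_i'(l) + l f_i''(l) \leq 2U_1 + \lmax U_2$, so $F''(\Delta r) = h_i''(l_i-\Delta r) + h_j''(l_j+\Delta r) \leq M$ with $M := 2(2U_1 + \lmax U_2) = O(U_1 + \lmax U_2)$. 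A second-order estimate gives $F(\Delta r) \leq F(0) - \Delta_{ij}\,\Delta r + \tfrac{M}{2}\Delta r^2$, and evaluating at $\Delta r = \Delta_{ij}/M$ yields $F(\Delta_{ij}/M) \leq F(0) - \Delta_{ij}^2/(2M)$. As \texttt{Adjust} moves to the true minimizer, $\sum C_i$ drops by at least $\Delta_{ij}^2/(2M) > \tfrac{1}{2M}\big(e_d/(\ltot m)\big)^2$. The subsequent \texttt{AdjustBack} corrections move load opposite to an existing flow edge only when Inequality~\ref{in:linProgram4} is violated; by the convexity argument of Lemma~\ref{lemma:linProg3} (the function $G$ there), each such move is non-increasing for $\sum C_i$, so the whole call decreases $\sum C_i$ by at least $\tfrac{e_d^2}{2M\,\ltot^2 m^2}$.

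Combining the two bounds, the loop runs at most $e_i \big/ \tfrac{e_d^2}{2M\,\ltot^2 m^2} = \tfrac{2M\,\ltot^2 m^2 e_i}{e_d^2} = O\!\left(\tfrac{(U_1+\lmax U_2)\,\ltot^2 m^2 e_i}{e_d^2}\right)$ times. Each iteration costs $O(m^2)$: selecting the pair maximizing $\Delta_{ij}$ scans all pairs in $O(m^2)$; the topological sort of the flow DAG and the double loop issuing \texttt{AdjustBack} are each $O(m^2)$; and the one-dimensional \texttt{Adjust}/\texttt{AdjustBack} subproblems are counted as $O(1)$. Multiplying the iteration count by $O(m^2)$ yields $O\!\left(\tfrac{\ltot^2 m^4 e_i (U_1+\lmax U_2)}{e_d^2}\right)$, the claimed bound; the two single \texttt{OptimizeNetworkFlow} calls run outside the loop and do not affect the asymptotics.

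The hard part will be the boundary case of the second-order estimate: the unconstrained step $\Delta r = \Delta_{ij}/M$ may exceed the load $l_i$ that server $i$ can actually shed, so \texttt{Adjust} stops at $\Delta r = l_i$ and the clean $\Delta_{ij}^2/(2M)$ bound must be re-derived on $[0, l_i]$ via $F(0) - F(l_i) = -\int_0^{l_i} F'(\Delta r)\,d\Delta r \geq l_i\big(\Delta_{ij} - \tfrac{M}{2} l_i\big)$. To keep this quantity above $\tfrac{e_d^2}{2M\,\ltot^2 m^2}$ I would invoke the efficient $\epsilon$-load processing assumption ($f_i(0) < c_{ij} + f_j(0)$), which forces $\Delta_{ij}$ to be negative as $l_i \to 0$ and therefore rules out a large positive $\Delta_{ij}$ coexisting with a vanishing $l_i$. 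The remaining care points are confirming that \texttt{AdjustBack} is genuinely non-increasing (so the \texttt{Adjust} gain is never cancelled) and justifying the $O(1)$ accounting of the one-dimensional balancing subproblems.
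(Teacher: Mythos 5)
Your core counting argument is the same as the paper's: treat $\sum C_i$ as a potential, show each \texttt{Improve} call with $\Delta_{ij} > e_d/(\ltot m)$ drops it by $\Omega\bigl(\Delta_{ij}^2/(U_1+\lmax U_2)\bigr)$ via a bound on $F''$, divide $e_i$ by that quantity, and multiply by the $O(m^2)$ per-iteration cost. Your second-order Taylor estimate at $\Delta r = \Delta_{ij}/M$ is a slightly cleaner (and constant-factor better) route to the per-step gain than the paper's argument that $F'$ stays below $-\Delta_{ij}/2$ on $[0,\Delta_{ij}/(2M)]$, and your treatment of the insufficient-load boundary case via the efficient $\epsilon$-load assumption matches the paper's. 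So the quantitative half of the proof is essentially reproduced.

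However, there is a genuine gap on the correctness side: your argument bounds how long the while loop can run, but never establishes that when it exits the absolute error is actually at most $e_d$. The theorem claims the error is \emph{decreased to} $e_d$, and the loop's exit condition only guarantees $\Delta_{ij} \leq e_d/(\ltot m)$ for all pairs; converting that into an error bound requires Theorem~\ref{thm:approxBound}, whose hypotheses are that the current solution satisfies Inequalities~\ref{in:linProgram1},~\ref{in:linProgram2} and~\ref{in:linProgram4} and has an optimal network flow (hence the final \texttt{OptimizeNetworkFlow} call is not just a cost footnote but a correctness step). This in turn forces you to prove the invariant that Inequality~\ref{in:linProgram4} holds after every \texttt{Improve} --- which is precisely what the cascade of \texttt{AdjustBack} calls over the topologically sorted servers accomplishes --- and to prove that the flow graph stays acyclic so that the topological sort in \texttt{Improve} is even well defined (the paper does this by chaining Inequality~\ref{in:linProgram4} around a hypothetical cycle to derive a contradiction). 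You implicitly assume both: you speak of ``the flow DAG'' without showing it is a DAG, and you use \texttt{AdjustBack}'s monotonicity only to protect the per-step gain, not to restore the invariant needed for Theorem~\ref{thm:approxBound}. Without these two pieces the time bound you derive is a bound on something the algorithm does, but not a proof that it reaches error $e_d$.
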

\begin{proof}
Let $l_i$ and $l_j$ be the loads of the servers $i$ and $j$ before the invocation of the \texttt{Adjust} function in line~\ref{algline::adjust1} of Algorithm~\ref{alg:multipleHopApprox}.
Let $\Delta_{ij}$ quantify how much Inequality~\ref{in:linProgram3} is not satisfied, $\Delta_{ij} =  f_{i}(l_i) + l_if'_{i}(l_i) - f_{j}(l_j) - l_jf'_{j}(l_j) - c_{ij}$.
As in proof of Lemma~\ref{lemma:linProg2}, consider a function $F(\Delta r)$ that quantifies $i$'s and $j$'s contribution to $\sum{C_i}$ if $\Delta r$ requests of are additionally send from $i$ to $j$:
\begin{align*}
F(\Delta r) = (l_i - \Delta r)f_{i}(l_i - \Delta r) + (l_j + \Delta r)f_{j}(l_j + \Delta r) + \Delta r c_{ij} \text{.}
\end{align*}
As previously, the derivative of $F$ is:
\begin{align*}
F'(\Delta r) = -f_{i}(l_i - \Delta r) - (l_i - \Delta r)f'_{i}(l_i - \Delta r)  + f_{j}(l_j + \Delta r) + (l_j + \Delta r)f'_{j}(l_j + \Delta r) + c_{ij} \text{.}
\end{align*}
Thus, $F'(0) = -\Delta_{ij}$.
The second derivative of $F$ is equal to:
\begin{align*}
F''(\Delta r) = 2f'_{i}(l_i - \Delta r) + (l_i - \Delta r)f''_{i}(l_i - \Delta r) + 2f'_{j}(l_j + \Delta r) + (l_j + \Delta r)f''_{j}(l_j + \Delta r) \text{.}
\end{align*}
The second derivative is bounded by:
\begin{align}
|F''(\Delta r)| \leq 4U_1 + 2\lmax U_2 \text{.} \label{eq:der2-upperbounded}
\end{align}
For any function $f$ with a derivative $f'$ bounded on range $[x_0, x]$ by a constant $f'_{\max}$, the value $f(x)$ is upper-bounded by:
\begin{equation}
f(x) \leq f(x_0) + (x - x_0) f'_{\max} \text{.} \label{eq:der-upperbound}
\end{equation}
Using this fact, we upper-bound the first derivative by:
\begin{align*}
F'(\Delta r) \leq F'(0) + \Delta r(4U_1 + 2\lmax U_2).
\end{align*}
We use a particular value of the load difference $\Delta r_0 = \frac{\Delta_{ij}}{8U_1 + 4\lmax U_2}$, getting that for $\Delta r \leq \Delta r_0$, we have: 
% As the second derivative is positive, the first derivative is increasing; thus, for $\Delta r \in [0, \Delta r_0]$, the first derivative can be bounded by:
\begin{align*}
F'(\Delta r) & \leq F'(0) + \Delta r(4U_1 + 2\lmax U_2) \\
  & \leq F'(0) + \Delta r_0(4U_1 + 2\lmax U_2) \\ 
  & \leq -\Delta_{ij} + \frac{\Delta_{ij}}{8U_1 + 4\lmax U_2} \cdot (4U_1 + 2\lmax U_2) \leq -\frac{1}{2}\Delta_{ij} \textrm{.}
\end{align*}

We can use Inequality~\ref{eq:der-upperbound} for a function $F$ to lower-bound the reduction in $\sum{C_i}$ for $\Delta r_0$ as $F(0) - F(\Delta r_0)$:
\begin{align*}
F(0) - F(\Delta r_0) \geq \frac{1}{2}\Delta_{ij}|r_0 - 0| = \frac{\Delta_{ij}}{8U_1 + 4\lmax U_2} \cdot \frac{1}{2}\Delta_{ij} = \frac{\Delta_{ij}^2}{16U_1 + 8\lmax U_2} \textrm{.}
\end{align*}

% From the above inequality, and from the fact that $F'(0) = -\Delta_{ij}$ we infer that on interval:
% \begin{align*}
% \Delta r \in [O, x] = [0, \frac{\Delta_{ij}}{8U_1 + 4\lmax U_2}]
% \end{align*}
% the derivative $F'$ can be bounded by:
% \begin{align*}
% F'(\Delta r) \leq F'(0) + (4U_1 + 2\lmax U_2)x = -\Delta_{ij} + \frac{\Delta_{ij}}{8U_1 + 4\lmax U_2} \cdot (4U_1 + 2\lmax U_2) \leq -\frac{1}{2}\Delta_{ij} \textrm{.}
% \end{align*}
% Thus, we get:
% \begin{align*}
% F(0) - F\left(\frac{\Delta_{ij}}{8U_1 + 4\lmax U_2}\right) \geq \frac{\Delta_{ij}}{8U_1 + 4\lmax U_2} \cdot \frac{1}{2}\Delta_{ij} = \frac{\Delta_{ij}^2}{16U_1 + 8\lmax U_2} \textrm{.}
% \end{align*}

To conclude that \texttt{Adjust} function invoked in line~\ref{algline::adjust1} reduces the total processing time by at least $\frac{\Delta_{ij}^2}{16U_1 + 8\lmax U_2}$, we still need ensure that the server $i$ has enough (at least $\Delta r_0 = \frac{\Delta_{ij}}{8U_1 + 4\lmax U_2}$) load to be transferred to $j$. However we recall that the value of $F'$ in $\Delta r_0$ is negative, $F'(\Delta r_0) < -\frac{1}{2}\Delta_{ij} < 0$. This means that after transferring $\Delta r_0$ requests, sending more requests from $i$ to $j$ further reduces $\sum{C_i}$.
Thus, if $i$'s load would be lower than $\Delta r_0$, this would contradict the efficient $\epsilon$-load processing assumption.

% If $i$ had only $\epsilon \to 0$ load then it would no longer be profitable (see the assumptions about load functions in Section~\ref{sec:mathematical-model}).

Also, every invocation of \texttt{AdjustBack} decreases the total completion time $\sum{C_i}$. Thus,  after invocation of \texttt{Improve} the total completion time $\sum{C_i}$ is decreased by at least $\frac{\Delta_{ij}^2}{16U_1 + 8\lmax U_2}$.

Each invocation of \texttt{Improve} preserves the following invariant: in the current solution Inequalities~\ref{in:linProgram1},~\ref{in:linProgram2}~and~\ref{in:linProgram4} are satisfied. It is easy to see that Inequalities~\ref{in:linProgram1},~\ref{in:linProgram2} are satisfied. We will show that Inequality~\ref{in:linProgram4} holds too. Indeed, this is accomplished by a series of invocations of \texttt{AdjustBack} in line~\ref{algline::adjust2}. Indeed, from the proof of Lemma~\ref{lemma:linProg3}, after invocation of the \texttt{Adjust} function for the servers $i$ and $j$, these servers satisfy Inequality~\ref{in:linProgram4}. 

We also need to prove that the servers can be topologically sorted in line~\ref{algline::topSort}, that is that there is no such sequence of servers $i_1, \dots, i_k$ that $r_{i_{j}i_{j+1}} > 0$ and $r_{i_{k}i_{1}} > 0$. For the sake of contradiction let us assume that there exists such a sequence. Let us consider the first invocation of \texttt{Adjust} in line~\ref{algline::adjust1} that creates such a sequence. Without loss of generality let us assume that such \texttt{Adjust} was invoked for the servers $i_k$ and $i_1$. This means that before this invocation $\Delta_{i_ki_1} > 0$, and so
$f_{{i_k}}(l_{i_k}) + l_{i_k}f'_{{i_k}}(l_{i_k}) > f_{{i_1}}(l_{i_1}) + l_{i_1}f'_{{i_1}}(l_{i_1})$.
Since the invariant was satisfied before entering \texttt{Adjust} and since $r_{i_{j}i_{j+1}} > 0$, from Inequality~\ref{in:linProgram4} we infer that
$f_{{i_{j+1}}}(l_{i_{j+1}}) + l_{i_{j+1}}f'_{{i_{j+1}}}(l_{i_{j+1}}) + c_{i_{j},i_{j+1}} \leq f_{{i_{j}}}(l_{i_{j}}) + l_{i_{j}}f'_{{i_{j}}}(l_{i_{j}})$, and so that
$f_{{i_{j+1}}}(l_{i_{j+1}}) + l_{i_{j+1}}f'_{{i_{j+1}}}(l_{i_{j+1}}) \leq f_{{i_{j}}}(l_{i_{j}}) + l_{i_{j}}f'_{{i_{j}}}(l_{i_{j}})$. Thus, we get contradiction:
\begin{align*}
f_{{i_{1}}}(l_{i_{1}}) + l_{i_{1}}f'_{{i_{1}}}(l_{i_{1}}) \geq
f_{{i_{2}}}(l_{i_{2}}) + l_{i_{2}}f'_{{i_{2}}}(l_{i_{2}}) \geq
\dots \geq
f_{{i_{k}}}(l_{i_{k}}) + l_{i_{k}}f'_{{i_{k}}}(l_{i_{k}}) \geq
f_{{i_{1}}}(l_{i_{1}}) + l_{i_{1}}f'_{{i_{1}}}(l_{i_{1}}) \textrm{.}
\end{align*}
Which proves that the invariant is true.

If the algorithm finishes, then $\Delta < \frac{e_d}{\ltot m}$. After performing the last step of the algorithm the network flow is optimized and we can use Theorem~\ref{thm:approxBound} to infer that the error is at most $e_d$.

We estimate the number of iterations to decrease the absolute error from $e_i$ to $e_{d}$. To this end, we estimated the decrease of the error after a single iteration of the while loop in line~\ref{algline::whileLoop}.
The algorithm continues the last loop only when $\Delta \geq \frac{e_d}{\ltot m}$.
Thus, after a single iteration of the loop the error decreases by at least $\frac{\Delta_{ij}^2}{16U_1 + 8\lmax U_2} \geq \frac{e_d^2}{\ltot^2m^2(16U_1 + 8\lmax U_2)}$.
Thus, after $O(\frac{\ltot^2m^2e_i(U_1 + \lmax U_2)}{e_d^2})$ iterations the error decreases to 0. Since every iteration of the loop has complexity $O(m^2)$, we get the thesis.
\end{proof}

Using a bound from Theorem~\ref{thm:approxBound} corresponding to the single-hop model we get the following analogous results.

\begin{corollary}
If the communication delays satisfy the triangle inequality then Algorithm~\ref{alg:multipleHopApprox} for the single-hop model decreases the absolute error from $e_i$ to $e_d$ in time $O(\frac{\ltot^2m^4 e_i(U_1 + \lmax U_2)}{e_d^2})$.
\end{corollary}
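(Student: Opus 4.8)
The plan is to reduce this corollary to the multiple-hop running time already established in Theorem~\ref{thm:approxQuality}, using the model-equivalence of Proposition~\ref{prop:multiSingleHopEquiv} together with the single-hop version of the approximation bound in Theorem~\ref{thm:approxBound}. First I would observe that, because the communication delays satisfy the triangle inequality, Proposition~\ref{prop:multiSingleHopEquiv} guarantees that the single-hop and the multiple-hop models coincide: the optimal loads $l^{*}_i$ are the same in both models, and a solution is feasible/optimal in one exactly when it is in the other. Hence running Algorithm~\ref{alg:multipleHopApprox} on a single-hop instance with the triangle inequality is, for the purpose of the error analysis, the same as running it on the corresponding multiple-hop instance.

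Next I would revisit the proof of Theorem~\ref{thm:approxQuality} and isolate the single place where the choice of model actually enters. The per-step decrease of $\sum C_i$ — at least $\frac{\Delta_{ij}^2}{16U_1 + 8\lmax U_2}$ — comes purely from the local analysis of $F(\Delta r)$ for the chosen pair $(i,j)$, and is model-independent: it only concerns moving $\Delta r$ load between two servers and the bound $|F''(\Delta r)| \leq 4U_1 + 2\lmax U_2$. Likewise, the preservation of Inequality~\ref{in:linProgram4} through the \texttt{AdjustBack} calls, the existence of a topological order used in line~\ref{algline::whileLoop}, and the $O(m^2)$ cost charged to each iteration are all unchanged. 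The only model-sensitive ingredient is the translation of the stopping criterion $\Delta < \frac{e_d}{\ltot m}$ into a final error guarantee, which is supplied by Theorem~\ref{thm:approxBound}; and that theorem already states the bound $e \leq \ltot m \Delta$ explicitly for the single-hop model satisfying the triangle inequality.

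I would then assemble these pieces exactly as in Theorem~\ref{thm:approxQuality}: each iteration of the while loop reduces the error by at least $\frac{e_d^2}{\ltot^2 m^2 (16 U_1 + 8 \lmax U_2)}$, so $O\!\left(\frac{\ltot^2 m^2 e_i (U_1 + \lmax U_2)}{e_d^2}\right)$ iterations suffice to drive the error from $e_i$ down to $e_d$, and multiplying by the $O(m^2)$ per-iteration cost yields the claimed $O\!\left(\frac{\ltot^2 m^4 e_i (U_1 + \lmax U_2)}{e_d^2}\right)$ bound. The only point deserving care — and the step I would treat as the main (though minor) obstacle — is confirming that Theorem~\ref{thm:approxBound} really certifies the termination condition for the single-hop model under the triangle inequality with the same constant $\ltot m$, so that the threshold $\frac{e_d}{\ltot m}$ in the while loop remains correct. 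Once that is granted, the remainder of the argument is a verbatim transcription of the multiple-hop proof, and the single-hop time bound follows immediately.
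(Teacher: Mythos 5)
Your proposal is correct and matches the paper's intended argument: the paper derives this corollary in exactly the same way, by invoking the single-hop (triangle-inequality) case of Theorem~\ref{thm:approxBound} to certify the stopping criterion and reusing the iteration analysis of Theorem~\ref{thm:approxQuality} unchanged. Your more explicit accounting of which steps are model-independent is a faithful elaboration of what the paper leaves implicit.
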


For the relative (to the total load) errors $e_{i,r} = \frac{e_i}{\ltot}$, and $e_{d,r} = \frac{e_d}{\ltot}$, Algorithm~\ref{alg:multipleHopApprox} decreases $e_{i,r}$ to $e_{d,r}$ in time $O(\frac{\ltot(U_1 + \lmax U_2)e_{i, r}}{e_{d, r}^2}m^4)$. Thus, we get the shortest runtime if $\ltot$ is large and $e_{i, r}$ is small. If the initial error $e_{i, r}$ is large we can use a modified algorithm that performs \texttt{OptimizeNetworkFlow} in every iteration of the last ``while'' loop (line~\ref{algline::whileLoop}). Using a similar analysis as before we get the following bound.

\begin{theorem}
The modified Algorithm~\ref{alg:multipleHopApprox} that performs \texttt{OptimizeNetworkFlow} in every iteration of the last ``while'' loop (line~\ref{algline::whileLoop}) decreases the relative error $e_{i,r}$ by a multiplicative constant factor in time $O(\frac{\ltot m^5 \log m(U_1 + \lmax U_2)}{e_{i, r}})$.
\end{theorem}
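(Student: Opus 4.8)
The plan is to reuse the per-iteration machinery already established inside the proof of Theorem~\ref{thm:approxQuality}, but to exploit the new structural fact that, because \texttt{OptimizeNetworkFlow} is now called in every pass of the \texttt{while} loop, the solution \emph{entering} each iteration has an optimal network flow. This lets me apply Theorem~\ref{thm:approxBound} to the current (not merely the final) solution and obtain a self-referential bound linking the current error to the per-step progress. First I would verify that the loop invariant still holds: each \texttt{Improve} leaves Inequalities~\ref{in:linProgram0},~\ref{in:linProgram1},~\ref{in:linProgram2}~and~\ref{in:linProgram4} satisfied (exactly as in Theorem~\ref{thm:approxQuality}), and the subsequent \texttt{OptimizeNetworkFlow} leaves every load $l_i$ unchanged---hence the marginal quantities $f_i(l_i)+l_if'_i(l_i)$ and all of those inequalities are unchanged---while restoring an optimal network flow. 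Consequently, at the start of each iteration the hypotheses of Theorem~\ref{thm:approxBound} hold, so with $e$ the current absolute error and $\Delta=\max_{ij}\Delta_{ij}$ we have $e\le \ltot m\Delta$, i.e. $\Delta \ge e/(\ltot m)$.

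Next I would combine this with the progress estimate already proved in Theorem~\ref{thm:approxQuality}: a single \texttt{Improve} on the pair attaining $\Delta$ decreases $\sum C_i$ by at least $\frac{\Delta^2}{16U_1+8\lmax U_2}$. Substituting $\Delta\ge e/(\ltot m)$ yields, for the error $e_t$ at iteration $t$, the recurrence
\begin{align*}
e_{t+1}\ \le\ e_t-\frac{e_t^{2}}{\ltot^{2}m^{2}(16U_1+8\lmax U_2)}\textrm{.}
\end{align*}
To show the error is reduced by a constant factor, I would observe that while $e_t\ge e_i/2$ each step removes at least $\frac{(e_i/2)^2}{\ltot^{2}m^{2}(16U_1+8\lmax U_2)}$, and since the total reduction needed to reach $e_i/2$ is only $e_i/2$, the number of iterations is $O\!\big(\ltot^{2}m^{2}(U_1+\lmax U_2)/e_i\big)$. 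Rewriting in terms of the relative error $e_{i,r}=e_i/\ltot$ turns this into $O\!\big(\ltot m^{2}(U_1+\lmax U_2)/e_{i,r}\big)$ iterations; the identical computation shows that any fixed multiplicative reduction (not just one half) costs a constant multiple of the same quantity.

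Finally I would bound the per-iteration cost. The topological sort together with the \texttt{AdjustBack} sweep costs $O(m^2)$ as in Theorem~\ref{thm:approxQuality}, and recomputing $\mathrm{argmax}_{ij}\Delta_{ij}$ also costs $O(m^2)$, but each iteration now additionally runs \texttt{OptimizeNetworkFlow}, whose cost $O(m^{3}\log m)$ (Orlin~\cite{Orlin88afaster}) dominates. Multiplying the iteration count by this per-iteration cost gives $O\!\big(\ltot m^{5}\log m\,(U_1+\lmax U_2)/e_{i,r}\big)$, as claimed. The main obstacle I anticipate is the very first step: making sure that interleaving \texttt{OptimizeNetworkFlow} does not violate Inequality~\ref{in:linProgram4}, so that Theorem~\ref{thm:approxBound} legitimately applies at the \emph{beginning} of each iteration, since re-routing to a minimum-cost flow can change which relay fractions are positive. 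I expect this to go through precisely because the loads, and therefore the marginal quantities compared in Inequality~\ref{in:linProgram4}, are invariant under \texttt{OptimizeNetworkFlow}; but this invariance under re-routing is the point that must be checked rather than assumed, and it is what makes the geometric ($1/e_{i,r}$) convergence qualitatively stronger than the $1/e_d^{2}$ bound of Theorem~\ref{thm:approxQuality}.
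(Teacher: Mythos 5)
Your proposal follows essentially the same route as the paper's proof: apply Theorem~\ref{thm:approxBound} at the start of each iteration (now legitimate because \texttt{OptimizeNetworkFlow} restores an optimal flow) to get $\Delta \geq e/(\ltot m)$, combine with the per-iteration decrease $\Delta^2/(16U_1+8\lmax U_2)$ from Theorem~\ref{thm:approxQuality} to obtain $O(\ltot m^2(U_1+\lmax U_2)/e_{i,r})$ iterations per constant-factor reduction, and multiply by the dominating $O(m^3\log m)$ cost of Orlin's algorithm. The one subtlety you flag---whether Inequality~\ref{in:linProgram4} survives the re-routing, since min-cost re-routing can change which $\rho_{ij}$ are positive while leaving the loads fixed---is also left implicit in the paper, so your treatment is, if anything, slightly more careful on the same argument.
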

\begin{proof}
The analysis is similar as in the proof of Theorem~\ref{thm:approxQuality}. Here however at the beginning of each loop the network flow is optimized. If the absolute error before the loop is equal to $e$, then from Theorem~\ref{thm:approxBound} we infer that $\Delta \geq \frac{e}{\ltot m}$. Thus, after a single iteration of the loop the error decreases by $\frac{\Delta^2}{16U_1 + 8\lmax U_2} \geq \frac{e^2}{\ltot^2m^2(16U_1 + 8\lmax U_2)}$, and so by the factor of:
\begin{align*}
\left(e - \frac{e^2}{\ltot^2m^2(16U_1 + 8\lmax U_2)}\right)/e = \left(1 - \frac{e}{\ltot^2m^2(16U_1 + 8\lmax U_2)}\right) \text{.}
\end{align*}
Taking the relative error $e_{i,r}$ as $\frac{e}{\ltot}$ we get that every iteration decreases the relative error by a constant factor $\left(1 - \frac{e_{i, r}}{\ltot m^2(16U_1 + 8\lmax U_2)}\right)$.
Thus, after $O(\frac{\ltot m^2(U_1 + \lmax U_2)}{e_{i, r}})$ iterations the error decreases by a constant factor. Since the complexity of every iteration of the loop is dominated by the algorithm optimizing the network flow (which has complexity $O(m^3\log m)$), we get the thesis.
\end{proof}

Algorithm~\ref{alg:multipleHopApprox} is any-time algorithm. We can stop it at any time and get a so-far optimized solution.

\section{Distributed algorithm}\label{sec:distr-algor}

\begin{algorithm}[t!]
  \SetKwInOut{Input}{input}
  \Input{$(i, j)$ -- the identifiers of the two servers}
  \KwData{$\forall_{k}$ $r_{ki}$ -- initialized to the number of requests owned by $k$ and relayed to $i$ ($\forall_{k}$ $r_{kj}$ is defined analogously)}
  \KwResult{The new values of $r_{ki}$ and $r_{kj}$}
  \ForEach{$k$}{
       $r_{ki} \leftarrow r_{ki} + r_{kj}$; $r_{kj} \leftarrow 0$\; 
  }
  $l_{i} \leftarrow \sum_{k} r_{ki}$ ; $l_{j} \leftarrow 0$ \;

  $servers$ $\leftarrow$ sort $[k]$ so that $c_{kj} - c_{ki} < c_{k'j} - c_{k'i}$ $\implies$ $k$ is before $k'$\;
  \ForEach{$k \in servers$\nllabel{algline::secLoopStart}}
  {
      $\Delta_{\mathit{opt}} r_{ikj} \leftarrow \mathrm{argmin}_{\Delta r}\left( h_i(l_i - \Delta r) +  h_j(l_{j} + \Delta r) - \Delta r c_{ki} + \Delta r c_{kj}\right) $ \nllabel{algline::argmin}\;
      $\Delta r_{ikj} \leftarrow \min\left(\Delta_{\mathit{opt}} r_{ikj}, r_{ki}\right)$ \;
      \If{$\Delta r_{ikj} > 0$}
      {
         $r_{ki} \leftarrow r_{ki} - \Delta r_{ikj}$; $r_{kj} \leftarrow r_{kj} + \Delta r_{ikj}$ \;
         $l_{i} \leftarrow l_{i} - \Delta r_{ikj}$; $l_{j} \leftarrow l_{j} + \Delta r_{ikj}$ \nllabel{algline::secLoopEnd}\;
      }
  }
  \Return{for each $k$: $r_{ki}$ and $r_{kj}$}
  \caption{\textsc{calcBestTransfer}(i, j)}
  \label{alg:exchangingLoads}
\end{algorithm}

\begin{algorithm}[h]
  \SetKwFunction{calcBestTransfer}{calcBestTransfer}
  \SetKwFunction{transfer}{relay}
  \SetKwFunction{random}{random}
  \SetKwInOut{Notation}{Notation}
  partner $\leftarrow$ \random{m}\;
  \transfer(id, partner, \calcBestTransfer{id, partner})\;
  \caption{Min-Error (MinE) algorithm performed by server id.}
  \label{alg:distributedOptimal}
\end{algorithm}

The centralized algorithm requires the information about the whole network. The size of the input data is $O(m^{2})$. A centralized algorithm has thus the following drawbacks:
\begin{inparaenum}[(i)] 
\item collecting information about the whole network is time-consuming; moreover, loads and latencies may frequently change;
\item the central algorithm is more vulnerable to failures. 
Motivated by these limitations we introduce a distributed algorithm for optimizing the query processing time.
\end{inparaenum}

% The distributed algorithm requires that each server has an up-to-date information about the communication delays from itself to the other servers (and not for all pairs of servers). It is easy to measure pair-wise latencies (Section~\ref{sec::introduction}). For each server, the size of the input data is $O(m)$. 

Each server, $i$, keeps for each server, $k$, information about the number of requests that were relayed to $i$ by $k$. The algorithm iteratively improves the solution -- the $i$-th server in each step communicates with a random partner server -- $j$ (Algorithm~\ref{alg:distributedOptimal}). The pair $(i,j)$ locally optimizes the current solution by adjusting, for each $k$, $r_{ki}$ and $r_{kj}$ (Algorithm~\ref{alg:exchangingLoads}). 
In the first loop of the Algorithm~\ref{alg:exchangingLoads}, one of the servers $i$, takes all the requests that were previously assigned to $i$ and to $j$. Next, all the servers $[k]$ are sorted according to the ascending order of $(c_{kj} - c_{ki})$. The lower the value of $(c_{kj} - c_{ki})$, the less communication delay we need to pay for running requests of $k$ on $j$ rather than on $i$. Then, for each $k$, the loads are balanced between servers $i$ and $j$.  Theorem~\ref{lemma::delegationOptimality} shows that Algorithm~\ref{alg:exchangingLoads} 
optimally balances the loads on the servers $i$ and $j$.

% The algorithm requires only two servers in each optimization step, thus it is robust in presence of temporary failures. 

The idea of the algorithm is similar to the diffusive load balancing~\cite{conf/ipps/AdolphsB12, Ackermann:2009:DAQ:1583991.1584046, Berenbrink:2011:DSL:2133036.2133152}; however there are substantial differences related to the fact that the machines are geographically distributed: \begin{inparaenum}[(i)] 
\item In each step no real requests are transferred between the servers; this process can be viewed as a simulation run to calculate the relay fractions $\rho_{ij}$. Once the fractions are calculated the requests are transferred and executed at the appropriate server. 
\item Each pair $(i, j)$ of servers exchanges not only its own requests but the requests of all servers that relayed their requests either to $i$ or to $j$. Since different servers may have different 
communication delays to $i$ and $j$ the local balancing requires more care (Algorithms~\ref{alg:exchangingLoads}~and ~\ref{alg:distributedOptimal}).
\end{inparaenum}

Algorithm~\ref{alg:distributedOptimal} has the following properties: 
\begin{inparaenum}[(i)] 
\item The size of the input data is $O(m)$ for each server---communication latencies from a server to all other servers (and not for all pairs of servers). It is easy to measure these pair-wise latencies (Section~\ref{sec::introduction}). The algorithm is also applicable to the case when we allow the server to relay its requests only to the certain subset of servers (we set the latencies to the servers outside of this subset to infinity).
\item A single optimization step requires only two servers to be available (thus, it is very robust to failures).
\item Any algorithm that in a single step involves only two servers cannot perform better (Theorem~\ref{lemma::delegationOptimality}).
\item The algorithm does not require any requests to be unnecessarily delegated -- once the relay fractions are calculated the requests are sent over the network.
\item In each step of the algorithm we are able to estimate the distance between the current solution and the optimal one (Proposition~\ref{lemma::convergence}).
\end{inparaenum}

\subsection{Optimality}

The following theorem shows the optimality of Algorithm~\ref{alg:exchangingLoads}.

\begin{theorem}\label{lemma::delegationOptimality}
After execution of Algorithm~\ref{alg:exchangingLoads} for the pair of servers $i$ and $j$, $\sum C_{i}$ cannot be further improved by sending the load of any servers between $i$ and $j$ (by adjusting $r_{ki}$ and $r_{kj}$ for any $k$).
\end{theorem}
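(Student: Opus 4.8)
The plan is to recognize that, once the pair $(i,j)$ is fixed and the first loop of Algorithm~\ref{alg:exchangingLoads} has gathered all the relevant load onto $i$, the only quantities we may still change are the splits of each owner $k$'s requests between $i$ and $j$, subject to $r_{ki}+r_{kj}=R_k$ being held constant (where $R_k$ is the total load of $k$ passing through the pair). Writing $h_i(l)=lf_i(l)$ and $h_j(l)=lf_j(l)$ and recalling $c_{ii}=c_{jj}=0$, the part of $\sum C_i$ that depends on these variables is
\[
\Phi=h_i(l_i)+h_j(l_j)+\sum_k\bigl(c_{ki}r_{ki}+c_{kj}r_{kj}\bigr),\qquad l_i=\sum_k r_{ki},\ \ l_j=\sum_k r_{kj}.
\]
I would first observe that $\Phi$ is convex: $h_i''(l)=2f_i'(l)+lf_i''(l)\ge 0$ because $f_i$ is non-decreasing and convex, so $h_i,h_j$ are convex, and the communication terms are linear; the feasible region (a product of segments $r_{ki}+r_{kj}=R_k$, $r_{ki},r_{kj}\ge0$) is convex. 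Hence it suffices to show that the output of the algorithm satisfies the first-order optimality conditions, since for a convex objective over a convex set the absence of a profitable infinitesimal reallocation rules out any profitable finite one.

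Second, I would write those conditions in a convenient form. Introducing $d_k:=c_{kj}-c_{ki}$ and, for the final loads $l_i^f,l_j^f$, the single scalar $T:=h_i'(l_i^f)-h_j'(l_j^f)$ (note $h_i'(l)=f_i(l)+lf_i'(l)$, the same marginal used in Lemmas~\ref{lemma:linProg2}--\ref{lemma:linProg3}), the KKT conditions for each owner $k$ reduce to: if $r_{ki}^f>0$ then $T\le d_k$, and if $r_{kj}^f>0$ then $T\ge d_k$. Intuitively $d_k$ is the extra communication price of placing a unit of $k$ on $j$ rather than $i$, and the optimum is a threshold (``water-filling'') solution: owners with small $d_k$ go to $j$, those with large $d_k$ stay on $i$, and at most the boundary owner is split.

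Third, I would verify that the greedy loop realizes exactly this threshold. The key invariant is that $\tau:=h_i'(l_i)-h_j'(l_j)$ is non-increasing throughout the run, since every step shifts load from $i$ to $j$ and $h_i',h_j'$ are non-decreasing. The argmin in line~\ref{algline::argmin} moves owner $k$ only while doing so is profitable, i.e.\ while $\tau>d_k$, and it either drives $\tau$ down exactly to $d_k$ (the unconstrained ``indifference'' case) or is clipped at the available $r_{ki}$, leaving $k$ entirely on $j$ with $\tau\ge d_k$ afterwards. Combining this with the ascending-$d_k$ processing order and the monotonicity of $\tau$, I would argue: (a) once an owner is left unmoved (because $\tau\le d_k$ on arrival) every later owner is also unmoved, so the run has a clean stopping index $s^{*}$ and $T$ equals the frozen value of $\tau$ from $s^{*}$ on, which satisfies $T\le d_k$ for all untouched owners; (b) after the last executed move $\tau$ is at least the $d_k$ of the last-moved owner, hence at least the $d_k$ of every earlier moved owner, so $T\ge d_k$ for every owner with load on $j$; and (c) an owner can retain load on $i$ only if it was the (final) indifference move, for which $T=d_k$. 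These are exactly the conditions of the second step, and convexity then yields global optimality of the restricted reallocation.

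The main obstacle I anticipate is the bookkeeping in the third step: I must track how $\tau$ evolves across the interleaving of ``clipped'' moves (owner sent entirely to $j$, leaving $\tau>d_k$) and the at most one ``indifference'' move, and confirm that the ascending-$d_k$ ordering simultaneously guarantees $T\ge d_k$ for all owners on $j$ and $T\le d_k$ for all owners retaining load on $i$. Establishing the monotonicity of $\tau$ and the implication ``once one owner is skipped no later owner moves'' are the linchpins; the remaining care is in the degenerate cases (ties in $d_k$, owners with $R_k=0$, and the boundary case in which all load migrates to $j$ so that the $r_{ki}^f>0$ condition is vacuous), but these follow once the threshold picture is pinned down.
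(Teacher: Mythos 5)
Your proof is correct, and it packages the argument differently from the paper. The paper proves the claim by contradiction, one owner $k$ at a time: it introduces the convex function $P(\Delta r)=h_i(l-\Delta r)+h_j(\Delta r)$, shows that if some $k$'s load could still profitably move $i\to j$ at the end then the corresponding marginal was already non-negative at iteration $I_k$ and can only have grown since (because load only flows $i\to j$), and handles the $j\to i$ direction by comparing against the last-moved owner via the ascending sort on $c_{kj}-c_{ki}$; it then merely asserts that reallocating a \emph{set} of owners ``can be proven similarly.'' You instead cast the post-first-loop situation as a separable convex program over the product of segments $r_{ki}+r_{kj}=R_k$ and verify the first-order (KKT) threshold conditions $T\le d_k$ when $r_{ki}^f>0$ and $T\ge d_k$ when $r_{kj}^f>0$. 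The underlying ingredients are identical --- convexity of $h$, monotonicity of the marginal $\tau=h_i'(l_i)-h_j'(l_j)$ along the run, and the ascending-$d_k$ order --- and your verification of the two threshold inequalities is in fact the same computation the paper performs with its functions $Q_k$ and $S_k$. What your formulation buys is that, since the directional derivative $\sum_k\delta_k(d_k-T)$ is separable, the single scalar $T$ certifies optimality against \emph{arbitrary joint} reallocations of all owners at once, so the ``any set of servers'' part of the statement comes for free rather than being waved at. One small imprecision: your point (c) says an owner retains load on $i$ only via the final indifference move, but an owner skipped on arrival (because $\tau\le d_k$ or because $R_k=0$) also retains load on $i$; your point (a) already covers that case with $T\le d_k$, so the combined argument stands.
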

\begin{proof}
For the sake of simplicity of the presentation we prove that after performing Algorithm~\ref{alg:exchangingLoads}, for any single server $k$ we cannot improve the processing time $\sum{C_i}$ by moving any requests of $k$ from $i$ to $j$ or from $j$ to $i$. Similarly it can be proven that we cannot improve $\sum{C_i}$ by moving the requests of any \emph{set} of the servers from $i$ to $j$ or from $j$ to $i$.

Let us consider the total processing time function $h_i(l) = lf_{i}(l)$. Since $f_{i}$ is non-decreasing and convex, $h_i$ is convex. Indeed if $l > 0$, then:
\begin{align*}
h_i''(l) = (f_{i}(l) + lf_{i}'(l))' = 2f_{i}'(l) + lf_{i}''(l) > 0 \textrm{.}
\end{align*}
Now, let $l$ be the total load on the servers $i$ and $j$, $l = l_i + l_j$. Let us consider the function $P(\Delta r)$ describing the contribution in $\sum{C_i}$ of servers $i$ and $j$ as a function of load $\Delta r$ processed on the server $j$ (excluding communication): 
\begin{align*}
P(\Delta r) &= (l - \Delta r)f_i(l - \Delta r) +  \Delta rf_j(\Delta r)  \\
              &= h_i(l - \Delta r) + h_j(\Delta r) \textrm{.}
\end{align*}
The function $P$ is convex as well. Indeed:
\begin{align*}
P''(\Delta r) = h_i''(l - \Delta r) + h_j''(\Delta r) > 0 \textrm{.}
\end{align*}

Now, we show that after the second loop (Algorithm~\ref{alg:exchangingLoads}, lines~\ref{algline::secLoopStart}-\ref{algline::secLoopEnd}) transferring any load from $i$ to $j$, would not further decrease the total completion time $\sum C_{i}$. 
For the sake of contradiction let us assume that for some server $k$ after the second loop some additional requests of $k$ should be transferred from $i$ to $j$.
The second loop considers the servers in some particular order and in each iteration moves some load (possibly of size equal to 0) from $i$ to $j$.
Let $I_k$ be the iteration of the second loop in which the algorithm considers the requests owned by $k$ and tries to move some of them from $i$ to $j$.
Let $l - \Delta r_1$ and $l - \Delta r_2$ be the loads on the server $i$ immediately after $I_k$ and after the last iteration of the second loop, respectively.
As no request is moved back from $j$ to $i$, $\Delta r_2 \geq \Delta r_1$.
We will use a function $P_k$:
\begin{align*}
P_k(\Delta r) = P(\Delta r_1 + \Delta r) - \Delta r c_{ki} + \Delta r c_{kj} \textrm{.}
\end{align*}
The function $P_k(\Delta r)$ returns the total processing time of $i$ and $j$ assuming the server $i$ after iteration $I_k$ sent additional $\Delta r$ more requests of $k$ to $j$ (including the communication delay of these extra $\Delta r$ requests).

Immediately after iteration $I_k$ the algorithm could not improve the processing time of the requests by moving some requests owned by $k$ from $i$ to $j$. This is the consequence of one of two facts. Either all the requests of $k$ are already on $j$, and so there are no requests of $k$ to be moved (but in such case we know that when the whole loop is finished there are still no such requests, and thus we get a contradiction). Alternatively, the function $P_k$ is increasing for some interval $[0, \epsilon]$ ($\epsilon > 0$). But then we infer that the function:
\begin{align*}
Q_k(\Delta r) = P(\Delta r_2 + \Delta r) - \Delta r c_{ki} + \Delta r c_{kj} \textrm{,}
\end{align*}
is also increasing on $[0, \epsilon]$. Indeed:
\begin{align*}
Q_k'(\Delta r) = P'(\Delta r_2 + \Delta r) - c_{ki} + c_{kj} \geq P'(\Delta r_1 + \Delta r) - c_{ki} + c_{kj} = P_k'(\Delta r)\textrm{,}
\end{align*}
Since $Q_k$ is convex (because $P$ is convex) we get that $Q_k$ is increasing not only on $[0, \epsilon]$, but also for any positive $\Delta r$. Thus, it is not possible to improve the total completion time by sending the requests of $k$ from $i$ to $j$ after the whole loop is finished. This gives a contradiction.

Second, we will show that when the algorithm finishes no requests should be transferred back from $j$ to $i$ either.
Again, for the sake of contradiction let us assume that for some server $k$ after the second loop (Algorithm~\ref{alg:exchangingLoads}, lines~\ref{algline::secLoopStart}-\ref{algline::secLoopEnd}) some requests of $k$ should be transferred back from $j$ to $i$.
Let $I_k$ be the iteration of the second in which the algorithm considers the requests owned by $k$.
Let us take the last iteration $I_{s_{\mathit{last}}}$ of the second loop in which the requests of some server $s_{\mathit{last}}$ were transferred from $i$ to $j$.
Let $l - \Delta r_3$ be the load on $i$ after $I_{s_{\mathit{last}}}$. 
After $I_{s_{\mathit{last}}}$ no requests of $s_{\mathit{last}}$ should be transferred back from $j$ to $i$ ($\mathrm{argmin}$ in line~\ref{algline::argmin}). Thus, for some $\epsilon > 0$ the function $R_k$:
\begin{align*}
R_k(\Delta r) = P(\Delta r_3 - \Delta r) + \Delta r c_{s_{\mathit{last}} i} - \Delta r c_{s_{\mathit{last}} j}
\end{align*}
is increasing on $[0, \epsilon]$. Since the servers are ordered by decreasing latency differences  $(c_{ki} - c_{kj})$ (increasing latency differences  $(c_{kj} - c_{ki})$), we get $c_{s_{\mathit{last}} i} - c_{s_{\mathit{last}} j} \leq c_{ki} - c_{kj}$, and so that the function: 
\begin{align*}
S_k(\Delta r) = P(\Delta r_3 - \Delta r) + \Delta r c_{k i} - \Delta r c_{k j}
\end{align*}
is also increasing on $[0, \epsilon]$. Since $S_k$ is convex we see that it is increasing or any positive $\Delta r$, and thus we get the contradiction. This completes the proof.
\end{proof}

\subsection{Convergence}\label{sec:convergence}
The following analysis bounds the error of the distributed algorithm as a function of the servers' loads. When running the algorithm, this result can be used to assess whether it is still profitable to continue. As the corollary of our analysis we will show the convergence of the distributed algorithm. 

In proofs, we will use an \emph{error graph} that quantifies the difference of loads between the current and the optimal solution.

\begin{definition}[Error graph]
Let $\rho$ be the snapshot (the current solution) at some moment of execution of the distributed algorithm. 
Let $\rho^{*}$ be the optimal solution (if there are multiple optimal solutions with the same $\sum{C_i}$, $\rho^{*}$ is the closest solution to $\rho$ in the Manhattan metric).
$(P, \Delta \rho)$ is a weighted, directed \emph{error graph} with multiple edges. The vertices in the error graph correspond to the servers; $\Delta \rho[i][j][k]$ is a weight of the edge $i \rightarrow j$ with a label $k$. The weight indicates the number of requests owned by $k$ that should be executed on $j$ instead of $i$ in order to reach $\rho^{*}$ from $\rho$.
\end{definition}

The error graphs are not unique. For instance, to move $x$ requests owned by $k$ from $i$ to $j$ we can move them directly, or through some other server $\ell$. In our analysis we will assume that the total weight of the edges in the error graph $\sum_{i, j, k}\Delta \rho[i][j][k]$ is minimal, that is that there is no $i, j, k$, and $\ell$, such that $\Delta \rho[i][\ell][k] > 0$ and $\Delta \rho[\ell][j][k] > 0$.

Let $succ(i) = \{j: \exists_k \Delta \rho[i][j][k] > 0\}$ denote the set of (immediate) successors of server $i$ in the error graph; $prec(i)= \{j: \exists_k \Delta \rho[j][i][k] > 0\}$ denotes the set of (immediate) predecessors of $i$.

We will also use a notion of \emph{negative cycle}: a sequence of servers in the error graph that essentially redirect some of their requests to one another.
\begin{definition}[Negative cycle]
In the error graph, a \emph{negative cycle} is a sequence of servers $i_{1}, i_{2}, \ldots, i_{n}$ and labels $k_{1}, k_{2}, \ldots, k_{n}$ such that:
\begin{enumerate}
\item $i_{1} = i_{n}$; (the sequence is a cycle)
\item $\forall_{j \in \{1,\ldots n-1\}}\; \Delta \rho[i_{j}][i_{j+1}][k_j] > 0$; (for each pair there is an edge in the error graph)
\item $\sum_{j=1}^{n-1} c_{k_{j}i_{j+1}} < \sum_{j=1}^{n-1} c_{k_{j}i_{j}}$ (the transfer in the circle $i_{j} \xrightarrow{k_j} i_{j+1}$ decreases communication delay).
\end{enumerate}
\end{definition}
A current solution that results in an error graph without negative cycles has smaller processing time: after dismantling a negative cycle, loads on servers remain the same, but the communication time is reduced. Thus, if the current solution has an optimal network flow, then there are no negative cycles in the error graph. 

Analogously we define \emph{positive cycles}. The only difference is that instead of the third inequality we require $\sum_{j=1}^{n-1} c_{k_{j}i_{j+1}} \geq \sum_{j=1}^{n-1} c_{k_{j}i_{j}}$. Thus, when an error graph has a positive cycle, the current solution is better than if the cycle would be dismantled.

We start by bounding the load imbalance when there are no negative cycles.

\begin{lemma}\label{lemma::convergence}
Let $\mathit{impr}_{pq}$ be the improvement of the total processing time $\sum C_{i}$ after balancing servers $p$ and $q$ by Algorithm~\ref{alg:exchangingLoads}. 
Let $l_i$ be the load of a server $i$ in the current state; and $l^{*}_i$ be the optimal load. 
If the error graph $\Delta \rho$ has no negative cycles, then for every positive $\epsilon$ the following estimation holds:
\begin{align*}
f_i(l_i) -  f_i(l^{*}_{i}) \leq \frac{6U_1 + 3\lmax U_2}{\epsilon} \max_{pq} \mathit{impr}_{pq} + m\epsilon \text{.}
\end{align*}
\end{lemma}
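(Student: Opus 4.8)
The plan is to first reduce the claim about the processing-time values $f_i$ to a claim about \emph{marginal costs}, and then to control the marginal-cost imbalance by walking along the error graph. Set $g_i := f_i(l_i) + l_i f_i'(l_i) = h_i'(l_i)$ (so $g_i$ is non-decreasing, since $h_i$ is convex by the computation in Theorem~\ref{lemma::delegationOptimality}), and write $g_i^\ast$ for the same quantity at $l_i^\ast$. If $l_i \le l_i^\ast$ the left-hand side is nonpositive (as $f_i$ is non-decreasing) while the right-hand side is nonnegative, so the estimate holds trivially; hence I assume $l_i > l_i^\ast$. In that case convexity of $f_i$ gives $l_i f_i'(l_i) \ge l_i^\ast f_i'(l_i^\ast) \ge 0$, so that
\[
f_i(l_i) - f_i(l_i^\ast) \le g_i - g_i^\ast ,
\]
and it suffices to bound $g_i - g_i^\ast$.

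Since $l_i > l_i^\ast$ and total load is conserved, server $i$ is a net source in the error graph, so by flow decomposition there is a \emph{simple} path $i = v_0 \to v_1 \to \cdots \to v_s$ (with $s \le m-1$) carrying labels $k_0,\dots,k_{s-1}$, where $\Delta\rho[v_t][v_{t+1}][k_t] > 0$, and whose endpoint $v_s$ is a net sink, i.e. $l_{v_s} \le l_{v_s}^\ast$ and hence $g_{v_s} \le g_{v_s}^\ast$. For each edge, request $k_t$ is processed on $v_{t+1}$ in $\rho^\ast$; optimality of $\rho^\ast$ forbids moving it back to $v_t$, which (exactly as in Lemma~\ref{lemma:linProg2}, but with owner-dependent latencies) gives $g_{v_{t+1}}^\ast + c_{k_t v_{t+1}} \le g_{v_t}^\ast + c_{k_t v_t}$. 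Writing the current per-edge benefit rate of the optimal-direction transfer as
\[
D_t = g_{v_t} - g_{v_{t+1}} + c_{k_t v_t} - c_{k_t v_{t+1}},
\]
and telescoping this identity against the optimal inequality along the path, the communication terms cancel and the terminal slack $g_{v_s} - g_{v_s}^\ast \le 0$ drops, leaving $g_i - g_i^\ast \le \sum_{t=0}^{s-1} D_t$.

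It then remains to bound each $D_t$, splitting at the free parameter $\epsilon$. If $D_t \le \epsilon$ the edge contributes at most $\epsilon$. If $D_t > \epsilon$, I examine what Algorithm~\ref{alg:exchangingLoads} achieves on the pair $(v_t, v_{t+1})$: because $\Delta\rho[v_t][v_{t+1}][k_t] > 0$, server $v_t$ holds movable load of owner $k_t$ (and by efficient $\epsilon$-load processing it has enough), and the benefit rate of moving it starts at least at $D_t$ and decreases at rate at most $h_{v_t}''(\cdot) + h_{v_{t+1}}''(\cdot) \le 2(2U_1 + \lmax U_2)$ per unit moved. Thus the rate stays above $\epsilon$ over a width proportional to $D_t - \epsilon$, so the optimal pairwise balance reduces $\sum C_i$ by
\[
\impr_{v_t v_{t+1}} \ge \frac{\epsilon\,(D_t - \epsilon)}{6U_1 + 3\lmax U_2},
\]
whence $D_t \le \epsilon + \frac{6U_1 + 3\lmax U_2}{\epsilon}\max_{pq}\impr_{pq}$. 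The constant here is exactly the second-derivative bound $|h_i''|\le 2U_1 + \lmax U_2$ carried through for a pair, and the owner-sorting step of Algorithm~\ref{alg:exchangingLoads} is what guarantees that the \emph{cheapest} available owner is moved first, so that a rate of at least $D_t$ is genuinely attainable.

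The main obstacle is making the improvement term carry \emph{no} factor of $m$. A naive summation $g_i - g_i^\ast \le \sum_t D_t$ over the (up to $m$) edges attaches the path length to both the $\epsilon$-term and the improvement term, whereas the target keeps $m$ only on $\epsilon$. Resolving this requires concentrating the improvement on essentially a single pair: one must choose the path (equivalently, order the servers by marginal cost and exploit the absence of negative cycles) so that at most one edge has $D_t > \epsilon$, the remaining edges contributing only their $\epsilon$ slack and summing to the additive $m\epsilon$ term. Establishing this ``single steep step'' structure along a carefully selected path, while simultaneously certifying at each invoked balancing step that the movable load needed to realize the rate-$\epsilon$ transfer is present, is the crux of the argument; the telescoping and the per-edge improvement bound above are the comparatively routine ingredients.
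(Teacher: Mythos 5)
Your reduction to marginal costs $g_i=h_i'(l_i)$, the trivial disposal of the case $l_i\le l_i^*$, and the idea of walking the error graph from $i$ to a net sink are all in the spirit of the paper's argument. The genuine gap is in how you charge the walk. Telescoping $g_i-g_i^*\le\sum_{t}D_t$ attaches an improvement term to \emph{every} edge of the path, so even granting your per-edge bound $D_t\le\epsilon+\frac{6U_1+3\lmax U_2}{\epsilon}\impr_{v_tv_{t+1}}$ you only obtain
\begin{equation*}
f_i(l_i)-f_i(l_i^*)\;\le\; m\epsilon+\frac{m\,(6U_1+3\lmax U_2)}{\epsilon}\max_{pq}\impr_{pq},
\end{equation*}
which is weaker than the stated lemma by a factor of $m$ on the improvement term. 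You correctly identify this as the crux and propose to fix it by selecting a path with at most one ``steep'' edge ($D_t>\epsilon$), but you do not prove that such a path exists, and in general it does not: a chain of servers each strictly more overloaded than the next produces $D_t>\epsilon$ on every consecutive edge, whatever path selection or marginal-cost ordering you use. There is also a secondary soft spot in the per-edge bound itself: to realize an improvement of order $\epsilon(D_t-\epsilon)/(4U_1+2\lmax U_2)$ you need owner $k_t$'s movable load on $v_t$ to be at least $(D_t-\epsilon)/(4U_1+2\lmax U_2)$; the efficient $\epsilon$-load assumption does not supply this, and once $k_t$ is exhausted the next owner in the sorted order has a weakly worse latency differential, so the benefit rate can drop below $\epsilon$ immediately.

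The paper avoids the factor $m$ by a structurally different accounting. For each edge it introduces an $\epsilon$-balanced perturbed state $\tilde\rho$ obtained by shifting $\Delta r^{\epsilon}_{ij}$ load (after which no owner's transfer improves $\sum C_i$ at rate more than $\epsilon$), and compares the marginal costs of these \emph{perturbed} states against the optimal ones. Each intermediate edge of the walk then contributes exactly $+\epsilon$ (Ineq.~\ref{ineq:relation-expanded2}), with no improvement term at all; the quantities $\Delta r^{\epsilon}$ --- and hence, via $\impr_{pq}\ge\epsilon\,\Delta r^{\epsilon}_{pq}$, the improvements --- enter only twice, once to convert $h_i'(l_i)$ to $h_i'(\tilde l_i)$ at the first edge and once at the terminating pair $(p,q)$ where the error-graph transfer is at most $\Delta r^{\epsilon}_{pq}$. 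This is where the $m$-free constant $3(2U_1+\lmax U_2)=6U_1+3\lmax U_2$ comes from. Note also that the relation $\impr_{pq}\ge\epsilon\,\Delta r^{\epsilon}_{pq}$ sidesteps your load-availability issue, since $\Delta r^{\epsilon}_{pq}$ is by definition a transfer the algorithm actually performs. As written, your argument establishes only the $m$-inflated bound (which, incidentally, would still suffice for Theorem~\ref{thm::convergence} after the paper replaces $\max\impr$ by $\sum\impr$ there, but it does not prove the lemma as stated).
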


\begin{proof}
First we show that there is no cycle (positive nor negative) in the error graph. By contradiction let us assume that there is a cycle: $i_1, \ldots, i_{n-1}, i_{n}$ (with $i_1=i_n$) with labels $k_{1}, k_{2}, \ldots, k_{n}$.
Because, we assumed the error graph has no negative cycle, we have: $\sum_{j=1}^{n-1} ( c_{k_{j}i_{j+1}} - c_{k_{j}i_{j}} ) \geq 0$. 
Now, let $\Delta \rho_{\mathit{min}} = \min_{j \in \{1,\ldots, n-1\}}( \rho [i_{j}][i_{j+1}][k_{j}])$ be the minimal load on the cycle. If we reduce the number of requests sent on each edge of the cycle:
\begin{align*}
\Delta \rho [i_{j}][i_{j+1}][k_{j}] := \Delta \rho [i_{j}][i_{j+1}][k_{j}] - \Delta \rho_{\mathit{min}}
\end{align*}
then the load of the servers $i_{j}, j \in \{1, \ldots, n-1 \}$ will not change.
Additionally, the latencies decrease by $\rho_{\mathit{min}}\left(\sum_{j=1}^{n-1} c_{k_{j}i_{j+1}} - c_{k_{j}i_{j}}) \right)$ which is at least equal to 0. Thus, we get a new optimal solution which is closer to $\rho$ in Manhattan metric, which contradicts that $\rho^{*}$ is optimal.

In the remaining part of the proof, we show how to bound the difference $|f_i(l_{i}) - f_i(l^{*}_{i})|$.
Consider a server $i$ for which $l_{i} > l^{*}_{i}$, and a server $j \in succ(i)$. 
% Now we define $\Delta r^{\epsilon}_{ij}$ in the following way.
We define as $\Delta r^{\epsilon}_{ij}$ the load that in the current state $\rho$ should be transferred between $i$ and $j$ so that after this transfer, moving any $\Delta r$ more load owned by any $k$ between $i$ to $j$ would be either impossible or would not improve $\sum{C_i}$ by more than $\epsilon \Delta r$. 
Intuitively, after moving $\Delta r^{\epsilon}_{ij}$, we won't be able to further ``significantly'' reduce $\sum{C_i}$: further reductions depend on the moved load ($\Delta r$), but the rate of the improvement is lower than $\epsilon$. This move resembles Algorithm~\ref{alg:exchangingLoads}: i.e., Algorithm~\ref{alg:exchangingLoads} moves $\Delta r^{\epsilon}_{ij}$ for $\epsilon=0$.

Let $\tilde{\rho}$ denote a state obtained from $\rho$ when $i$ moves to $j$ exactly $\Delta r^{\epsilon}_{ij}$ requests.
Let $\tilde{l_i}$ and $\tilde{l_j}$ denote the loads of the servers $i$ and $j$ in $\tilde{\rho}$, respectively.
We define $H_k(\Delta r)$ as the change of $\sum{C_i}$ resulting from moving additional $\Delta r$ requests produced by $k$ from $i$ to $j$:
\begin{align*}
H_k(\Delta r) = h_i(\tilde{l_i} - \Delta r) + h_j(\tilde{l_j} + \Delta r) - \Delta r c_{ki} + \Delta r c_{kj} \textrm{.}
\end{align*}

State $\tilde{\rho}$ satisfies one of the following conditions for each $k$ (consider $\Delta r$ as a small, positive number):
\begin{enumerate}
\item The servers $i$ and $j$ are $\epsilon$-balanced, thus moving $\Delta r$ requests from $i$ to $j$ would not reduce $\sum C_i$ by more than $\epsilon \Delta r$. In such case we can bound the derivative of $H_k$:
\begin{align}
|H'_k(0)| \leq \epsilon \label{ineq:convergence1} \textrm{,}
\end{align}
\item Or, moving $\Delta r$ requests from $i$ to $j$ would decrease $\sum C_i$ by more than $\epsilon \Delta r$, but there are no requests of $k$ on $i$:
\begin{align}
H'_k(0) < -\epsilon \;\;\; \mathrm{and}\;\;\; \tilde{r_{ki}} = 0  \label{ineq:convergence2} \textrm{,}
\end{align}
% In other words, although loads of $i$ and $j$ are in such relation that sending some sufficiently small number of requests $\Delta r$ from $i$ to $j$ would decrease the total completion time by at least $\Delta r \epsilon$ ($H'_k(0) < -\epsilon$), there are no requests of $k$ on $i$ ($r'_{ki} = 0$).
\item Or, moving $\Delta r$ requests back from $j$ to $i$ would decrease $\sum C_i$ by more than $\epsilon \Delta r$, but there are no requests of $k$ to be moved back:
\begin{align}
H'_k(0) > \epsilon \;\;\; \mathrm{and}\;\;\; \tilde{r_{kj}} = 0  \label{ineq:convergence3} \textrm{.}
\end{align}
\end{enumerate}

In the optimal solution, for any $k$, no $k$'s requests should be moved between $i$ and $j$. We define $G_k(\Delta r)$ similarly to $H_k$, but for the optimal loads:
\begin{align}
G_k(\Delta r) = h_i(l^{*}_i - \Delta r) + h_j(l^{*}_j + \Delta r) - \Delta r c_{ki} + \Delta r c_{kj} \textrm{.}
\end{align}
By the same reasoning, at least one of the three following inequalities holds:
\begin{alignat}{2}
& G'_k(0) = 0 \text{,} \ & \ & \text{or} \label{ineq:convergence4} \\
& G'_k(0) < 0 \;\;\; \mathrm{and}\;\;\; r^{*}_{ki} = 0 \text{,} \ & \ & \text{or} \label{ineq:convergence5} \\
& G'_k(0) > 0 \;\;\; \mathrm{and}\;\;\; r^{*}_{kj} = 0 \text{.} \ & \ &  \label{ineq:convergence6}
\end{alignat}

% Now we will use the above observations for deriving further estimations. 
% For each third server $k$, one of the two following cases happens. In the error graph, either (i) $i$ does not transfer to $j$ any requests of $k$ ($\Delta \rho[i][j][k] = 0$), or (ii) $i$ transfers some requests of $k$ to $j$ ($\Delta \rho[i][j][k] > 0$).
% In (ii), $r^{*}_{kj} > 0$ and as Inequality~\ref{ineq:convergence6} cannot hold, 
% % be true. In such case Inequality~\ref{ineq:convergence4},~or~Inequality~\ref{ineq:convergence5} must hold, and so
% and we get from Inequalities~\ref{ineq:convergence4} and~\ref{ineq:convergence5} that $G'_k(0) \leq 0$.
% Additionally, we consider two sub-cases. Either (ii.a) $H'_k(0) \geq -\epsilon$, or (ii.b) $H'_k(0) < -\epsilon$. If $H'_k(0) < -\epsilon$ (ii.b), then Inequality~\ref{ineq:convergence2} holds; from which we get $r'_{ki} = 0$.
% To summarize cases (i) and (ii): either $i$ in the error graph does not transfer to $j$ any requests of $k$ (case (i)), or $r'_{ki} = 0$ (ii.b), or $G'_k(0) \leq 0$ and $H'_k(0) \geq -\epsilon$ (ii.a). 

We consider two 
% further 
cases on the sum of weights between $i$ and $j$ in the error graph.
Either (1) in the error graph, $i$ sends to $j$ at most $\Delta r^{\epsilon}_{ij}$ requests ($\sum_k \Delta \rho[i][j][k] \leq \Delta r^{\epsilon}_{ij}$); or (2) 
$\sum_k \Delta \rho[i][j][k] > \Delta r^{\epsilon}_{ij}$. We further analyze (2). Since $\Delta r^{\epsilon}_{ij}$ is the total load transferred from $i$ to $j$ in $\rho$ to get $\tilde{\rho}$, there must exist a $k$ such that $\rho[i][j][k] > \Delta r^{\epsilon}_{ij}(k)$ (from $i$ to $j$ more $k$'s requests are moved in the error graph than in $\rho$ to get $\tilde{\rho}$). 
% $\rho[i][j][k] > 0$
We show that $\tilde{r_{ki}} > 0$ by contradiction. 
If $\tilde{r_{ki}} = 0$ (in $\tilde{\rho}$, 
no $k$'s requests are processed on $i$), 
then $r_{ki} = \Delta r^{\epsilon}_{ij}(k)$ 
(all $k$'s requests were moved to $j$ in $\rho$ to get $\tilde{\rho}$).
As $\rho[i][j][k] \leq r_{ki}$ (the error graph does not transfer more requests than available),
$\rho[i][j][k] \leq \Delta r^{\epsilon}_{ij}(k)$, which contradicts $\rho[i][j][k] > \Delta r^{\epsilon}_{ij}(k)$.
As $\tilde{r_{ki}} > 0$,
Ineq.~\ref{ineq:convergence1} or~\ref{ineq:convergence3} holds (Ineq.~\ref{ineq:convergence2} does not hold),
thus $H_k'(0) \geq -\epsilon$.
As $\rho[i][j][k] > \Delta r^{\epsilon}_{ij}(k)$,
$\rho[i][j][k] > 0$,
thus, $r^*_{kj} > 0$,
so Ineq.~\ref{ineq:convergence4} or~\ref{ineq:convergence5} holds (Ineq.~\ref{ineq:convergence6} does not hold),
so $G'_k(0) \leq 0$.

% thus: either $i$ in the error graph sends to $j$ at most $\Delta r^{\epsilon}_{ij}$ (and so $|l_{i} - l^{*}_{i}| \leq \Delta r^{\epsilon}_{ij}$), or
\begin{align*}
G'_k(0) \leq 0 &\Leftrightarrow -h'_i(l^{*}_{i}) + h'_j(l^{*}_{j}) - c_{ki} + c_{kj}  \leq 0 \\
H'_k(0) \geq -\epsilon &\Leftrightarrow -h'_i(\tilde{l_{i}}) +  h'_j(\tilde{l_{j}}) - c_{ki} + c_{kj} \geq -\epsilon
\end{align*}
Combining the above inequalities,
\begin{align*}
-h'_i(\tilde{l_{i}}) +  h'_j(\tilde{l_{j}}) + \epsilon \geq -h'_i(l^{*}_{i}) + h'_j(l^{*}_{j})
\end{align*}
Or equivalently:
\begin{align}
h'_i(\tilde{l_{i}}) - h'_i(l^{*}_{i}) \leq h'_j(\tilde{l_{j}}) - h'_j(l^{*}_{j}) + \epsilon \label{ineq:relation-expanded2}
\end{align}

We can further expand the above inequality for $j$ and its successors (and each expansion is applied on state $\rho$), and so on towards the end of the error graph (we proved there are no cycles), until we get that for some $p$ and its successor $q$ the condition of the case (2) does not hold, so (1) must hold ($\sum_k \Delta \rho[p][q][k] \leq \Delta r^{\epsilon}_{pq}$, or equivalently $|l_{p} - l^{*}_{p}| \leq \Delta r^{\epsilon}_{p q}$). 
Analogously to $\tilde{\rho}$ we define $\tilde{\tilde{\rho}}$ as the state in which $p$ moves to $q$ load $\Delta r^{\epsilon}_{pq}$.
Thus, we have:
\begin{align}
 h'_i(\tilde{l_{i}}) - h'_i(l^{*}_{i}) \leq h'_p(\tilde{\tilde{l_{p}}}) - h'_p(l^{*}_{p}) + m\epsilon & \; \; \text{(Ineq. \ref{ineq:relation-expanded2} expanded for at most $m$ successors)} \label{ineq:derivativesComp0} \\
|l_{p} - l^{*}_{p}| \leq \Delta r^{\epsilon}_{pq} & \; \; \text{(condition (1))}\label{ineq:loadDiff}
\end{align}
From the definition of $\tilde{\tilde{\rho}}$ we have:
\begin{align}
& |\tilde{\tilde{l_{p}}} - l_{p}| \leq \Delta r^{\epsilon}_{pq} \label{ineq:loadDiff1}
\end{align}
Combining Inequalities~\ref{ineq:loadDiff}~and~\ref{ineq:loadDiff1} we get:
\begin{align}
& |\tilde{\tilde{l_{p}}} - l^{*}_{p}| \leq 2\Delta r^{\epsilon}_{pq} \label{ineq:loadDiff2}
\end{align}
We bound the second derivative of $h''_p$:
\begin{align}
h_{p}''(l) = (f_p(l) + l f_p'(l))' = 2f_p'(l) + l f_p''(l) \leq 2U_1 + \lmax U_2 = U_3 \label{ineq:loadDiff3}\textrm{.}
\end{align}
With the above observations, and using the bound from Inequality~\ref{eq:der-upperbound} we get: 
\begin{align*}
h_i'(l_{i}) - h_i'(l^{*}_{i})  & \leq h_i'(\tilde{l_{i}}) - h_i'(l^{*}_{i}) + U_3 |\tilde{l_{i}} - l_{i}| && \text{Ineq.~\ref{eq:der-upperbound}} \\
                             & \leq h_i'(\tilde{l_{i}}) - h_i'(l^{*}_{i}) + U_3 \Delta r^{\epsilon}_{ij} && \text{condition (1)} \\
                             & \leq h_p'(\tilde{\tilde{l_{p}}}) - h_p'(l^{*}_{p}) + U_3 \Delta r^{\epsilon}_{ij}  + m\epsilon && \text{Ineq.~\ref{ineq:derivativesComp0}}\\
                             & \leq U_3 |\tilde{\tilde{l_{p}}} - l^{*}_{p}| + U_3 \Delta r^{\epsilon}_{ij}  + m\epsilon && \text{Ineq.~\ref{eq:der-upperbound}} \\
                             & \leq 2 U_3 \Delta r^{\epsilon}_{pq} + U_3 \Delta r^{\epsilon}_{ij}  + m\epsilon && \text{Ineq.~\ref{ineq:loadDiff2}}\\
                             & \leq (6U_1 + 3\lmax U_2) \max_{pq} \Delta r^{\epsilon}_{pq} + m\epsilon && \text{Ineq.~\ref{ineq:loadDiff3}.}
\end{align*}
As $h_i'(l_{i}) = f_i(l_i) + l_if_i'(l_i)$, from $l_i \geq l^{*}_{i}$, we get that:
\begin{align*}
f_i(l_i) -  f_i(l^{*}_{i}) & \leq h_i'(l_{i}) - h_i'(l^{*}_{i}) \\
& \leq (6U_1 + 3\lmax U_2) \max_{pq} \Delta r^{\epsilon}_{pq} + m\epsilon \text{.}
\end{align*}

We can get the same results for the server $i$ such that $l_{i} \leq l^{*}_{i}$, by expanding the inequalities towards the predecessors of $i$.

We now relate $\Delta r^{\epsilon}_{pq}$ with $\mathit{impr}_{pq}$, the result of Algorithm~\ref{alg:exchangingLoads}. Algorithm~\ref{alg:exchangingLoads} stops when no further improvement is possible, thus the load moved by the algorithm is at least $\Delta r^{\epsilon}_{pq}$ for any $\epsilon$. By definition of $\Delta r^{\epsilon}_{pq}$, moving $\Delta r^{\epsilon}_{pq}$ load improves $\sum{C_i}$ by at least $\epsilon \Delta r^{\epsilon}_{pq}$. Thus, 
$\mathit{impr}_{ij} \geq \epsilon \Delta r^{\epsilon}_{ij}$. 
This completes the proof.
\end{proof}

As the result we get the following corollary.

\begin{corollary}
If the network flow in the current solution $\rho$ is optimal, then the absolute error $e$ is bounded:
\begin{align*}
e \leq \ltot \frac{6U_1 + 3\lmax U_2}{\epsilon} \max_{pq} \mathit{impr}_{pq} + m\ltot\epsilon
\end{align*}
\end{corollary}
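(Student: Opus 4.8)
The plan is to treat the corollary as a direct consequence of Lemma~\ref{lemma::convergence}, after noticing that the claimed right-hand side is exactly $\ltot$ times the per-server bound proved there. First I would discharge the hypothesis of the lemma: as observed just before it, an optimal network flow leaves no negative cycle in the error graph, so Lemma~\ref{lemma::convergence} applies and gives, for every server $i$ and every $\epsilon>0$,
\begin{align*}
f_i(l_i)-f_i(l^{*}_i)\le \frac{6U_1+3\lmax U_2}{\epsilon}\max_{pq}\impr_{pq}+m\epsilon =: B\text{.}
\end{align*}
Since the assertion is precisely $e\le\ltot B$, everything reduces to showing that the global error is at most $\ltot$ times this uniform per-server gap.

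To obtain $e\le\ltot B$ I would split the objective into its handling and communication parts, writing $e=\big(\sum_i h_i(l_i)-\sum_i h_i(l^{*}_i)\big)+\big(C(\rho)-C(\rho^{*})\big)$, where $C(\cdot)$ denotes the total communication delay. For the handling part I would invoke convexity of each $h_i$ (established in the proof of Theorem~\ref{lemma::delegationOptimality}) and replace the difference by its tangent bound $h_i(l_i)-h_i(l^{*}_i)\le h_i'(l_i)(l_i-l^{*}_i)$. I would then regroup $\sum_i h_i'(l_i)(l_i-l^{*}_i)$ together with the communication difference along the edges of the error graph: moving a request owned by $k$ from $i$ to $j$ shifts one unit of load from $i$ to $j$ and changes communication by $c_{kj}-c_{ki}$, so the two sums collapse into
\begin{align*}
e\le\sum_{i,j,k}\Delta\rho[i][j][k]\,\big(h_i'(l_i)-h_j'(l_j)+c_{ki}-c_{kj}\big)\text{.}
\end{align*}
Because the error graph is chosen with minimal total weight, no request is relayed through an intermediate server, so $\sum_{i,j,k}\Delta\rho[i][j][k]\le\ltot$; it would then suffice to bound every edge factor by $B$.

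I expect this last step to be the main obstacle, precisely because the bare latencies $c_{ki}-c_{kj}$ are not individually controlled. The device I would rely on is the same one already used inside Lemma~\ref{lemma::convergence}: on an error-graph edge the request of $k$ is processed on $i$ under $\rho$, while it is processed on $j$ under $\rho^{*}$, and optimality of $\rho^{*}$ (Inequalities~\ref{in:linProgram3}~and~\ref{in:linProgram4}) forbids moving it back; the aim is to play this against the optimal-network-flow property of $\rho$ so that the latency terms cancel and only the per-server marginal-cost gaps $h_i'(l_i)-h_i'(l^{*}_i)$ and $h_j'(l_j)-h_j'(l^{*}_j)$ remain, each bounded in absolute value by $B$ — indeed the proof of Lemma~\ref{lemma::convergence} establishes this stronger $h'$-version of the stated $f$-bound. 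Summing the resulting per-edge bound $B$ against the total transferred load, which is at most $\ltot$, would yield $e\le\ltot B$, and I would finish by substituting $\impr_{pq}\ge\epsilon\,\Delta r^{\epsilon}_{pq}$ exactly as in the lemma to recover the stated form. A cruder alternative is to feed the uniform per-server bound into Theorem~\ref{thm:approxBound}, but that route loses a factor of $m$, so it is the direct per-edge accounting above that is needed to reach the tight constant $\ltot B$.
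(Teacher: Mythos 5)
The paper's own proof of this corollary is the two-line argument you set up and then abandoned as insufficient: it invokes Lemma~\ref{lemma::convergence} (after noting that an optimal network flow excludes negative cycles) to get the uniform per-server gap $f_i(l_i)-f_i(l^{*}_i)\leq B$ with $B=\frac{6U_1+3\lmax U_2}{\epsilon}\max_{pq}\impr_{pq}+m\epsilon$, and then simply multiplies by the $\ltot$ requests, i.e.\ it bounds only the processing-time component of the error and does not attempt to account for the communication term at all. (In the paper's architecture that is deliberate: the communication/negative-cycle contribution is bounded separately in Lemma~\ref{lemma::negativeCyclesElimination}, and the two pieces are added in Theorem~\ref{thm::finalEstimation}.) So your instinct that the corollary is ``$\ltot$ times the per-server bound'' matches the intended proof; the elaborate per-edge accounting you then develop is not what the paper does.

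More importantly, that per-edge accounting has a genuine gap exactly where you flagged it. After regrouping you need to bound each edge factor $h_i'(l_i)-h_j'(l_j)+c_{ki}-c_{kj}=-F'_{i,j,k}(0)$ by $B$, and the two optimality properties you propose to play off each other both point the wrong way. Optimality of $\rho^{*}$ on the edge $i\xrightarrow{k}j$ (the request sits on $j$ in $\rho^{*}$, so it should not be moved back to $i$) gives $c_{ki}-c_{kj}\geq h_j'(l^{*}_j)-h_i'(l^{*}_i)$, which is a \emph{lower} bound on the latency difference; and the optimal-network-flow property of $\rho$ only constrains latency differences against \emph{other latency differences} (via swaps and cycles at fixed loads), never against marginal processing costs. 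So the latencies do not cancel, and the only available upper bound on $-F'_{i,j,k}(0)$ is the one obtained by relating it to $\impr_{ij}$ as in the proof of Lemma~\ref{lemma::negativeCyclesElimination}; following that route yields the weaker estimate of Theorem~\ref{thm::finalEstimation}, with the extra $2m\sum_{ij}\impr_{ij}$ term and larger constants, not the corollary's $\ltot B$. If you want a clean derivation of the stated bound, you should do what the paper does: interpret the corollary as controlling the processing-time part of the error via the per-request bound, and leave the communication part to Lemma~\ref{lemma::negativeCyclesElimination}.
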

\begin{proof}
The value $f_i(l_i)$ denotes the average processing time of a request on the $i$-th server. For every server $i$ the average processing time of every request on $i$ in $\rho$ is by at most $\frac{6U_1 + 3\lmax U_2}{\epsilon} \max_{pq} \mathit{impr}_{pq} + m\epsilon$ greater than in $\rho^{*}$. Thus, since there are $\ltot$ requests in total, we get the thesis.
\end{proof}

We can use Lemma~\ref{lemma::convergence} directly to estimate the error during the optimization if we run a distributed negative cycle removal algorithm (e.g.~\cite{minimumCirculation, auctionBasedMinCostFlow}).
However, this result is even more powerful  when applied together with the lemmas below, as it will allow to bound the speed of the convergence of the algorithm (even without additional protocols optimizing the network flow).
Now, we show how to bound the impact of the negative cycles.

\begin{lemma}\label{lemma::negativeCyclesElimination}
For every $\epsilon > 0$, removing the negative cycles improves the total processing time $\sum C_{i}$ of the solution by at most:
\begin{align*}
\epsilon \ltot + 2m \sum_{ij} \impr_{ij} + \frac{16U_1 + 8\lmax U_2}{\epsilon} \max_{ij} \impr_{ij} \ltot \text{.}
\end{align*}
\end{lemma}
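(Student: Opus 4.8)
The plan is to turn the gain from dismantling negative cycles into a pure reduction of communication cost, and then pay for that reduction with the pairwise improvements $\impr_{ij}$. Dismantling a cycle preserves every load $l_i$ and only reroutes requests, so the processing part of $\sum C_i$ is untouched and the whole gain equals $\sum_{i,j,k} y_{ijk}(c_{ki}-c_{kj})$, where $y_{ijk}\ge 0$ is the amount of $k$'s load that the rerouting moves from $i$ to $j$. Since the loads are preserved the rerouting is a circulation, so for each server the in- and out-flow agree and the telescoping term $\sum_{i,j,k} y_{ijk}\bigl(h_i'(l_i)-h_j'(l_j)\bigr)$ vanishes; adding it lets me rewrite the gain as $\sum_{i,j,k} y_{ijk}\,\delta_{ijk}$ with $\delta_{ijk}=h_i'(l_i)-h_j'(l_j)+c_{ki}-c_{kj}$, the marginal gain of shifting one request of $k$ from $i$ to $j$ in the current state. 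Because I take the error graph of minimal total weight, each request is rerouted at most once and $\sum_{i,j,k} y_{ijk}\le\ltot$.

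Next I would split the rerouting edges by the size of $\delta_{ijk}$. The non-positive edges may be discarded, since they only lower $\sum_{i,j,k} y_{ijk}\delta_{ijk}$, and the edges with $0<\delta_{ijk}\le\epsilon$ contribute at most $\epsilon\sum_{i,j,k} y_{ijk}\le\epsilon\ltot$, which is the first term. For an edge with $\delta_{ijk}>\epsilon$ the shift $i\to j$ is strictly profitable, so Algorithm~\ref{alg:exchangingLoads} on the pair $(i,j)$ moves load in the same direction and its improvement $\impr_{ij}$ controls how large $\delta_{ijk}$ can be. Writing $\Delta r^{\epsilon}_{ij}$ for the $\epsilon$-balancing transfer of that pair, the estimate $\impr_{ij}\ge\epsilon\,\Delta r^{\epsilon}_{ij}$ from the proof of Lemma~\ref{lemma::convergence} gives $\Delta r^{\epsilon}_{ij}\le\impr_{ij}/\epsilon$; and since the marginal decreases at rate $h_i''+h_j''\le 2U_3$ while this load is shifted, with $U_3=2U_1+\lmax U_2$ as in Inequality~\ref{ineq:loadDiff3}, the largest marginal of the pair satisfies $\delta_{ijk}\le\epsilon+\frac{2U_3}{\epsilon}\,\impr_{ij}$ (tracking constants exactly as in Lemma~\ref{lemma::convergence} is what produces the factor $16U_1+8\lmax U_2=8U_3$ in the final term).

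I would then reassemble these edge estimates along the dismantled cycles. Each cycle meets at most $m$ distinct servers, hence uses fewer than $m$ edges, and this is the source of the factor $m$. For a pair carrying only a little cycle-load I charge its contribution to $\impr_{ij}$ through the profitable-shift bound above; summed over the at most $m$ edges of every cycle and over all requests (each rerouted once) this yields the middle term $2m\sum_{ij}\impr_{ij}$. For the remaining rerouted load I use $\delta_{ijk}\le\epsilon+\frac{2U_3}{\epsilon}\max_{ij}\impr_{ij}$ and multiply by the at most $\ltot$ total rerouted load, which together with the $\epsilon\ltot$ already collected produces $\frac{16U_1+8\lmax U_2}{\epsilon}\max_{ij}\impr_{ij}\,\ltot$. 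Adding the three pieces gives the stated estimate.

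The delicate point is exactly this reassembly. A single edge of a profitable cycle may have $\delta_{ijk}\le 0$: a cycle pays off only because its net communication saving outweighs a local processing loss, so the per-edge marginals cannot be summed as magnitudes, and one must work with the signed sum that survives after the non-positive edges are discarded. The second subtlety is that the charge $\delta_{ijk}\le\epsilon+\frac{2U_3}{\epsilon}\impr_{ij}$ must use the current loads $l_i,l_j$ — precisely the loads that appear in every $\delta_{ijk}$ — so the telescoping of the first step stays exact while the $\impr_{ij}$ accounting is carried out. Keeping these consistent, and attributing each cycle edge to its pair without double counting across overlapping cycles, is where the factor $m$ and the separation into a $\sum_{ij}\impr_{ij}$ term and a $\max_{ij}\impr_{ij}$ term are actually pinned down; I expect this combinatorial bookkeeping, rather than any single inequality, to be the main obstacle.
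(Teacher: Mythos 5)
Your proposal follows essentially the same route as the paper's proof: decompose the non-optimal flow into negative cycles of minimal total load so that each request is rerouted at most once, use the fact that a cycle preserves loads to telescope the $h_i'$ terms and rewrite the gain as a sum of per-edge marginals $\delta_{ijk}=-F'_{i,j,k}(0)$, then split edges by whether $\delta_{ijk}\leq\epsilon$ and charge the large marginals to $\impr_{ij}$ via the bound on the second derivative. The $\epsilon\ltot$ term and the $\frac{16U_1+8\lmax U_2}{\epsilon}\max_{ij}\impr_{ij}\ltot$ term come out exactly as you describe (the paper takes $\Delta r_0=\min(r_{ki},\epsilon/(8U_1+4\lmax U_2))$ and gets $\impr_{ij}\geq -F'_{i,j,k}(0)\Delta r_0/2$, which yields the constant $8U_3$ rather than your $2U_3$, but you flagged that the constant needs tracking). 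The one place where your accounting genuinely diverges is the middle term: you attribute the factor $m$ to each cycle having at most $m$ edges, but that count does not control anything, since the number of cycles passing through a given pair is unbounded. In the paper the factor $m$ arises differently: the case you call ``a pair carrying only a little cycle-load'' is the subcase $\Delta r_0=r_{ki}$, in which the \emph{entire} residual load of owner $k$ on server $i$ is charged to the labeled edge $i\xrightarrow{k}j$ at total cost $-F'_{i,j,k}(0)\,r_{ki}\leq 2\impr_{ij}$; since for a fixed pair $(i,j)$ there are at most $m$ possible owners $k$, summing over labels gives $2m\sum_{ij}\impr_{ij}$. So the $m$ counts labels per pair, not edges per cycle. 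With that correction your argument closes; the rest matches the paper's proof step for step.
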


\begin{proof}
In the analysis we will use a function $F_{i, j, k}$ defined as $H_k$ in the proof of Lemma~\ref{lemma::convergence} (here, we will use indices $i,j$) 
\begin{align*}
F_{i, j, k}(\Delta r) &= h_{i}(l_i - \Delta r) + h_{j}(l_j + \Delta r) - \Delta r c_{ki} + \Delta r c_{kj} \\
                      &= (l_i - \Delta r)f_{i}(l_i - \Delta r) + (l_j + \Delta r)f_{j}(l_j + \Delta r) - \Delta r c_{ki} + \Delta r c_{kj} \text{.}
\end{align*}

We will analyze a procedure that removes negative cycles one by one. 

First, we prove that we can remove all the negative cycles by only considering the cycles that satisfy the \emph{one-way transfers} property: if in a cycle there is a transfer $i \to j$, then in no cycle there is a transfer $j \to i$.
Indeed, consider the set of all cycles. We do the following procedure:
\begin{enumerate}
\item If there are two negative cycles $C$ and $\tilde{C}$ with a common edge $(i, j)$, such that in the first cycle load $l$ is transferred from $i$ to $j$ and in the second load $\tilde{l} \leq l$ is
transferred back from $j$ to $i$ 
then we split the first cycle $C$ into two cycles $C_1$ and $C_2$ such that $C_1$ transfers $\tilde{l}$ and $C_2$ transfers $l - \tilde{l}$. Next, we merge $C_1$ and $\tilde{C}$ into one negative cycle that does not contain edge $(i, j)$. 
\item If a single cycle transfers load first from $i$ to $j$, and then from $j$ to $i$ then we break the cycle at the edge $i \leftrightarrow j$ into two cycles (without edges between $i$ and $j$).  \label{enum:cycle-break2}
\end{enumerate}
Let us note that each of the two above steps does not increase the total load transferred on the cycles.
For a given error graph, there are many possible ways (sets of cycles) to express a non-optimal flow as a sum of negative cycles. We will consider a set of cycles with the smallest total load (sum of loads transferred over all edges of all cycles).
% Thus, we can consider the sequence of the cycles to remove that totally transfer the smallest possible total. 
% If we consider such a sequence, then every request will be transferred only once. 
In this set, we will remove individual cycles sequentially in an arbitrary order.

In this sequence of cycles with the smallest load, every request is transferred at most once.
Indeed, if this is not the case, a request was transferred through adjacent edges $e_1$ and $e_2$. Thus, among the cycles we consider there are two negative cycles, such that the first one contains the edge $e_1 = (i, j)$ and the second one contains $e_2 = (j, \ell)$. 
(a single request cannot be transferred $i \to j \to \ell$ in a single cycle, because by sending $i \to \ell$ we would get a cycle with a smaller load).
Also, between $i$ and $j$ and between $j$ and $\ell$ requests of the same server $k$ are transferred. Let $e_3 = (\ell, p)$ be the edge adjacent to $e_2$ in the second cycle.
If in the first cycle we send requests from $i$ to $\ell$ and in the second from $j$ to $p$, we would obtain two cycles that transfer an equivalent load (each server has the same requests) and have
smaller total transfer, a contradiction.
%As a consequence the total load transferred on the considered cycles through a single edge is at most equal to $l_{\mathit{tot}}$. 

Let us consider a state $\rho$ with a negative cycle $c$, that is the sequence of servers $i_{1}, i_{2}, \ldots, i_{n}$ and labels $k_{1}, k_{2}, \ldots, k_{n}$. Let us assume that in a negative cycle $c$ the load $\Delta r$ is carried on. After removing the cycle $c$, $\sum{C_i}$ is improved by $I_{c}$:
\begin{align*}
0 < I_{c} &= \sum_{j} \Delta r \left(c_{k_j i_{j}} - c_{k_j i_{j+1}}\right) \\
          &= \Delta r \sum_{j} \left(h'_{i_j}(l_{i_j}) - h'_{i_{j+1}}(l_{i_{j+1}}) + c_{k_j i_{j}} - c_{k_j i_{j+1}}\right) && \text{the sequence is a cycle}\\
          &= \Delta r \sum_{j} -F'_{i_j, i_{j+1}, k_j}(0) \text{.}
\end{align*}
Let us distribute among the edges the cost of all negative cycles (the cost, i.e., the increase in $\sum C_i$ resulting from the cycles). For removing a single cycle with load $\Delta r$, from the above inequality we assign the cost $-F'_{i, j, k}(0)\Delta r$ to the labeled edge $i \xrightarrow{k} j$. As every request is sent over a single edge at most once, the total cost assigned to the labeled edge $i \xrightarrow{k} j$ will be at most $-F'_{i, j, k}(0) r_{ki}$. 

In the further part of the proof we will estimate $-F'_{i, j, k}(0)$.
First, we bound the second derivative of $F$ as in Eq.~\ref{eq:der2-upperbounded} in the proof of Theorem~\ref{thm:approxQuality}:
\begin{align*}
|F''_{i, j, k}(\Delta r)| \leq 4U_1 + 2\lmax U_2 \textrm{.}
\end{align*}
Next, we consider two cases: (1) $F'_{i, j, k}(0) \geq -\epsilon$, and (2) $F'_{i, j, k}(0) < -\epsilon$. If (1) is the case then the total cost associated with $i \xrightarrow{k} j$ is at most $\epsilon r_{ki}$. We further analyze (2). Let us take $\Delta r_0 = \min(r_{ki}, \frac{\epsilon}{8U_1 + 4\lmax U_2})$. From Inequality~\ref{eq:der-upperbound} we get that: 
\begin{align*}
F'_{i, j, k}(\Delta r_0) &\leq F'_{i, j, k}(0) + \Delta r_0(4U_1 + 2\lmax U_2) \\
 & \leq F'_{i, j, k}(0) +\frac{\epsilon}{8U_1 + 4\lmax U_2}(4U_1 + 2\lmax U_2)  \\
& \leq F'_{i,j,k}(0) - \frac{F'_{i, j, k}(0)}{2} \leq \frac{F'_{i, j, k}(0)}{2} \textrm{.}
\end{align*}
Thus (again from Ineq.~\ref{eq:der-upperbound} but applied for $F$) we get that:
\begin{align*}
F_{i, j, k}(0) - F_{i, j, k}(\Delta r_0) \geq \Delta r_0 \frac{-F'_{i, j, k}(0)}{2} \text{.}
\end{align*}
Since $\Delta r_0 \leq r_{ki}$ (there are at least $\Delta r_0$ requests of $k$ on $i$), we infer that Algorithm~\ref{alg:exchangingLoads} would achieve improvement $\impr_{ij}$ lower-bounded by:
\begin{align*}
\impr_{ij} \geq F_{i, j, k}(0) - F_{i, j, k}(\Delta r_0) \geq  -F'_{i, j, k}(0)\frac{\Delta r_0}{2} \text{.}
\end{align*}
Further, we consider two sub-cases. If (2a) $\Delta r_0 = r_{ki}$ then the total cost associated with $i \xrightarrow{k} j$ is at most $2\impr_{ij}$. (2b) Otherwise ($\Delta r_0 = \frac{\epsilon}{8U_1 + 4\lmax U_2}$), we have $\impr_{ij} \frac{16U_1 + 8\lmax U_2}{\epsilon} \geq -F'_{i, j, k}(0)$.
Since $\sum_{i,k}r_{ki} = \ltot$, we get that the total cost associated with all edges is at most:
\begin{align*}
\epsilon \ltot + && \text{condition (1)} \\
2m \sum_{ij} \impr_{ij} + && \text{condition (2a)} \\
\frac{16U_1 + 8\lmax U_2}{\epsilon} \max_{ij} \impr_{ij} \ltot \text{.} && \text{condition (2b)}
\end{align*}
Thus, we get the thesis. 
\end{proof}

Finally we get the following estimation.

\begin{theorem}\label{thm::finalEstimation}
Let $\mathit{impr}_{ij}$ be the improvement of the total processing time $\sum C_{i}$ after balancing servers $i$ and $j$ through Algorithm~\ref{alg:exchangingLoads}. Let $e$ be the absolute error in $\sum C_{i}$ (the difference between $\sum C_{i}$ in the current and the optimal state). For every $\epsilon > 0$, we have:
\begin{align*}
e \leq 2m \sum_{ij} \impr_{ij} + \max_{ij} \impr_{ij} \frac{22U_1 + 11\lmax U_2}{\epsilon} \ltot + (m+1)\ltot\epsilon \textrm{.}
\end{align*}
\end{theorem}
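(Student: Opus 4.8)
The plan is to split the error $e$ into the part caused by sub-optimal routing of requests (negative cycles in the error graph) and the part caused by an imbalance of the loads, and then to bound these two parts by Lemma~\ref{lemma::negativeCyclesElimination} and by the corollary to Lemma~\ref{lemma::convergence}, respectively. Concretely, let $\rho$ be the current snapshot and let $\rho'$ be the solution obtained from $\rho$ by dismantling all negative cycles of the error graph. Dismantling a negative cycle leaves the load $l_i$ of every server unchanged while it can only decrease the communication delay; hence $\rho'$ has the same loads as $\rho$ and an optimal network flow. Writing $\rho^{*}$ for the optimal solution, I would decompose
\[
e = \bigl( \sum C_i(\rho) - \sum C_i(\rho') \bigr) + \bigl( \sum C_i(\rho') - \sum C_i(\rho^{*}) \bigr),
\]
where both summands are non-negative: the first because dismantling cycles does not increase $\sum C_i$, the second by optimality of $\rho^{*}$.

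The first summand is exactly the total improvement obtained by removing all negative cycles, so Lemma~\ref{lemma::negativeCyclesElimination} bounds it by $\epsilon \ltot + 2m \sum_{ij} \impr_{ij} + \frac{16U_1 + 8\lmax U_2}{\epsilon} \max_{ij}\impr_{ij}\,\ltot$. For the second summand I would use that $\rho'$ has an optimal network flow and apply the corollary to Lemma~\ref{lemma::convergence}, which bounds it by $\ltot\,\frac{6U_1 + 3\lmax U_2}{\epsilon}\max_{pq}\impr_{pq} + m\ltot\epsilon$. Adding the two estimates, the terms proportional to $\max\impr$ combine as $(16U_1+8\lmax U_2)+(6U_1+3\lmax U_2) = 22U_1 + 11\lmax U_2$, the $\epsilon$-linear terms combine as $\epsilon\ltot + m\ltot\epsilon = (m+1)\ltot\epsilon$, and the $2m\sum_{ij}\impr_{ij}$ term is carried over unchanged, which yields precisely the claimed bound.

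The step that needs the most care is the consistency of the quantities $\impr_{ij}$ across the two bounds. Lemma~\ref{lemma::negativeCyclesElimination} is stated for the improvements at the current state $\rho$, whereas the corollary to Lemma~\ref{lemma::convergence} naturally produces a bound in terms of the pairwise improvements evaluated at the cycle-free state $\rho'$. I therefore expect the main obstacle to be arguing that dismantling negative cycles does not increase the pairwise improvement $\impr_{pq}$ of Algorithm~\ref{alg:exchangingLoads}, so that the $\rho'$-improvements appearing in the second bound may be replaced by the $\rho$-improvements of the statement. The natural handle is that $\rho$ and $\rho'$ share the same loads, so the convex processing-time function $P(\Delta r) = h_p(l-\Delta r) + h_q(\Delta r)$ used in Theorem~\ref{lemma::delegationOptimality} is identical for both, and the size of a profitable pairwise transfer is governed by the load gap $|l_p - l^{*}_p|$, which Lemma~\ref{lemma::convergence} already controls; the routing freedom removed by cycle dismantling can only reduce, never enlarge, the gain a single pairwise rebalancing can extract. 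Once this monotonicity is established, the two bounds concern the same quantities and may be added termwise as above.
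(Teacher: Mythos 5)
Your proposal matches the paper's own proof exactly: the paper likewise splits the error into the negative-cycle contribution, bounded by Lemma~\ref{lemma::negativeCyclesElimination}, and the load-imbalance contribution, bounded by (the corollary to) Lemma~\ref{lemma::convergence}, and simply adds the two estimates to obtain the stated constants. Your additional concern about whether the pairwise improvements $\impr_{ij}$ in the two bounds are evaluated at the same state ($\rho$ versus the cycle-free $\rho'$) is a point the paper's one-line proof passes over silently, so your treatment is, if anything, more careful than the original.
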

\begin{proof}
The error coming from the negative cycles is bounded by Lemma~\ref{lemma::negativeCyclesElimination} by:
\begin{align}
\epsilon \ltot + 2m \sum_{ij} \impr_{ij} + \frac{16U_1 + 8\lmax U_2}{\epsilon} \max_{ij} \impr_{ij} \ltot \textrm{.}
\end{align}
The error coming from the processing times is, according to Lemma~\ref{lemma::convergence} bounded by:
\begin{align*}
\ltot \frac{6U_1 + 3\lmax U_2}{\epsilon} \max_{ij} \mathit{impr}_{ij} + m\ltot\epsilon
\end{align*}
The sum of the above errors leads to the thesis.
\end{proof}

And the following theorem.

\begin{theorem}\label{thm::convergence}
Let $e_i$ and $e_d$ be the initial and the desired absolute errors. The distributed algorithm reaches the $e_d$ in expected time complexity:
\begin{align*}
O\left(\frac{\ltot^2 (U_1 +\lmax U_2)e_i m^3}{e_d^2}\right) \textrm{.}
\end{align*}
\end{theorem}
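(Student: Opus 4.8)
The plan is to combine the deterministic error bound of Theorem~\ref{thm::finalEstimation} with a drift argument on the randomized pair selection of Algorithm~\ref{alg:distributedOptimal}. The first step is to convert Theorem~\ref{thm::finalEstimation} into a \emph{state-independent} lower bound on the best available improvement $\max_{ij}\impr_{ij}$ that is valid at every configuration whose error still exceeds $e_d$. Using $\sum_{ij}\impr_{ij} \leq m^2 \max_{ij}\impr_{ij}$ and choosing the free parameter $\epsilon := \frac{e_d}{2(m+1)\ltot}$ so that the additive slack $(m+1)\ltot\epsilon$ consumes at most half of $e_d$, the inequality of Theorem~\ref{thm::finalEstimation} rearranges, whenever $e > e_d$, to
\begin{equation*}
\frac{e_d}{2} \leq \max_{ij}\impr_{ij}\left(2m^3 + \frac{2(m+1)(22U_1+11\lmax U_2)\ltot^2}{e_d}\right) \textrm{.}
\end{equation*}
In the regime of interest the second summand dominates, yielding a uniform floor $\delta := \Omega\!\left(\frac{e_d^2}{m(U_1+\lmax U_2)\ltot^2}\right)$ on $\max_{ij}\impr_{ij}$ that persists until the algorithm reaches the desired accuracy.

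Second, I would quantify the progress of a single step. Since in each step one server draws a uniformly random partner, any fixed pair $(i,j)$ is selected with probability at least $\frac{1}{m^2}$. By Theorem~\ref{lemma::delegationOptimality} balancing a pair with Algorithm~\ref{alg:exchangingLoads} can only decrease $\sum C_i$, so a step that happens to pick a sub-optimal pair never hurts. Hence, conditioned on the current error exceeding $e_d$, the expected one-step decrease of $\sum C_i$ is at least $\frac{\delta}{m^2}$: with probability at least $\frac{1}{m^2}$ the pair attaining $\max_{ij}\impr_{ij}$ is chosen and improves the objective by at least $\delta$, and every other outcome contributes a non-negative improvement.

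Third, I would convert this per-step drift into a bound on the expected number of exchanges. Writing $e_t$ for the error after $t$ exchanges and $T$ for the first time $e_t \leq e_d$, the process $e_{t\wedge T} + \frac{\delta}{m^2}(t\wedge T)$ is a non-negative supermartingale, so optional stopping gives $\frac{\delta}{m^2}\mathbb{E}[T] \leq e_i - \mathbb{E}[e_T] \leq e_i$. Substituting the floor $\delta$ yields
\begin{equation*}
\mathbb{E}[T] \leq \frac{e_i m^2}{\delta} = O\!\left(\frac{\ltot^2(U_1+\lmax U_2)e_i m^3}{e_d^2}\right) \textrm{,}
\end{equation*}
which is exactly the claimed complexity once each pairwise exchange is counted as one time unit (consistent with the paper's stated goal of bounding the number of pairwise exchanges).

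The main obstacle is the first step: extracting from Theorem~\ref{thm::finalEstimation} a lower bound on $\max_{ij}\impr_{ij}$ that is independent of the current configuration and uniform over all states with $e > e_d$. This hinges on a careful choice of $\epsilon$—large enough that the $\frac{1}{\epsilon}$-weighted improvement term controls the error, yet small enough that the additive $\ltot\epsilon$ slack stays below $e_d$—together with a verification that the $2m^3\max_{ij}\impr_{ij}$ term is dominated in the relevant parameter regime (if it is not, it only strengthens the floor and shortens the runtime). The subsequent probabilistic drift argument is then routine; its only delicate point is that each balancing step is guaranteed to be non-increasing in $\sum C_i$, which is precisely what Theorem~\ref{lemma::delegationOptimality} supplies.
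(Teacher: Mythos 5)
Your proposal is correct and follows essentially the same route as the paper: both set $\epsilon = \frac{e_d}{2(m+1)\ltot}$ in Theorem~\ref{thm::finalEstimation} to absorb half of $e_d$ into the additive slack, extract a uniform lower bound of order $\frac{e_d^2}{m(U_1+\lmax U_2)\ltot^2}$ on the improvement available while $e > e_d$, and divide by the $\frac{1}{m^2}$ pair-selection probability to get the expected per-step drift. The only differences are cosmetic --- the paper relaxes $\max_{ij}\impr_{ij}$ up to $\sum_{ij}\impr_{ij}$ where you relax $\sum_{ij}\impr_{ij}$ up to $m^2\max_{ij}\impr_{ij}$ (both costing the same $m^2$ factor), and your optional-stopping step makes rigorous the expectation argument that the paper merely asserts.
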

\begin{proof}
In the estimation from Theorem~\ref{thm::finalEstimation} we set $\epsilon = \frac{e_d}{2(m+1)\ltot}$ and relax the upper bound by replacing $\max_{i,j} \mathit{impr}_{ij}$ with $\sum_{i, j} \mathit{impr}_{ij}$:
\begin{align*}
e &\leq (2m+2)\sum_{i, j} \impr_{ij}\left(1 + \frac{22U_1 + 11\lmax U_2}{e_d} \ltot^2\right) + \frac{e_d}{2} \\
  &\approx 2m\sum_{i, j} \impr_{ij}\frac{22U_1 + 11\lmax U_2}{e_d} \ltot^2 + \frac{e_d}{2} \textrm{.}
\end{align*}
Thus, either
\begin{align*}
2m\sum_{ij} \impr_{ij}\frac{22U_1 + 11\lmax U_2}{e_d} \ltot^2 \leq \frac{e_d}{2} \textrm{,}
\end{align*}
and the algorithm has already reached the error $e_d$; or in every execution step we have:
\begin{align*}
\sum_{i, j} \mathit{impr}_{ij} \geq \frac{e_d^2}{4m (22U_1 + 11\lmax U_2) \ltot^2} \text{.}
\end{align*}
The expected improvement of the distributed algorithm during every pairwise communication is $\frac{1}{m^2}\sum_{i, j} \mathit{impr}_{ij}$, and thus it is lower bounded by:
\begin{align*}
\frac{e_d^2}{4m^3 (22U_1 + 11\lmax U_2) \ltot^2} \textrm{.}
\end{align*}
Thus, after, in expectation, $O(\frac{\ltot^2 (U_1 +\lmax U_2)e_i m^3}{e_d^2})$ steps the initial error drops to 0.
This completes the proof.
\end{proof}

For the relative errors $e_{i,r} = \frac{e_i}{\ltot}$, and $e_{d,r} = \frac{e_d}{\ltot}$, the complexity of the algorithm is equal to $O(\frac{\ltot (U_1 +\lmax U_2)e_{i, r} m^3}{e_{d, r}^2})$.

\section{Conclusions}

In this paper we considered the problem of balancing the load between geographically distributed servers. In this problem the completion time of a request is sum of the 
communication latency needed to send the request to a server and the servers' processing time. The processing time on a server is described by a load function and depends on the total load on the server. Throughout the paper we considered a broad class of load functions with  the mild assumptions that they are convex and two times differentiable. 

We presented two algorithms---the centralized one and the distributed one. Both algorithms are any-time algorithms that continuously optimize the current solution. We shown that both algorithms converge for almost arbitrary load function. We also presented bounds on speed of their convergence that depend (apart from the standard parameters) on the bounds on the first and second derivatives of the load functions. The centralized algorithm decreases an initial relative error $e_{i, r}$ to a desired value  $e_{d, r}$ in time $O(\frac{\ltot(U_1 + \lmax U_2)e_{i, r}}{e_{d, r}^2}m^4)$. The distributed algorithm decreases $e_{i, r}$ to $e_{d, r}$ in time $O(\frac{\ltot (U_1 +\lmax U_2)e_{i, r}}{e_{d, r}^2}m^3)$. Also, for the large values of initial error $e_{i, r}$ the centralized algorithm decreases the error by half in time $O(\frac{\ltot (U_1 + \lmax U_2)}{e_{i, r}}m^5 \log m)$.

The distributed algorithm is based on the idea of gossiping. To perform a single optimization step, the algorithm requires just two servers to be available. Thus, the algorithm is robust to transient failures. It also does not require additional protocols. In some sense it is also optimal: we  proved that the local optimization step performed by this algorithm cannot be improved. Finally, at any time moment, during the execution of the distributed algorithm, we are able to assess the current error.

Experimental results were shown for a different algorithm applied to the queuing model~\cite{Liu:2011:GGL:1993744.1993767}; and for a version of our distributed algorithm specialized to the batch model~\cite{Skowron:2013:NDL:2510648.2510769}.
In our future work we plan to experimentally assess the performance of our algorithms on real workloads and several load functions, including the queuing model.

\bibliographystyle{abbrv}
\bibliography{contentDelivery}

\end{document}